\newcommand{\blind}{1}
\newtheorem{theorem}{Theorem}
\newtheorem{lemma}{Lemma}
\newtheorem{proposition}{Proposition}
\newtheorem{definition}{Definition}
\newcommand\independent{\protect\mathpalette{\protect\independenT}{\perp}}
\newcommand\numberthis{\addtocounter{equation}{1}\tag{\theequation}}
\def\independenT#1#2{\mathrel{\rlap{$#1#2$}\mkern2mu{#1#2}}}
\DeclareMathOperator*{\argmaxA}{arg\,max}
\newcommand{\bA}{\textbf{A}}
\newcommand{\bD}{\textbf{D}}
\newcommand{\bs}{\textbf{s}}
\newcommand{\bS}{\textbf{S}}
\newcommand{\bH}{\textbf{H}}
\newcommand{\bM}{\textbf{M}}
\newcommand{\bQ}{\textbf{Q}}
\newcommand{\bI}{\textbf{I}}
\newcommand{\bT}{\textbf{T}}
\newcommand{\bW}{\textbf{W}}
\newcommand{\bx}{\textbf{x}}
\newcommand{\bX}{\textbf{X}}
\newcommand{\bR}{\textbf{R}}
\newcommand{\whitenS}{\mathbf S^{W}}
\newcommand{\whitens}{\mathbf s^{W}}
\newcommand{\bSigma}{ \mbox{\boldmath $\Sigma$} }
\begin{document}

\def\spacingset#1{\renewcommand{\baselinestretch}%
{#1}\small\normalsize} \spacingset{1}


\if1\blind
{
	\title{\bf Inferring independent sets of Gaussian variables after thresholding correlations}
 \author{Arkajyoti Saha$^{1}$\footnote{Corresponding email: asaha3@uw.edu, AS, DW, and JB were supported by NIH Grant R01GM123993}, Daniela Witten$^{1,2}$\footnote{DW was supported by Simons Investigator Award No. 560585, NIH Grant R01EB026908, and NIH Grant R01DA047869}, Jacob Bien$^{3}$\footnote{JB was supported by NSF CAREER Award DMS-1653017}}
 
 \date{
	$^1$ Department of Statistics, University of Washington\\
	$^2$ Department of Biostatistics, University of Washington\\
	$^3$ Department of Data Sciences and Operations, University of Southern California\\[2ex]
}
	\maketitle
} \fi

\if0\blind
{
	\bigskip
	\bigskip
	\bigskip
	\title{\bf Inferring independent sets of Gaussian variables after thresholding correlations}
	\date{}
		\maketitle
} \fi

\begin{abstract}
We consider testing whether a set of Gaussian variables, selected from the data, is independent of the remaining variables. We assume that this set is selected via a very simple approach that is commonly used across scientific disciplines: we select a set of variables for which the correlation with all variables outside the set falls below some threshold. Unlike other settings in selective inference, failure to account for the selection step leads, in this setting, to excessively conservative (as opposed to anti-conservative) results. Our proposed test properly accounts for the fact that the set of variables is selected from the data, and thus is not overly conservative. To develop our test, we condition on the event that the selection resulted in the set of variables in question. To achieve computational tractability, we develop a new characterization of the conditioning event in terms of the canonical correlation between the groups of random variables. In simulation studies and in the analysis of gene co-expression networks, we show that our approach has much higher power than a ``naive'' approach that ignores the effect of selection. 
\end{abstract}

\noindent%
{\it Keywords:}  Selective inference, gene co-expression network, canonical correlation analysis, correlation thresholding.  
\vfill

\newpage
\spacingset{1} 

\section{Introduction}

A vast statistical literature focuses on estimation of, and inference on, the sparsity structure of a covariance or correlation matrix (see, e.g.  \citet{bickel2008covariance,drton2008sinful,lam2009sparsistency,rothman2009generalized,bien2011sparse,cai2012adaptive,liu2012high,rothman2012positive,xue2012positive,zhao2012huge,liu2014sparse,serra2018robust} and references therein). In this article, we focus on a particular type of sparsity in which the matrix has a block diagonal structure (up to a permutation of the rows and columns). This type of sparsity arises in a number of scientific fields. For example, in the study of gene co-expression networks, discovery of independent ``gene sets'' plays a pivotal role (e.g. \citealt{freeman2007construction}). In genetics, discovery of haplotype blocks often reduces to partitioning the linkage disequilibrium matrix into approximately independent blocks \citep{patil2001blocks}. In neuroscience, it is of interest to discover correlated brain regions in brain functional connectivity networks \citep{bordier2017graph}.

 We consider a simple approach to estimating block diagonal structure in a correlation matrix. Our paper's primary contribution is a novel approach to testing whether two sets of variables that appear in different estimated blocks are, in fact, uncorrelated. For example, is an estimated gene set in a gene co-expression network really uncorrelated with the remaining genes? This is a challenging question because the same data is used both to estimate the gene set and to test a hypothesis based on this estimated gene set. The problem falls into the selective inference framework, which has been studied in various other contexts by \citet{fithian2014optimal,lee2014exact,loftus2015selective,lee2016exact,yang2016selective,tibshirani2016exact,suzumura2017selective,hyun2018exact,tian2018selective,hyun2018exact,jewell2019testing,chen2020valid,gao2020selective,neufeld2021tree,chen2022selective}, among others. However, unlike those other papers, in our setting, failure to account for double-use of data leads to a loss of power rather than loss of type I error control. 

The rest of this paper is organized as follows. In Section \ref{sec:prob_def}, we briefly review some results related to testing hypotheses associated with submatrices of a covariance matrix, define the notion of selective type I error in this context, and then provide a general expression for constructing p-values that are valid in this selective sense.  In Section \ref{section:characterization_of_conditioning_set}, we present a data-adaptive procedure for estimating diagonal blocks in a correlation matrix, and consider testing the null hypothesis of no correlation between an estimated block and the remaining variables. Furthermore, we show that the event corresponding to the selection of a particular hypothesis can be characterized analytically. In Section \ref{sec:summary}, we apply this characterization to develop a selective inference approach to test the null hypothesis of interest. Simulation results in Section \ref{sec:simulation} and an analysis of gene co-expression networks in Section \ref{sec:gene_coexpression} demonstrate the value of the proposed approach. We close with a discussion in Section \ref{sec:discussion}. 

\section{Problem definition}
\label{sec:prob_def}

\subsection{Hypothesis testing for pre-specified groups of variables}
\label{sec:prob_def_hyp}

For a positive definite $p \times p$ matrix $\bSigma$, consider a data matrix consisting of $n > p$ independent observations of a $p$-dimensional multivariate Gaussian,

\begin{equation}
\label{eqn:distribution_data}
    \bX \sim \mathcal{MN}_{n \times p}\left(\mathbf 0, \mathbf I_n, \mathbf \bSigma_{p \times p}\right).
\end{equation}
For a predetermined subset $\mathcal P \subset \left\{1,2, \ldots, p \right\}$ of variables, consider testing for dependence between the variables in $\mathcal{P}$ and the variables in its complement ${\mathcal{P}}^c := \left\{1,2, \ldots, p \right\} \setminus \mathcal P$, i.e.
\begin{equation}
\label{eqn:base_hyp}
    H_0^{\mathcal{P}} : \bSigma_{{\mathcal{P}}, {\mathcal{P}}^c} = \mathbf 0 \:\text{ versus }\: H_1^{\mathcal{P}} : \bSigma_{{\mathcal{P}}, {\mathcal{P}}^c} \neq \mathbf 0,
\end{equation}
where $\bSigma_{\mathcal{P}_1, \mathcal{P}_2}$ denotes the submatrix of $\bSigma$ with rows in $\mathcal P_1$ and columns in $\mathcal P_2$. Given a realization $\bx \in \mathbb R^{n \times p}$ of $\bX$, the classical likelihood ratio test (LRT) of \eqref{eqn:base_hyp} yields the p-value

\begin{equation}
\label{eqn:lrt}
    p_{\mathrm{LRT}}\left(\bx;  \mathcal P \right):= \mathbb{P}_{H_0^{\mathcal{P}}} \left(\frac{\left|\bS\left(\bX\right)\right|}{\left|\bS_{{\mathcal{P}}, {\mathcal{P}}}\left(\bX\right)\right|\left|\bS_{{\mathcal{P}}^c, {\mathcal{P}}^c}\left(\bX\right)\right|} \leqslant \frac{\left|\bS\left(\bx\right)\right|}{\left|\bS_{{\mathcal{P}}, {\mathcal{P}}}\left(\bx\right)\right|\left|\bS_{{\mathcal{P}}^c, {\mathcal{P}}^c}\left(\bx\right)\right|} \right),
\end{equation}
where $\bS\left(\bx\right) = \frac{1}{n}\left(\bx - \mathbf 1_n\bar{x}^\top\right)^\top\left(\bx - \mathbf 1_n\bar{x}^\top\right)$ is the sample covariance matrix of $\bx$, $\bar{x} \in \mathbb R^p$ denotes the sample mean of $\bx$, and $\left|\bM\right|$ denotes the determinant of the matrix $\bM$ (see (3) of Section 9.8 in \citealt{anderson1962introduction}). Here, $\bS_{\mathcal P_1, \mathcal P_2}$ represents the submatrix of $\bS$ with rows in $\mathcal P_1$ and columns in $\mathcal P_2$. The compact singular value decomposition (SVD) of the cross-covariance of decorrelated (``whitened'') versions of $\bx_\mathcal{P}$ and $\bx_{{\mathcal{P}}^c}$ (defined as the submatrices of $\bx$ with columns in $\mathcal P$ and ${\mathcal{P}}^c$, respectively) can be written as
\begin{equation}
    \label{eqn:CCA}
    \whitenS_{{\mathcal{P}}, {\mathcal{P}}^c} \left(\bx\right) := \left(\bS_{{\mathcal{P}}, {\mathcal{P}}}\left(\bx\right)\right)^{-\frac{1}{2}}\bS_{{\mathcal{P}}, {\mathcal{P}}^c}\left(\bx\right) \left(\bS_{{\mathcal{P}}^c, {\mathcal{P}}^c}\left(\bx\right)\right)^{-\frac{1}{2}}\overset{SVD}{=} \hat{\mathbf{A}}^{{\mathcal{P}}}\left(\bx\right) \: \hat{\bm\Lambda}^{{\mathcal{P}}}\left(\bx\right)\:  \left({\hat{\mathbf{\Gamma}}^{{{\mathcal{P}}}}}\left(\bx\right)\right)^\top,
\end{equation}
where $\hat{\bm\Lambda}^{{\mathcal{P}}}\left(\bx\right) := \mathrm{diag}\left( \hat\lambda_1\left(\bx\right), \hat\lambda_2\left(\bx\right), \ldots, \hat\lambda_{r\left( \mathcal P\right)}\left(\bx\right)\right)$ is a square diagonal matrix of\\ $r\left(\mathcal P\right) := \min\left\{\mathrm{card}\left({\mathcal{P}}\right), \mathrm{card}\left({\mathcal{P}}^c\right) \right\}$ non-negative singular values in  non-increasing order ($\hat\lambda_1\left(\bx\right) \geqslant \hat\lambda_2\left(\bx\right) \geqslant \ldots \geqslant \hat\lambda_{r\left( \mathcal P\right)}\left(\bx\right) \geqslant 0 $) and $\hat{\mathbf{A}}^{{\mathcal{P}}} \left(\bx\right)$ and $\hat{\bm{\Gamma}}^{{\mathcal{P}}}\left(\bx\right)$ are the $\mathrm{card}\left({\mathcal{P}}\right) \times r\left(\mathcal P\right)$ and $\mathrm{card}\left({\mathcal{P}}^c\right) \times r\left(\mathcal P\right) $ matrices of left and right singular vectors, respectively. 
The diagonal elements of $\hat{\bm{\Lambda}}^{{\mathcal{P}}}\left(\bx\right)$ can be interpreted as the canonical correlations between $\bx_{{\mathcal{P}}}$ and $\bx_{{\mathcal{P}}^c}$, and the columns of $\left(\bS_{{\mathcal{P}}, {\mathcal{P}}}\left(\bx\right)\right)^{-\frac{1}{2}}\hat{\mathbf{A}}^{{\mathcal{P}}}\left(\bx\right)$ and $\left(\bS_{{\mathcal{P}^c}, {\mathcal{P}^c}}\left(\bx\right)\right)^{-\frac{1}{2}}\hat{\mathbf{\Gamma}}^{{{\mathcal{P}}}}\left(\bx\right)$ are the left and right canonical vectors. 

Using the notation in \eqref{eqn:CCA}, the LRT statistic in \eqref{eqn:lrt} can be rewritten as

\begin{equation}
\label{eqn:test_stat_simplified}
    \frac{\left|\bS\left(\bx\right)\right|}{\left|\bS_{{\mathcal{P}}, {\mathcal{P}}}\left(\bx\right)\right|\left|\bS_{{\mathcal{P}}^c, {\mathcal{P}}^c}\left(\bx\right)\right|} = \left|\bI_{r\left(\mathcal P\right)} - \left(\hat{\bm{\Lambda}}^{{\mathcal{P}}}\left(\bx\right)\right)^2\right|,
\end{equation}
and the associated p-value can be rewritten as
\begin{equation}
\label{eqn:base_hyp_simplified}
    p_{\mathrm{LRT}}\left(\bx;  \mathcal P \right) = \mathbb{P}_{H_0^{\mathcal{P}}} \left(\left|\bI_{r\left(\mathcal P\right)} - \left(\hat{\bm{\Lambda}}^{{\mathcal{P}}}\left(\bX\right)\right)^2\right| \leqslant \left|\bI_{r\left(\mathcal P\right)} - \left(\hat{\bm{\Lambda}}^{{\mathcal{P}}}\left(\bx\right)\right)^2\right|\right).
\end{equation}
Details are in Supplementary Material Section \ref{Appendix:CCA}. Under $H_0^\mathcal{P}$ and the assumption that $n > p$, the test statistic follows a Wilks' lambda distribution (see page 288 of \citealt{kent1979multivariate}). We will later use Proposition \ref{prop:density_singular_values} to sample from the Wilks' lambda distribution to evaluate $p_{\mathrm{LRT}}\left(\bx;  \mathcal P \right)$ in \eqref{eqn:base_hyp_simplified}. 

\begin{proposition}
\label{prop:density_singular_values}
Under $H_0^\mathcal{P}$ and the assumption that $n > p$, the following statements hold regarding the sample canonical correlations $\left(\hat\lambda_1, \hat\lambda_2, \ldots, \hat\lambda_{r\left(\mathcal P\right)} \right)$ defined in \eqref{eqn:CCA}: 
\begin{enumerate}[(i)]
    \item The sample canonical correlations have joint density, 
\begin{equation}
\label{eqn:density_singular_values}
   \begin{aligned}
    f\left(\hat\lambda_1, \hat\lambda_2, \ldots, \hat\lambda_{r\left(\mathcal P\right)}\right) =& \pi^{\frac{1}{2}\left(r\left(\mathcal P\right)\right)^2}\frac{2^{r\left(\mathcal P\right)}\Gamma_{r\left(\mathcal P\right)}\left(\frac{n}{2}\right)}{\Gamma_{r\left(\mathcal P\right)}\left(\frac{n-p + r\left(\mathcal P\right)}{2}\right)\Gamma_{p}\left(\frac{1}{2}r\left(\mathcal P\right)\right)\Gamma _{p}\left(\frac{1}{2}\left(p- r\left(\mathcal P\right)\right)\right)} \times\\ 
    &\prod_{i = 1}^{r\left(\mathcal P\right)}\left\{\hat\lambda_i^{\left(p - 2r\left(\mathcal P\right)\right)}\left(1 - \hat\lambda_i^2\right)^{\frac{1}{2}\left(n - p - 2\right)} \right\}\prod_{i < j}\left(\hat\lambda_i^2 - \hat\lambda_j^2\right), 
   \end{aligned}
\end{equation}
supported on $\left\{1 \geqslant \hat\lambda_1 \geqslant \hat\lambda_2 \geqslant \ldots \geqslant \hat\lambda_{r\left(\mathcal P\right)} \geqslant 0 \right\}$.

\item The sample canonical correlations have the same distribution as \\ $\left( \sqrt\frac{\Psi_1}{1 + \Psi_1}, \sqrt\frac{\Psi_2}{1 + \Psi_2}, \ldots, \sqrt\frac{\Psi_{r\left(\mathcal P\right)}}{1 + \Psi_{r\left(\mathcal P\right)}}\right)$, where $\left(\Psi_1, \Psi_2, \ldots, \Psi_{r\left(\mathcal P\right)} \right)$ are the eigenvalues of $\mathbf W \mathbf T^{-1}$, for $\mathbf W \sim \mathrm{Wishart}_{r\left(\mathcal{{P}}\right)}\left(\mathbf I_{r\left(\mathcal{{P}}\right)}, p - r\left(\mathcal{{P}}\right)\right)$,  $\mathbf T \sim \mathrm{Wishart}_{r\left(\mathcal{{P}}\right)}\left(\mathbf I_{r\left(\mathcal{{P}}\right)}, n - 1 - p + r\left(\mathcal{{P}}\right)\right)$, and $\mathbf W  \independent \mathbf T$. 

\item If $r\left({\mathcal{P}}\right) = 1$, then the sample canonical correlation $\hat\lambda_1$ is the sample multiple correlation coefficient for the regression of the single variable onto the remaining $p - 1$ variables, and

\begin{equation}
    \hat\lambda_1 ^2 \sim \mathrm{Beta}\left( \frac{p - 1}{2}, \frac{n - p}{2}\right).
\end{equation}

\end{enumerate}

\end{proposition}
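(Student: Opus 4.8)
The plan is to establish part~(ii) first --- a representation of the squared sample canonical correlations as the eigenvalues of a ratio of two independent Wishart matrices --- and then to read off parts~(i) and~(iii) from it; the conceptual content is all in part~(ii), which is a Cochran's-theorem argument, while part~(i) requires some normalizing-constant bookkeeping. To set up: the sample canonical correlations in~\eqref{eqn:CCA} are unchanged by centering and by separate nonsingular linear transformations of $\bx_{\mathcal P}$ and $\bx_{\mathcal P^c}$, and under $H_0^{\mathcal P}$ the matrix $\bSigma$ is block diagonal, so we may assume $\bSigma = \bI_p$; then $\bX_{\mathcal P}$ and $\bX_{\mathcal P^c}$ are independent with i.i.d.\ standard Gaussian entries. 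Write $q_1 = \mathrm{card}(\mathcal P)$, $q_2 = \mathrm{card}(\mathcal P^c)$, $q_1+q_2 = p$, and assume $q_1 \leqslant q_2$ (the reverse case follows by exchanging the roles of $\mathcal P$ and $\mathcal P^c$), so $r(\mathcal P) = q_1$. Since $n > p$, all submatrices of $\bS(\bX)$ entering~\eqref{eqn:CCA} are a.s.\ invertible.

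\emph{The Wishart representation (part~(ii)).} Set $\bH := \bS_{\mathcal P,\mathcal P^c}(\bX)\,\bS_{\mathcal P^c,\mathcal P^c}(\bX)^{-1}\bS_{\mathcal P^c,\mathcal P}(\bX)$ and $\bE := \bS_{\mathcal P,\mathcal P}(\bX) - \bH$, the Schur complement of $\bS_{\mathcal P^c,\mathcal P^c}(\bX)$ in $\bS(\bX)$ (equivalently, the residual sample covariance from regressing $\bx_{\mathcal P}$ on $\bx_{\mathcal P^c}$). From~\eqref{eqn:CCA}, $\whitenS_{\mathcal P,\mathcal P^c}(\bX)\,\whitenS_{\mathcal P,\mathcal P^c}(\bX)^\top = \bS_{\mathcal P,\mathcal P}(\bX)^{-\frac12}\bH\,\bS_{\mathcal P,\mathcal P}(\bX)^{-\frac12}$ has eigenvalues $\hat\lambda_i^2$, so the $\hat\lambda_i^2$ are the eigenvalues of $\bS_{\mathcal P,\mathcal P}(\bX)^{-1}\bH = (\bE+\bH)^{-1}\bH$, and hence $\Psi_i := \hat\lambda_i^2/(1-\hat\lambda_i^2)$ are the eigenvalues of $\bE^{-1}\bH$, equivalently of $\bH\bE^{-1}$. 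Conditioning on $\bX_{\mathcal P^c}$ and letting $\bP$ be the orthogonal projection in $\mathbb R^n$ onto the (a.s.\ $q_2$-dimensional) column span of the centered $\bX_{\mathcal P^c}$, which lies in $\bones_n^\perp$, one checks $n\bH = \bX_{\mathcal P}^\top \bP\,\bX_{\mathcal P}$ and $n\bE = \bX_{\mathcal P}^\top\!\left(\bI_n - \tfrac1n\bones_n\bones_n^\top - \bP\right)\bX_{\mathcal P}$. Under $H_0^{\mathcal P}$ the conditional law of $\bX_{\mathcal P}$ given $\bX_{\mathcal P^c}$ is $\mathcal{MN}_{n\times q_1}(\bzero,\bI_n,\bI_{q_1})$, and $\bP$ and $\bI_n - \tfrac1n\bones_n\bones_n^\top - \bP$ are orthogonal projections of ranks $q_2$ and $n-1-q_2$; Cochran's theorem then gives $n\bH \sim \mathrm{Wishart}_{q_1}(\bI_{q_1}, q_2)$ and $n\bE \sim \mathrm{Wishart}_{q_1}(\bI_{q_1}, n-1-q_2)$ independent. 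These conditional laws do not depend on $\bX_{\mathcal P^c}$, hence are also the unconditional laws; setting $\bW = n\bH$, $\bT = n\bE$ (the factor $n$ cancels in $\bH\bE^{-1}$) and substituting $q_1 = r(\mathcal P)$, $q_2 = p-r(\mathcal P)$, $n-1-q_2 = n-1-p+r(\mathcal P)$ gives part~(ii).

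\emph{The joint density (part~(i)).} By part~(ii), with $r := r(\mathcal P)$, the order statistics $u_i := \hat\lambda_i^2 = \Psi_i/(1+\Psi_i)$ are the eigenvalues of $\bW(\bW+\bT)^{-1}$ with $\bW \sim \mathrm{Wishart}_r(\bI_r, p-r)$ and $\bT \sim \mathrm{Wishart}_r(\bI_r, n-1-p+r)$ independent --- a multivariate-beta (Jacobi-ensemble) matrix, whose joint eigenvalue density is classical: on $1 > u_1 > \dots > u_r > 0$ it is proportional to $\prod_i u_i^{(p-2r-1)/2}(1-u_i)^{(n-p-2)/2}\prod_{i<j}(u_i-u_j)$, with an explicit normalizing constant built from multivariate gamma functions. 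The change of variables $\hat\lambda_i = \sqrt{u_i}$ (Jacobian $2^r\prod_i\hat\lambda_i$) turns this into $2^r\prod_i\hat\lambda_i^{\,p-2r}(1-\hat\lambda_i^2)^{(n-p-2)/2}\prod_{i<j}(\hat\lambda_i^2-\hat\lambda_j^2)$ times the same constant, which is exactly the product appearing in~\eqref{eqn:density_singular_values}. It then remains to verify that the Jacobi-ensemble normalizing constant, rewritten via standard multivariate-gamma identities, equals the displayed combination of $\pi^{r(\mathcal P)^2/2}$ and gamma functions; I expect this to be the most tedious (but purely mechanical) step, and it could alternatively be bypassed by citing the null density of canonical correlations directly.

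\emph{The scalar case (part~(iii)).} When $r(\mathcal P) = 1$, by definition the single sample canonical correlation of $\bx_{\{j\}}$ with $\bx_{\{j\}^c}$ equals $\max_{\bw}\widehat{\mathrm{corr}}\big(\bx_j,\,\bx_{\{j\}^c}\bw\big)$, i.e.\ the sample multiple correlation coefficient for the regression of variable $j$ on the remaining $p-1$ variables. Part~(ii) with $r=1$ yields independent scalars $\bW \sim \mathrm{Wishart}_1(\bI_1, p-1) = \chi^2_{p-1}$ and $\bT \sim \mathrm{Wishart}_1(\bI_1, n-p) = \chi^2_{n-p}$, so $\hat\lambda_1^2 = \Psi_1/(1+\Psi_1) = \bW/(\bW+\bT) \sim \mathrm{Beta}\!\left(\tfrac{p-1}{2}, \tfrac{n-p}{2}\right)$.
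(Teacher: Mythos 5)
Your proof is correct and ultimately rests on the same classical distribution theory as the paper's, but it is organized differently and is more self-contained in two places. For part (ii), the paper proceeds entirely by citation to Anderson (the determinantal characterization of the squared canonical correlations, Corollary 4.3.2 for the two Wishart laws and their independence, and the identity relating the roots of $\left|\mathbf W - \gamma\left(\mathbf T + \mathbf W\right)\right| = 0$ to the eigenvalues of $\mathbf W \mathbf T^{-1}$), whereas you derive the Wishart representation from scratch: after reducing to $\bSigma = \bI_p$ by invariance, you condition on $\bX_{\mathcal P^c}$, write $n\mathbf H$ and $n\mathbf E$ as quadratic forms in complementary orthogonal projections inside $\mathbf 1_n^\perp$ of ranks $p - r\left(\mathcal P\right)$ and $n - 1 - p + r\left(\mathcal P\right)$, and apply Cochran's theorem; this is essentially the same projection-matrix machinery the paper deploys for a different purpose in its proof of Lemma \ref{lemma:independence_under_null}, so your argument buys a citation-free proof at no real cost in length. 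For part (iii), you obtain the $\mathrm{Beta}\left(\frac{p-1}{2}, \frac{n-p}{2}\right)$ law as an immediate corollary of part (ii), since for $r\left(\mathcal P\right) = 1$ the representation reduces to $\chi^2_{p-1}/\left(\chi^2_{p-1} + \chi^2_{n-p}\right)$; the paper instead verifies the multiple-correlation identification by direct computation and cites a separate reference for the null beta law, so your route is cleaner and makes (iii) genuinely a consequence of (ii). For part (i) the two arguments coincide: quote the classical joint density of the squared canonical correlations (equivalently, the Jacobi-ensemble eigenvalue density of $\mathbf W\left(\mathbf W + \mathbf T\right)^{-1}$) and change variables with Jacobian $2^{r\left(\mathcal P\right)}\prod_i \hat\lambda_i$; like the paper, you outsource the normalizing constant to the literature rather than re-deriving it, which is acceptable since that is exactly what the paper's own proof does.
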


Here,  (i) gives us the joint density of $\left(\hat\lambda_1, \hat\lambda_2, \ldots, \hat\lambda_{r\left(\mathcal P\right)} \right)$. In Section \ref{sec:summary}, we will use (ii) to sample from the joint distribution of $\left(\hat\lambda_1, \hat\lambda_2, \ldots, \hat\lambda_{r\left(\mathcal P\right)} \right)$, as the joint density in (i) is not amenable to sampling. We will make use of (iii) to bypass sampling in the special case of $r\left({\mathcal{P}}\right) = 1$. We defer the proof of Proposition \ref{prop:density_singular_values} to Supplementary Material Section \ref{Appendix:density}. Crucially, under $H_0^\mathcal{P}$, the distribution of the sample canonical correlations depends only on $n$, $p$, and $r\left({\mathcal{P}}\right)$. 

Next, we explore how the situation changes if we consider a data-dependent subset of variables, instead of a predetermined one as in the case of \eqref{eqn:base_hyp}.

\subsection{What changes when the groups of variables are chosen from the data?}

We now suppose that $\mathcal{P}$ is a function of the data, and specifically of $\bS\left(\bx\right)$. That is, we define $\mathcal P : \mathcal S_{++}^p \mapsto \mathcal{B}_{p}$, where $\mathcal S_{++}^p$ denotes the set of all symmetric positive definite matrices, and $\mathcal{B}_{p}$ denotes the set of all possible partitions of $\left\{1,2,\ldots, p \right\}$. For $\hat{\mathcal{P}} \in \mathcal{P}\left(\bS\left(\bx\right)\right)$, we wish to test 

\begin{equation}
\label{eqn:H0_selection}
    H_0^{\hat{\mathcal{P}}} : \bSigma_{\hat{\mathcal{P}}, \hat{\mathcal{P}}^c} = \mathbf 0 \:\:\:\:\text{versus}\:\:\: H_1^{\hat{\mathcal{P}}} : \bSigma_{\hat{\mathcal{P}}, \hat{\mathcal{P}}^c} \neq \mathbf 0.
\end{equation}
We assume without loss of generality that $\hat{\mathcal{P}} = \left\{ 1,2,\ldots, \mathrm{card}\left({\hat{\mathcal{P}}}\right)\right\}$ and\\ $\hat{\mathcal{P}}^c = \left\{ \mathrm{card}\left({\hat{\mathcal{P}}}\right) + 1, \ldots, p \right\}$. Since $\hat{\mathcal{P}} \in \mathcal{P}\left(\bS\left(\bx\right)\right)$, the null hypothesis in \eqref{eqn:H0_selection} is a function of the data. The classical test statistic in \eqref{eqn:lrt} does not follow a Wilks' lambda distribution under the null hypothesis \eqref{eqn:H0_selection}. Ignoring this fact can lead to a tremendous loss of power. To demonstrate this in an example, we simulate data with $p = 6$ and $n = 9$ for a completely dense $\bSigma$, so that $H_1^{\hat{\mathcal{P}}} : \bSigma_{\hat{\mathcal{P}}, \hat{\mathcal{P}}^c} \neq \mathbf 0$ holds for all $\hat{\mathcal P}$. Figures \ref{Fig:power}(a)-(b) display the heatmaps of the absolute values of the entries of the population correlation matrix and the sample correlation matrix. We tested \eqref{eqn:H0_selection} for $\hat{\mathcal{P}}$ obtained via the thresholding procedure described in Algorithm \ref{algo:Selection_procedure}; this yielded $\hat{\mathcal{P}} = \left\{1,2, 3 \right\}$. Ignoring that $\hat{\mathcal{P}}$ was selected based on the data and using Wilks' lambda as the reference distribution for the likelihood ratio test statistic yielded a p-value $\left(p_{\mathrm{LRT}}\right)$ of $0.773$, whereas our proposed selective inference approach produced a p-value of $0.015$.  Figure \ref{Fig:power}(c) shows the distribution of p-values over $10,000$ data realizations using the two approaches. We see that our selective p-values tend to be small --- as expected, since the null hypothesis does not hold --- whereas the classical p-values are stochastically greater than a uniform.

\begin{figure}[t!]
    \centering
    \begin{subfigure}[t]{0.5\textwidth}
    \label{Fig:cost_known_sinpi}
        \centering
        \includegraphics[height=1.75in]{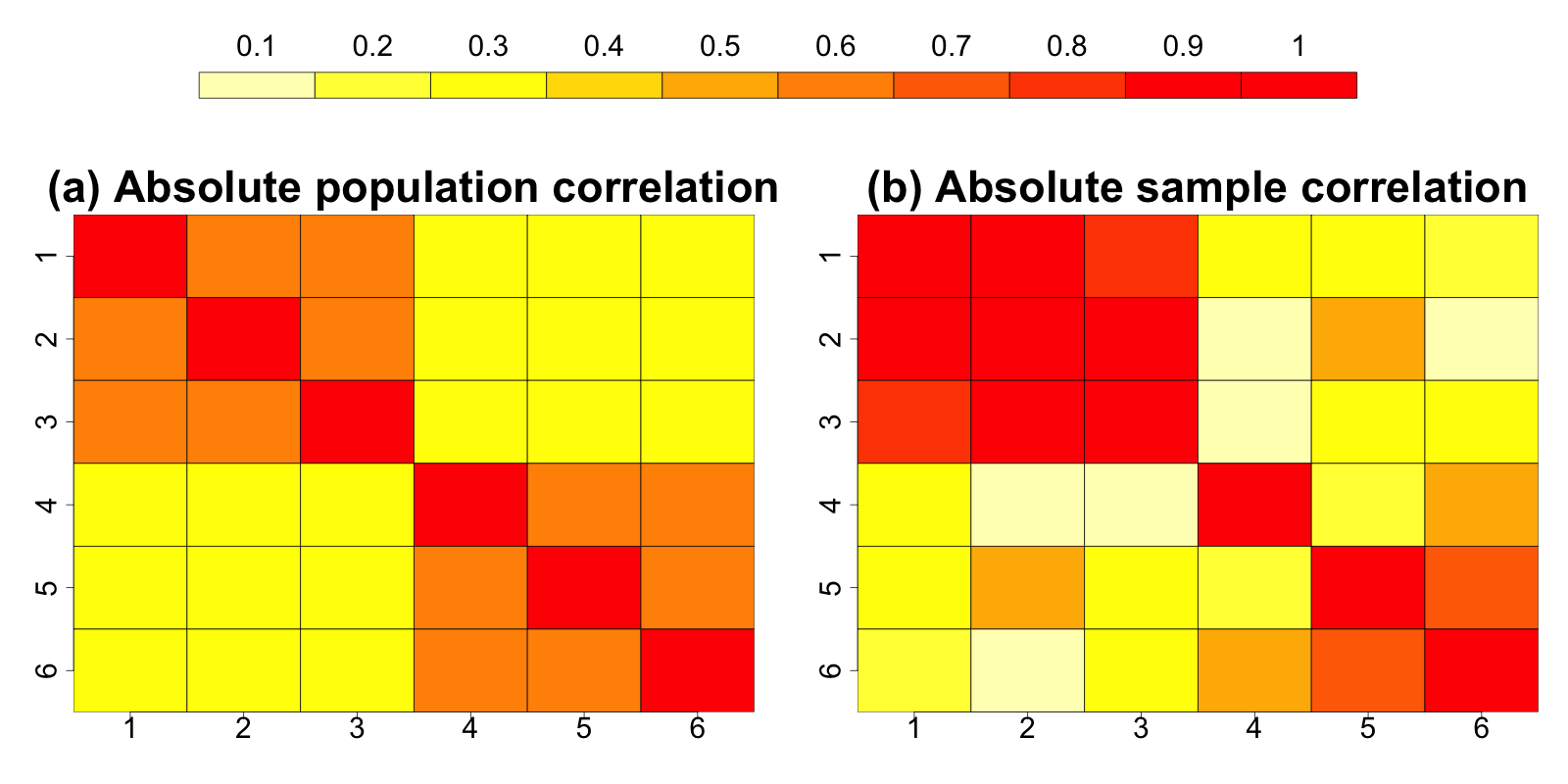}
    \end{subfigure}%
    \hskip -0.001cm 
    \begin{subfigure}[t]{0.5\textwidth}
    \label{Fig:}
        \centering
        \includegraphics[height=1.75in]{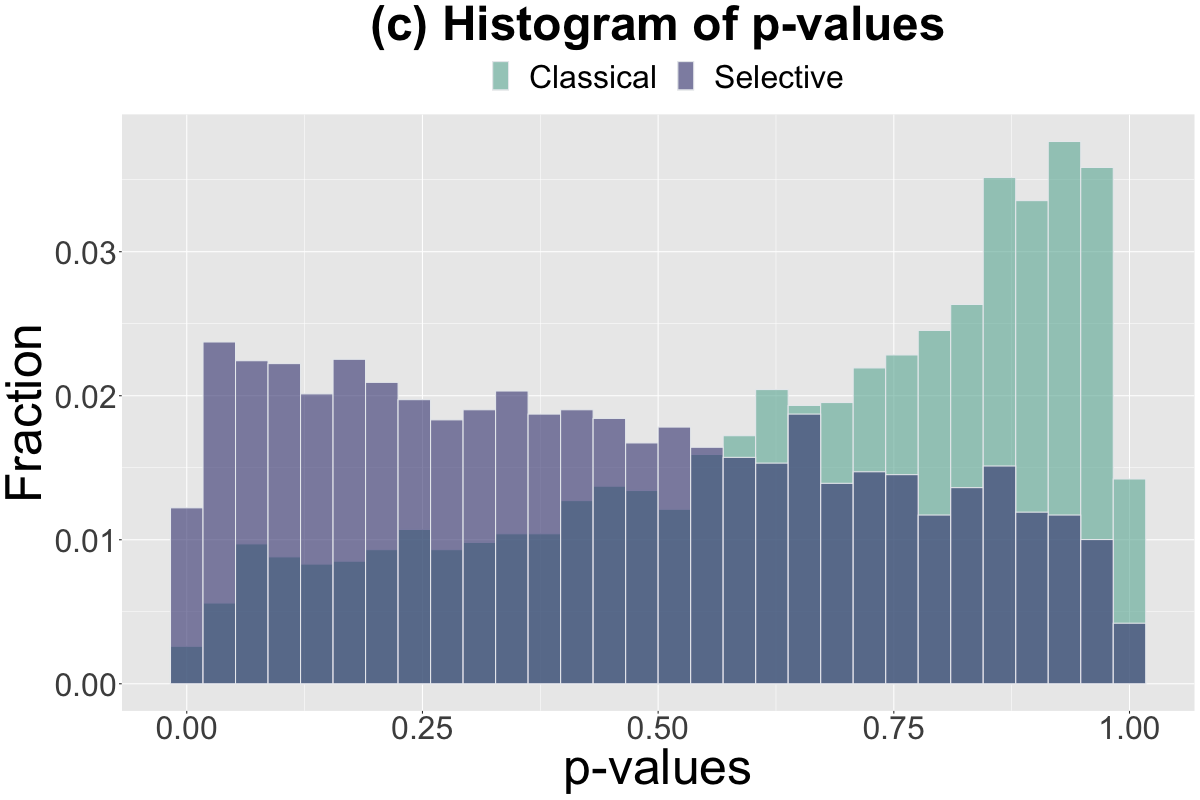}
    \end{subfigure}
    \caption{Heatmap of the absolute values of the entries of the (a) population correlation matrix and (b) sample correlation matrix, for $p = 6$ and $n = 9$. Histograms of the classical LRT p-values and the selective LRT p-values are in (c). }\label{Fig:power}
\end{figure}

At first glance, the fact that the classical test has low power in this scenario may seem counter-intuitive, since failure to account for selection typically leads to an \textit{inflated} type I error rate, as opposed to low power, in related settings \citep{fithian2014optimal}. However, in this setting, the procedure $\mathcal P\left(\cdot\right)$ that selects $\hat{\mathcal P}$ (as described in Algorithm \ref{algo:Selection_procedure}) does so because of its low correlation with $\hat{\mathcal P}^c$. Thus, the likelihood ratio test statistic will be stochastically greater than the Wilks' lambda distribution under $H_0^{\hat{\mathcal{P}}} : \bSigma_{\hat{\mathcal{P}}, \hat{\mathcal{P}}^c} = \mathbf 0$, yielding an overly conservative p-value. 

\subsection{Selective inference for the covariance matrix}
\label{section:Matrix}

\citet{fithian2014optimal} argues that when a null hypothesis is \textit{selected} from the data, we should control the probability of a false rejection conditional on \textit{having selected this null hypothesis}, i.e. the \textit{selective type I error rate}. In the context of \eqref{eqn:H0_selection}, the selective type I error rate is defined as follows:

\begin{definition}
\textbf{(Selective type I error)} A test of $H_0^{\mathcal R} : \bSigma_{\mathcal R, \mathcal R^c} = \mathbf 0$ controls the selective type I error if, for every subset ${\mathcal{R}}$ of $\left\{1,2,\ldots, p \right\}$,
\begin{equation}
\label{eqn:selective_type_1}
    \mathbb P_{ H_0^{{\mathcal{R}}}} \left( \text{reject } H_0^{{\mathcal{R}}} \text{ at level } \alpha \bigg | {\mathcal{R}} \in \mathcal{P}\left(\mathbf S\left(\mathbf X\right)\right) \right) \leqslant \alpha; \:\forall \:0 \leqslant \alpha \leqslant 1.
\end{equation}
\end{definition}

By the probability integral transform, it can be shown that a conditional version of \eqref{eqn:base_hyp_simplified},

\begin{equation}
\label{eqn:selective_pvalue_initial}
     p_0\left(\bx; \hat{\mathcal{P}}\right) := \mathbb{P}_{ H_0^{\hat{\mathcal{P}}}} \left(\left|\bI - \left(\hat{\bm{\Lambda}}^{\hat{\mathcal{P}}}\left(\bX\right)\right)^{2}\right| \leqslant \left|\bI - \left(\hat{\bm{\Lambda}}^{\hat{\mathcal{P}}}\left(\bx\right)\right)^{2}\right| \bigg | \hat{\mathcal{P}} \in \mathcal{P} \left(\bS\left(\bX\right)\right)\right),
\end{equation}
controls the selective type I error in \eqref{eqn:selective_type_1}. However, \eqref{eqn:selective_pvalue_initial} is not computationally tractable, because the distribution of $\left|\bI - \left(\hat{\bm{\Lambda}}^{\hat{\mathcal{P}}}\left(\bX\right)\right)^{2}\right|$, conditional on $\hat{\mathcal{P}} \in \mathcal{P} \left(\bS\left(\bX\right)\right)$, depends on parameters that are unknown even under the null hypothesis. To bypass this problem, we employ a common approach in the selective inference literature (see e.g. \citealt{lee2016exact}), in which we condition on some additional information. In particular, we propose a p-value for $H_0^{\hat{\mathcal{P}}}$ as
\begin{equation}
\label{eqn:selective_pvalue_definition}
\begin{aligned}
        p\left(\bx; \hat{\mathcal{P}}\right) := \mathbb{P}_{H_0^{\hat{\mathcal{P}}}} \bigg(\left|\bI - \left(\hat{\bm{\Lambda}}^{\hat{\mathcal{P}}}\left(\bX\right)\right)^2\right| \leqslant \left|\bI - \left(\hat{\bm{\Lambda}}^{\hat{\mathcal{P}}}\left(\bx\right)\right)^2\right| \bigg |& \hat{\mathcal{P}} \in \mathcal{P}\left(\bS\left(\bX\right)\right),\\ \bS_{\hat{\mathcal{P}}, \hat{\mathcal{P}}} \left(\bX\right) = \bS_{\hat{\mathcal{P}}, \hat{\mathcal{P}}} \left(\bx\right), \bS_{\hat{\mathcal{P}}^c, \hat{\mathcal{P}}^c} \left(\bX\right) = \bS_{\hat{\mathcal{P}}^c, \hat{\mathcal{P}}^c} \left(\bx\right), &\:\hat{\mathbf{A}}^{\hat{\mathcal{P}}}\left(\bX\right) = \hat{\mathbf{A}}^{\hat{\mathcal{P}}}\left(\bx\right), \hat{\bm{\Gamma}}^{\hat{\mathcal{P}}}\left(\bX\right) = \hat{\bm{\Gamma}}^{\hat{\mathcal{P}}}\left(\bx\right)\bigg),
\end{aligned}
\end{equation}
where we recall the SVD in \eqref{eqn:CCA}. 
The intuition behind this conditioning set is as follows:

\begin{enumerate}
    \item \textbf{Fixed within-group covariance: } Due to the nature of the null hypothesis in \eqref{eqn:H0_selection}, we are primarily interested in inter-group correlation. Hence we keep the intra-group covariance fixed, i.e., we only consider $\bX$ for which $\bS_{\hat{\mathcal{P}}, \hat{\mathcal{P}}}\left(\bX\right) = \bS_{\hat{\mathcal{P}}, \hat{\mathcal{P}}}\left(\bx\right)$ and $\bS_{\hat{\mathcal{P}}^c, \hat{\mathcal{P}}^c}\left(\bX\right) = \bS_{\hat{\mathcal{P}}^c, \hat{\mathcal{P}}^c}\left(\bx\right)$. 
    
    \item \textbf{Fixed between-group distance direction: } Inspecting \eqref{eqn:CCA} and \eqref{eqn:test_stat_simplified}, we see that the only part of $\whitenS_{\hat{\mathcal{P}}, \hat{\mathcal{P}}^c}$ that is involved in the test statistic is $\hat{\bm\Lambda}^{\hat{\mathcal{P}}}$. The left and right singular vectors of  $\whitenS_{\hat{\mathcal{P}}, \hat{\mathcal{P}}^c}$ in \eqref{eqn:CCA}, i.e. $\hat{\mathbf{A}}^{\hat{\mathcal{P}}}$ and $\hat{\bm{\Gamma}}^{\hat{\mathcal{P}}}$, are not involved in the test statistic. Hence, in \eqref{eqn:selective_pvalue_definition}, we only consider $\bX$ for which $\hat{\mathbf{A}}^{\hat{\mathcal{P}}}\left(\bX\right) = \hat{\mathbf{A}}^{\hat{\mathcal{P}}}\left(\bx\right)$ and $ \hat{\bm{\Gamma}}^{\hat{\mathcal{P}}}\left(\bX\right) = \hat{\bm{\Gamma}}^{\hat{\mathcal{P}}}\left(\bx\right)$.
\end{enumerate}

The next theorem establishes that the conditioning set in \eqref{eqn:selective_pvalue_definition} can be  written in terms of constraints on the sample canonical correlations, and that $p\left(\mathbf x; \hat{\mathcal{P}} \right)$  controls the selective type I error rate.
\begin{theorem}
\label{thm:main}
For any realization $\mathbf x$ of $\mathbf X$ in \eqref{eqn:distribution_data} and for an arbitrary subset $\mathcal{R}$ of $\left\{1,2,\ldots, p \right\}$, we can express $p\left(\mathbf x; {\mathcal{R}} \right)$ defined in \eqref{eqn:selective_pvalue_definition} as 
\begin{equation}
\label{eqn:thm_main}
    p\left(\mathbf x; {\mathcal{R}} \right) = \mathbb{P}_{ H_0^{\mathcal{R}} }\left(\prod_{i = 1}^{r\left({\mathcal{R}}\right)} \left(1 - \lambda_i^2\right) \leqslant  \prod_{i = 1}^{r\left({\mathcal{R}}\right)} \left(1 - \left(\hat\lambda_i ^\mathcal R\left(\mathbf x\right)\right)^2\right) \bigg | \left(\lambda_1, \lambda_2, \ldots, \lambda_{r\left({\mathcal{R}}\right)}\right) \in {\mathcal{G}}\left(\mathbf x; {\mathcal{R}} \right)\right),
\end{equation}
where $\left\{ \lambda_i\right\}_{i=1}^{r\left({\mathcal{R}}\right)}$ has joint density $f\left(\lambda_1, \lambda_2, \ldots, \lambda_{r\left({\mathcal{R}}\right)}\right)$ defined in \eqref{eqn:density_singular_values},  $\left\{ \hat\lambda_i^\mathcal R\left(\mathbf x\right)\right\}_{i=1}^{r\left({\mathcal{R}}\right)}$ are the diagonal entries of $\hat{\bm{\Lambda}}^{\mathcal R} \left(\mathbf x\right)$, and 
\small
\begin{equation}
    \label{eqn:conditioning_set}
   {\mathcal{G}}\left(\mathbf x; {\mathcal{R}} \right):= \left\{ \left(\lambda_1, \lambda_2, \ldots, \lambda_{r\left({\mathcal{R}}\right)} \right) \in [0, 1]^{r\left({\mathcal{R}}\right)}: \lambda_1 \geqslant \lambda_2 \geqslant \ldots \geqslant \lambda_{r\left({\mathcal{R}}\right)} ; {\mathcal{R}} \in \mathcal P\left(\mathbf S'\left(\mathbf x, {{\mathcal{R}}}, \left\{\lambda_i\right\}_{i= 1}^{r\left({\mathcal{R}}\right)}\right)\right)\right\},
\end{equation}
\normalsize
where $\mathbf S'\left(\mathbf x, {\mathcal{R}}, \left\{\lambda_i\right\}_{i= 1}^{r\left({\mathcal{R}}\right)}\right)$ is a modified version of the sample covariance matrix with a perturbed off-diagonal block,
\begin{equation}
\label{eqn:new_S}
\mathbf S_{{\mathcal{R}}, {\mathcal{R}}^c}'\left(\mathbf x, {\mathcal{R}}, \left\{\lambda_i\right\}_{i= 1}^{r\left({\mathcal{R}}\right)}\right) := \left(\mathbf S_{{\mathcal{R}}, {\mathcal{R}}}\left(\mathbf x\right)\right)^{\frac{1}{2}} \hat{\mathbf{A}}^{{\mathcal{R}}} \left(\mathbf x\right) \mathrm{diag}\left(\lambda_1, \lambda_2, \ldots, \lambda_{r\left({\mathcal{R}}\right)}\right) \left(\hat{\mathbf{\Gamma}}^{{\mathcal{R}}}\left(\mathbf x\right)\right) ^\top \left(\mathbf S_{{\mathcal{R}}^c, {\mathcal{R}}^c}\left(\mathbf x\right)\right)^{\frac{1}{2}}.
\end{equation}
Furthermore, rejecting $H_0^{{\mathcal{R}}}$ if $ p\left(\mathbf x; {\mathcal{R}}\right) \leqslant \alpha$  controls the selective type I error rate at level $\alpha$, i.e. 
\begin{equation}
\label{eqn:thm_type_1}
    \mathbb P_{ H_0^{{\mathcal{R}}}} \left(p\left(\mathbf X; {\mathcal{R}}\right) \leqslant \alpha \bigg | {\mathcal{R}} \in \mathcal{P}\left(\mathbf S\left(\mathbf X\right)\right) \right) \leqslant \alpha;\: \forall \:0 \leqslant \alpha \leqslant 1.
\end{equation}
\end{theorem}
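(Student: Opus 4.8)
The plan is to prove \eqref{eqn:thm_main} by showing that, under $H_0^{\mathcal R}$, the quantities held fixed in \eqref{eqn:selective_pvalue_definition} --- the within-group sample covariances $\bS_{\mathcal R,\mathcal R}$, $\bS_{\mathcal R^c,\mathcal R^c}$ and the canonical vectors $\hat{\mathbf{A}}^{\mathcal R},\hat{\bm{\Gamma}}^{\mathcal R}$ --- are jointly independent of the sample canonical correlations, and that once these are fixed the selection event $\{\mathcal R\in\mathcal P(\bS(\bX))\}$ reduces to a constraint on the canonical correlations alone; the error bound \eqref{eqn:thm_type_1} then follows from the probability integral transform.

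\emph{Reduction to the canonical correlations.} Working with the column-centered $\bX$ (which effectively lives in the $(n-1)$-dimensional subspace orthogonal to $\bones_n$), set $\bQ_{\mathcal R}:=\tfrac{1}{\sqrt n}\bX_{\mathcal R}(\bS_{\mathcal R,\mathcal R}(\bX))^{-1/2}$ and $\bQ_{\mathcal R^c}:=\tfrac{1}{\sqrt n}\bX_{\mathcal R^c}(\bS_{\mathcal R^c,\mathcal R^c}(\bX))^{-1/2}$, which have orthonormal columns. A direct computation gives $\whitenS_{\mathcal R,\mathcal R^c}(\bX)=\bQ_{\mathcal R}^\top\bQ_{\mathcal R^c}$, so by \eqref{eqn:CCA} the $\hat\lambda_i^{\mathcal R}(\bX)$ are the singular values and $\hat{\mathbf{A}}^{\mathcal R}(\bX),\hat{\bm{\Gamma}}^{\mathcal R}(\bX)$ the singular vectors of $\bQ_{\mathcal R}^\top\bQ_{\mathcal R^c}$. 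Under $H_0^{\mathcal R}$, $\bX_{\mathcal R}\independent\bX_{\mathcal R^c}$, and since the Gaussian density of each block depends on it only through its Gram matrix, conditioning on $\bS_{\mathcal R,\mathcal R}(\bX)$ and $\bS_{\mathcal R^c,\mathcal R^c}(\bX)$ leaves $\bQ_{\mathcal R}$ and $\bQ_{\mathcal R^c}$ conditionally independent and Haar-uniform on their Stiefel manifolds. Thus $\bQ_{\mathcal R}^\top\bQ_{\mathcal R^c}$ has a bi-orthogonally invariant conditional law, whence its singular values are independent of its singular vectors; so additionally conditioning on $\hat{\mathbf{A}}^{\mathcal R}(\bX)=\hat{\mathbf{A}}^{\mathcal R}(\bx)$ and $\hat{\bm{\Gamma}}^{\mathcal R}(\bX)=\hat{\bm{\Gamma}}^{\mathcal R}(\bx)$ does not perturb the conditional law of $(\hat\lambda_1^{\mathcal R}(\bX),\dots,\hat\lambda_{r(\mathcal R)}^{\mathcal R}(\bX))$, which therefore coincides with its unconditional law under $H_0^{\mathcal R}$, namely the density $f$ of \eqref{eqn:density_singular_values} (Proposition~\ref{prop:density_singular_values}(i)).

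\emph{Translating the selection event and concluding.} On the conditioning event of \eqref{eqn:selective_pvalue_definition}, the identity $\bS_{\mathcal R,\mathcal R^c}(\bX)=(\bS_{\mathcal R,\mathcal R}(\bX))^{1/2}\whitenS_{\mathcal R,\mathcal R^c}(\bX)(\bS_{\mathcal R^c,\mathcal R^c}(\bX))^{1/2}$ together with the fixed values of $\bS_{\mathcal R,\mathcal R}$, $\bS_{\mathcal R^c,\mathcal R^c}$, $\hat{\mathbf{A}}^{\mathcal R}$, $\hat{\bm{\Gamma}}^{\mathcal R}$ shows $\bS(\bX)=\bS'(\bx,\mathcal R,\{\hat\lambda_i^{\mathcal R}(\bX)\})$ with $\bS'$ as in \eqref{eqn:new_S}; hence $\{\mathcal R\in\mathcal P(\bS(\bX))\}$ equals $\{(\hat\lambda_1^{\mathcal R}(\bX),\dots,\hat\lambda_{r(\mathcal R)}^{\mathcal R}(\bX))\in\mathcal G(\bx;\mathcal R)\}$ (the ordering constraint in \eqref{eqn:conditioning_set} being automatic for sample canonical correlations), and by \eqref{eqn:test_stat_simplified} the test statistic equals $\prod_{i=1}^{r(\mathcal R)}(1-(\hat\lambda_i^{\mathcal R}(\bX))^2)$. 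Substituting these into \eqref{eqn:selective_pvalue_definition}, applying the previous paragraph, and then restricting the law of the $\lambda_i$ further to $\mathcal G(\bx;\mathcal R)$ gives \eqref{eqn:thm_main}. For \eqref{eqn:thm_type_1}, let $\mathcal D$ be generated by $(\bS_{\mathcal R,\mathcal R}(\bX),\bS_{\mathcal R^c,\mathcal R^c}(\bX),\hat{\mathbf{A}}^{\mathcal R}(\bX),\hat{\bm{\Gamma}}^{\mathcal R}(\bX))$ and $\mathbbm{1}\{\mathcal R\in\mathcal P(\bS(\bX))\}$; then $p(\bX;\mathcal R)$ is the $\mathcal D$-conditional c.d.f. of $T(\bX):=\prod_{i=1}^{r(\mathcal R)}(1-(\hat\lambda_i^{\mathcal R}(\bX))^2)$ evaluated at $T(\bX)$, and since $T(\bX)$ has a continuous conditional distribution given $\mathcal D$ (inherited from $f$ restricted to $\mathcal G(\bx;\mathcal R)$, which has positive mass because $\bx$ realizes it), the probability integral transform gives $\mathbb P_{H_0^{\mathcal R}}(p(\bX;\mathcal R)\le\alpha\mid\mathcal D)\le\alpha$; taking conditional expectation given $\{\mathcal R\in\mathcal P(\bS(\bX))\}\in\mathcal D$ yields \eqref{eqn:thm_type_1}.

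\emph{Main obstacle.} The delicate step is the reduction above: arguing that the within-group covariances and, in particular, the canonical vectors $\hat{\mathbf{A}}^{\mathcal R},\hat{\bm{\Gamma}}^{\mathcal R}$ carry no information about the canonical correlations under $H_0^{\mathcal R}$. This rests on the representation $\whitenS_{\mathcal R,\mathcal R^c}(\bX)=\bQ_{\mathcal R}^\top\bQ_{\mathcal R^c}$ with conditionally independent Haar-uniform Stiefel factors and on the classical independence of singular values from singular vectors for a bi-orthogonally invariant matrix; the surrounding bookkeeping --- the column-centering, the sign and ordering ambiguities in the SVD, and checking that the perturbed matrix $\bS'$ of \eqref{eqn:new_S} is positive definite so that $\mathcal G(\bx;\mathcal R)$ and $\mathcal P(\bS'(\cdot))$ are well defined --- is routine but must be handled.
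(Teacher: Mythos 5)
Your proposal is correct and follows the same architecture as the paper's proof: (a) on the conditioning set of \eqref{eqn:selective_pvalue_definition} the identity $\bS_{\mathcal R,\mathcal R^c}(\bX)=(\bS_{\mathcal R,\mathcal R}(\bx))^{1/2}\hat{\mathbf{A}}^{\mathcal R}(\bx)\hat{\bm{\Lambda}}^{\mathcal R}(\bX)(\hat{\bm{\Gamma}}^{\mathcal R}(\bx))^\top(\bS_{\mathcal R^c,\mathcal R^c}(\bx))^{1/2}$ turns the selection event into $\{\hat{\bm\lambda}(\bX)\in\mathcal G(\bx;\mathcal R)\}$; (b) an independence lemma lets you drop the nuisance conditioning so that the canonical correlations retain the density $f$ of \eqref{eqn:density_singular_values}; (c) the conditional probability integral transform plus the tower property gives \eqref{eqn:thm_type_1}. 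Steps (a) and (c) match the paper essentially verbatim (the paper's Supplementary Section on the main theorem, following \citet{gao2020selective} for the type I error part). The genuine difference is in how you prove the independence lemma (the paper's Lemma on independence under the null). The paper conditions on $\bX_{\mathcal R^c}$, derives the joint law of $\tilde{\mathbf G}=\sqrt{n}\,\bS_{11}^{1/2}\whitenS_{12}$ and $\tilde{\mathbf T}=n\bS_{11}-\tilde{\mathbf G}\tilde{\mathbf G}^\top$ via Cochran- and Craig-type theorems, and then performs an explicit change of variables $(\tilde{\mathbf G},\tilde{\mathbf T})\mapsto(\bS_{11},\hat{\mathbf A},\hat{\bm\Lambda},\hat{\bm\Gamma})$, checking that both the density and the SVD Jacobian factor (citing Hsu's distribution result and Rennie's separability of the SVD Jacobian). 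You instead write $\whitenS_{\mathcal R,\mathcal R^c}=\bQ_{\mathcal R}^\top\bQ_{\mathcal R^c}$ with conditionally independent Haar-uniform Stiefel frames and invoke bi-orthogonal invariance to decouple singular values from singular vectors and from the within-group Gram matrices. Your route is shorter and avoids the Jacobian computation entirely, at the cost of leaning on the classical (but unproved here) facts that a left-spherical Gaussian block given its Gram matrix has Haar-uniform orientation on the Stiefel manifold of the centered subspace, and that bi-orthogonal invariance implies independence of singular values and vectors; the paper's computation is heavier but self-contained and simultaneously yields the explicit factored density. The bookkeeping items you flag (SVD sign/ordering ambiguities, measurability, and well-definedness of $\mathcal P(\bS'(\cdot))$, which only needs the unperturbed diagonal of $\bS'$ to be positive) are indeed routine and are treated no more explicitly in the paper.
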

\noindent Recall that $\hat{\mathbf{A}}^{{\mathcal{R}}} \left(\mathbf x\right), \hat{\bm\Lambda}^{{\mathcal{R}}} \left(\mathbf x\right)$ and $\hat{\bm\Gamma}^{{\mathcal{R}}} \left(\mathbf x\right)$ were defined in \eqref{eqn:CCA}. We defer the proof of Theorem \ref{thm:main} to  Supplementary Material Section \ref{Appendix:Main_theorem}. Theorem \ref{thm:main} demonstrates that computation of the p-value in \eqref{eqn:selective_pvalue_definition} boils down to characterizing $ {\mathcal{G}}\left(\mathbf x; \hat{\mathcal{P}} \right)$ in \eqref{eqn:conditioning_set}.

Furthermore, ${\mathcal{G}}\left(\mathbf x; \hat{\mathcal{P}} \right)$ in \eqref{eqn:conditioning_set} describes the points $\left\{\lambda_i\right\}_{i= 1}^{r\left(\hat{\mathcal{P}}\right)}$ in the $r\left(\hat{\mathcal{P}}\right)$-dimensional unit hyper-cube for which the coordinates are in non-increasing order, and for which applying a variable grouping method $\mathcal P \left(\cdot\right)$ to the perturbed covariance matrix $\mathbf S'\left(\mathbf x, \hat{\mathcal{P}}, \left\{\lambda_i\right\}_{i= 1}^{r\left(\hat{\mathcal{P}}\right)}\right)$ yields $\hat{\mathcal{P}}$.  Characterization of this set depends on the procedure used for obtaining the block diagonal structure of the correlation matrix, i.e. on $\mathcal P \left(\cdot\right)$. In the next section, we focus on a specific data-adaptive procedure for $\mathcal P \left(\cdot\right)$, and on the corresponding characterization of ${\mathcal{G}}\left(\mathbf x; \hat{\mathcal{P}} \right)$ in \eqref{eqn:conditioning_set}. 

\section{Characterization of the conditioning set}
\label{section:characterization_of_conditioning_set}

\subsection{Procedure for obtaining groups of uncorrelated variables}
\label{subsection:thresholding_procedure}
We present a simple procedure for identifying groups of uncorrelated variables, i.e., for discovering block diagonal structure of the covariance matrix. We assume that there are no ties between the off-diagonal entries of the sample correlation matrix, which holds with probability $1$. We denote the correlation matrix corresponding to the sample covariance matrix $\bS = \bS\left(\bx\right)$ as $\mathbf{R}\left(\mathbf S\right)$, with $\left(i,j\right)^{th}$ element $\bR_{i,j}\left(\bS\right) = \frac{\bS_{i,j}}{\sqrt{\bS_{i,i} \bS_{j,j}}}$, for $ i,j \in \left\{1,2,\ldots, p\right\}$. Let $\mathds{1}\left\{ \mathcal A\right\}$ denote the indicator function of the event $\mathcal A$. Algorithm \ref{algo:Selection_procedure} summarizes the procedure for obtaining groups of uncorrelated variables. Note that if the variables have an intrinsic ordering, then in line 5 of Algorithm \ref{algo:Selection_procedure} we discard groups of variables that violate that ordering.

\begin{algorithm}[h]
	\caption{Procedure for obtaining groups of uncorrelated variables}\label{algo:Selection_procedure}
	 \textbf{Input:} Sample covariance matrix $\bS \left(\bx\right)$; threshold $c$.
	\begin{algorithmic}[1]
		\Procedure{}{}
		\State Create the adjacency matrix $\mathbf D_{i,j} = \mathds{1}\left\{ \left|\bR_{i,j}\left(\bS\left(\bx \right)\right)\right| > c\right\}$;
		\State Identify the connected components $\mathcal{C}_1, \mathcal{C}_2, \ldots, \mathcal{C}_K$ of the graph corresponding to the \newline \hspace*{1.65em} adjacency matrix;
		\If{\text{variables are ordered}}
		\State $\mathcal{P}\left(\bS\left(\bx\right)\right)  = \left\{\mathcal{C}_k:  \mathcal{C}_k \text{ consists exclusively of consecutive variables}\right\}$;
		\Else
		\State $\mathcal{P}\left(\bS\left(\bx\right)\right)  = \left\{\mathcal{C}_1, \mathcal{C}_2, \ldots, \mathcal{C}_K \right\}$.
		\EndIf
		\EndProcedure
	\end{algorithmic}
\end{algorithm}

\subsection{Calculation of the conditioning set}
In this subsection, we characterize the conditioning set ${\mathcal{G}}\left(\mathbf x; \hat{\mathcal{P}} \right)$ in \eqref{eqn:conditioning_set} for the function $\mathcal P \left(\cdot\right)$ defined in Algorithm \ref{algo:Selection_procedure}. 

\begin{proposition}
\label{prop:reformulate_conditioning_set}
For $\hat{\mathcal P}\in \mathcal{P} \left(\mathbf S\left(\mathbf x\right)\right) $, we have that
\begin{equation}
\label{eqn:polytope_proof}
\begin{aligned}
 {\mathcal{G}}\left(\mathbf x; \hat{\mathcal{P}} \right) = \bigg\{ \left(\lambda_1, \lambda_2, \ldots, \lambda_{r\left(\hat{\mathcal{P}}\right)}\right) \in [0, 1]^{r\left(\hat{\mathcal{P}}\right)} :  &\:\lambda_1 \geqslant \lambda_2 \geqslant \ldots \geqslant \lambda_{r\left(\hat{\mathcal{P}}\right)},\\  & c \geqslant \max_{i' \in \hat{\mathcal P}, j' \in \hat{\mathcal P}^c}\left|\mathbf R_{i',j'}\left(\mathbf S'\left(\mathbf x, \hat{\mathcal{P}}, \left\{\lambda_i\right\}_{i= 1}^{r\left(\hat{\mathcal{P}}\right)}\right)\right)\right|\bigg\}.
\end{aligned}
\end{equation}
\end{proposition}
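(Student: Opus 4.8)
The plan is to reduce the set membership ``$\hat{\mathcal{P}}\in\mathcal{P}(\mathbf{S}'(\mathbf{x},\hat{\mathcal{P}},\{\lambda_i\}))$'' appearing in the definition \eqref{eqn:conditioning_set} of $\mathcal{G}(\mathbf{x};\hat{\mathcal{P}})$ to the single scalar inequality $c\geqslant\max_{i'\in\hat{\mathcal{P}},\,j'\in\hat{\mathcal{P}}^c}|\mathbf{R}_{i',j'}(\mathbf{S}'(\mathbf{x},\hat{\mathcal{P}},\{\lambda_i\}))|$. Once this equivalence is in hand, \eqref{eqn:polytope_proof} follows by substituting it into \eqref{eqn:conditioning_set}; the ordering constraints $\lambda_1\geqslant\cdots\geqslant\lambda_{r(\hat{\mathcal{P}})}$ are already common to both displays. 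Write $\mathbf{S}':=\mathbf{S}'(\mathbf{x},\hat{\mathcal{P}},\{\lambda_i\})$ for brevity.

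First I would record the structural fact that the perturbation \eqref{eqn:new_S} alters only the $\hat{\mathcal{P}}\times\hat{\mathcal{P}}^c$ block and, by symmetry, its transpose: $\mathbf{S}'_{\hat{\mathcal{P}},\hat{\mathcal{P}}}=\mathbf{S}_{\hat{\mathcal{P}},\hat{\mathcal{P}}}(\mathbf{x})$ and $\mathbf{S}'_{\hat{\mathcal{P}}^c,\hat{\mathcal{P}}^c}=\mathbf{S}_{\hat{\mathcal{P}}^c,\hat{\mathcal{P}}^c}(\mathbf{x})$, so in particular $\mathbf{S}'$ and $\mathbf{S}(\mathbf{x})$ share the same diagonal. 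Hence $\mathbf{R}_{i,j}(\mathbf{S}')=\mathbf{R}_{i,j}(\mathbf{S}(\mathbf{x}))$ whenever $i,j$ lie in the same block, so the adjacency matrix built from $\mathbf{S}'$ in line 2 of Algorithm \ref{algo:Selection_procedure} induces exactly the same subgraph on the vertex set $\hat{\mathcal{P}}$ as the one built from $\mathbf{S}(\mathbf{x})$ (and likewise on $\hat{\mathcal{P}}^c$). A one-sentence technical aside: Algorithm \ref{algo:Selection_procedure} reads its input only through $\mathbf{R}(\cdot)$, whose diagonal entries here are the positive numbers $\mathbf{S}_{i,i}(\mathbf{x})$, so it is applied to $\mathbf{S}'$ without issue.

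Next I would bring in the hypothesis $\hat{\mathcal{P}}\in\mathcal{P}(\mathbf{S}(\mathbf{x}))$: by the construction in Algorithm \ref{algo:Selection_procedure} every element of the output partition is a connected component of the thresholded graph, so the subgraph that $\mathbf{S}(\mathbf{x})$ induces on $\hat{\mathcal{P}}$ is connected, and therefore by the previous paragraph so is the subgraph induced by $\mathbf{S}'$. Consequently $\hat{\mathcal{P}}$ is a connected component of the graph of $\mathbf{S}'$ if and only if no thresholded edge joins $\hat{\mathcal{P}}$ to $\hat{\mathcal{P}}^c$, i.e. $|\mathbf{R}_{i',j'}(\mathbf{S}')|\leqslant c$ for every $i'\in\hat{\mathcal{P}}$, $j'\in\hat{\mathcal{P}}^c$, which is precisely $c\geqslant\max_{i'\in\hat{\mathcal{P}},\,j'\in\hat{\mathcal{P}}^c}|\mathbf{R}_{i',j'}(\mathbf{S}')|$. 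Finally I would check that ``$\hat{\mathcal{P}}$ is a connected component of the graph of $\mathbf{S}'$'' coincides with ``$\hat{\mathcal{P}}\in\mathcal{P}(\mathbf{S}')$'': in the unordered branch this is immediate, and in the ordered branch it holds because $\hat{\mathcal{P}}=\{1,\dots,\mathrm{card}(\hat{\mathcal{P}})\}$ under the standing normalization consists of consecutive indices and the perturbation does not relabel variables, so $\hat{\mathcal{P}}$ survives the filter in line 5 exactly when it is a connected component. Chaining these equivalences yields the stated description of $\mathcal{G}(\mathbf{x};\hat{\mathcal{P}})$.

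I expect the proof to be essentially bookkeeping rather than hard analysis, and the one place where genuine care is needed is where the hypothesis $\hat{\mathcal{P}}\in\mathcal{P}(\mathbf{S}(\mathbf{x}))$ is used: without it, the cross-block condition $c\geqslant\max_{i',j'}|\mathbf{R}_{i',j'}(\mathbf{S}')|$ would not be sufficient, since $\hat{\mathcal{P}}$ could still fail to be a connected component if its induced subgraph were disconnected. So the main obstacle is to make the ``connected component $\iff$ connected induced subgraph plus no crossing edge'' decomposition precise and to track that both the connectedness of the induced subgraph and (in the ordered case) the consecutiveness of $\hat{\mathcal{P}}$ are inherited from $\mathbf{S}(\mathbf{x})$, precisely because the perturbation touches only the off-diagonal block and does not permute indices.
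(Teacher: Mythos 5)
Your proposal is correct and follows essentially the same route as the paper: both reduce $\hat{\mathcal{P}}\in\mathcal{P}(\mathbf{S}')$ to the cross-block threshold inequality using the fact that the perturbation in \eqref{eqn:new_S} leaves the diagonal blocks (hence all within-block correlations) untouched, and both invoke the hypothesis $\hat{\mathcal{P}}\in\mathcal{P}(\mathbf{S}(\mathbf{x}))$ for the sufficiency direction and for consecutiveness in the ordered case. The only (immaterial) difference is that the paper establishes the stronger fact $\mathbf{D}(\mathbf{S}')=\mathbf{D}(\mathbf{S}(\mathbf{x}))$, so the entire partition is unchanged, whereas you argue locally that $\hat{\mathcal{P}}$ remains a connected component via ``connected induced subgraph plus no crossing edge.''
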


\noindent The proof is in Supplementary Material Section \ref{Appendix:conditioning_set}.
The next proposition shows that the inter-group correlations appearing in the above expression are in fact linear combinations of $\lambda_1, \lambda_2, \ldots, \lambda_{r\left(\hat{\mathcal{P}}\right)}$.

\begin{proposition}
\label{prop:linear1}
The $\mathrm{card}\left(\hat{\mathcal{P}}\right) \times \left(p - \mathrm{card}\left(\hat{\mathcal{P}}\right) \right)$ submatrix $\mathbf{R}_{\hat{\mathcal{P}}, \hat{\mathcal{P}}^c}\left(\mathbf S'\left(\mathbf x, \hat{\mathcal{P}}, \left\{\lambda_i\right\}_{i= 1}^{r\left(\hat{\mathcal{P}}\right)}\right)\right)$ is linear in $\left(\lambda_1, \lambda_2, \ldots, \lambda_{r\left(\hat{\mathcal{P}}\right)}\right)$. 
\end{proposition}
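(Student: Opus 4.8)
The plan is to reduce the claim to a single bookkeeping observation: the perturbation in \eqref{eqn:new_S} modifies only the $\hat{\mathcal{P}}\times\hat{\mathcal{P}}^c$ block of the sample covariance matrix, so the marginal variances that normalize the cross-covariances into cross-correlations are held fixed; linearity then follows by inspecting \eqref{eqn:new_S} directly.

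First I would record that, by construction, $\mathbf S'\left(\mathbf x, \hat{\mathcal{P}}, \left\{\lambda_i\right\}_{i=1}^{r(\hat{\mathcal{P}})}\right)$ has diagonal blocks equal to $\mathbf S_{\hat{\mathcal{P}},\hat{\mathcal{P}}}(\mathbf x)$ and $\mathbf S_{\hat{\mathcal{P}}^c,\hat{\mathcal{P}}^c}(\mathbf x)$; in particular, for $i\in\hat{\mathcal{P}}$ and $j\in\hat{\mathcal{P}}^c$ the variances $\mathbf S'_{i,i}=\mathbf S_{i,i}(\mathbf x)$ and $\mathbf S'_{j,j}=\mathbf S_{j,j}(\mathbf x)$ do not depend on $\left(\lambda_1,\ldots,\lambda_{r(\hat{\mathcal{P}})}\right)$. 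Hence
\[
\mathbf R_{\hat{\mathcal{P}},\hat{\mathcal{P}}^c}\!\left(\mathbf S'\left(\mathbf x, \hat{\mathcal{P}}, \left\{\lambda_i\right\}_{i=1}^{r(\hat{\mathcal{P}})}\right)\right)= \mathbf D_1^{-1/2}\,\mathbf S'_{\hat{\mathcal{P}},\hat{\mathcal{P}}^c}\!\left(\mathbf x, \hat{\mathcal{P}}, \left\{\lambda_i\right\}_{i=1}^{r(\hat{\mathcal{P}})}\right)\,\mathbf D_2^{-1/2},
\]
where $\mathbf D_1:=\mathrm{diag}\!\left(\mathbf S_{\hat{\mathcal{P}},\hat{\mathcal{P}}}(\mathbf x)\right)$ and $\mathbf D_2:=\mathrm{diag}\!\left(\mathbf S_{\hat{\mathcal{P}}^c,\hat{\mathcal{P}}^c}(\mathbf x)\right)$ are fixed positive definite diagonal matrices (well-defined because $\mathbf S(\mathbf x)$ is positive definite, almost surely, when $n>p$).

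Next I would expand \eqref{eqn:new_S}. Writing $\mathrm{diag}\left(\lambda_1,\ldots,\lambda_{r(\hat{\mathcal{P}})}\right)=\sum_{k=1}^{r(\hat{\mathcal{P}})}\lambda_k\,\mathbf e_k\mathbf e_k^\top$, equation \eqref{eqn:new_S} reads
\[
\mathbf S'_{\hat{\mathcal{P}},\hat{\mathcal{P}}^c}\!\left(\mathbf x, \hat{\mathcal{P}}, \left\{\lambda_i\right\}_{i=1}^{r(\hat{\mathcal{P}})}\right)=\sum_{k=1}^{r(\hat{\mathcal{P}})}\lambda_k\,\mathbf M_k,\qquad \mathbf M_k:=\left(\mathbf S_{\hat{\mathcal{P}},\hat{\mathcal{P}}}(\mathbf x)\right)^{1/2}\hat{\mathbf A}^{\hat{\mathcal{P}}}(\mathbf x)\,\mathbf e_k\mathbf e_k^\top\left(\hat{\mathbf{\Gamma}}^{\hat{\mathcal{P}}}(\mathbf x)\right)^\top\left(\mathbf S_{\hat{\mathcal{P}}^c,\hat{\mathcal{P}}^c}(\mathbf x)\right)^{1/2},
\]
a linear combination, with coefficients $\lambda_1,\ldots,\lambda_{r(\hat{\mathcal{P}})}$, of the fixed rank-one matrices $\mathbf M_k$ (each $\mathbf M_k$ being the outer product of the $k$th column of $\left(\mathbf S_{\hat{\mathcal{P}},\hat{\mathcal{P}}}(\mathbf x)\right)^{1/2}\hat{\mathbf A}^{\hat{\mathcal{P}}}(\mathbf x)$ with the $k$th column of $\left(\mathbf S_{\hat{\mathcal{P}}^c,\hat{\mathcal{P}}^c}(\mathbf x)\right)^{1/2}\hat{\mathbf{\Gamma}}^{\hat{\mathcal{P}}}(\mathbf x)$). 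Substituting into the previous display yields $\mathbf R_{\hat{\mathcal{P}},\hat{\mathcal{P}}^c}(\mathbf S')=\sum_{k=1}^{r(\hat{\mathcal{P}})}\lambda_k\left(\mathbf D_1^{-1/2}\mathbf M_k\mathbf D_2^{-1/2}\right)$, which is linear — indeed homogeneous linear — in $\left(\lambda_1,\ldots,\lambda_{r(\hat{\mathcal{P}})}\right)$, as claimed.

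I do not expect a genuine obstacle here; the only point that needs care is the first step, i.e.\ noting that \eqref{eqn:new_S} leaves the marginal variances (hence the denominators of the cross-correlations) unchanged. After that, linearity of \eqref{eqn:new_S} in $\mathrm{diag}\left(\lambda_1,\ldots,\lambda_{r(\hat{\mathcal{P}})}\right)$ together with the invariance of linearity under left/right multiplication by the fixed matrices $\mathbf D_1^{-1/2}$ and $\mathbf D_2^{-1/2}$ closes the argument. It will be convenient to state the conclusion in coordinate form — $\mathbf R_{i',j'}\!\left(\mathbf S'\right)=\sum_{k=1}^{r(\hat{\mathcal{P}})}\lambda_k\,\beta_{i',j',k}$ for explicit constants $\beta_{i',j',k}$ depending only on $\mathbf x$ and $\hat{\mathcal{P}}$ — since this is the form needed to turn \eqref{eqn:polytope_proof} into an explicit polytope-type description of ${\mathcal{G}}\left(\mathbf x;\hat{\mathcal{P}}\right)$.
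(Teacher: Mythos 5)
Your proposal is correct and follows essentially the same route as the paper's proof: both rest on the observation that \eqref{eqn:new_S} perturbs only the off-diagonal block, so the diagonal entries (hence the normalizing variances) are free of $\left(\lambda_1,\ldots,\lambda_{r(\hat{\mathcal{P}})}\right)$, and the cross-correlation block is the cross-covariance block pre- and post-multiplied by fixed diagonal matrices. Your explicit rank-one decomposition $\mathrm{diag}\left(\lambda_1,\ldots,\lambda_{r(\hat{\mathcal{P}})}\right)=\sum_k \lambda_k\,\mathbf e_k\mathbf e_k^\top$ simply makes concrete the linearity in $\bm\lambda$ that the paper asserts directly from \eqref{eqn:new_S}, and usefully exhibits the coefficients needed later for Theorem \ref{thm:character}.
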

\noindent The proof of this proposition is provided in Supplementary Material Section \ref{Appendix:prop_linear1}. In the next theorem, we show that $   {\mathcal{G}}\left(\mathbf x; \hat{\mathcal{P}} \right) $ in \eqref{eqn:conditioning_set} can be rewritten as a closed convex polytope in $\mathbb R^{r\left(\hat{\mathcal{P}}\right)}$.
\begin{theorem}
\label{thm:character}
For $\hat{\mathcal P} \in \mathcal{P}\left( \mathbf S \left( \mathbf x\right)\right)$ corresponding to the threshold $c$ in Algorithm \ref{algo:Selection_procedure}, there exist $ \mathbf{L} = \mathbf{L} \left(\mathbf S\left(\mathbf x\right), \hat{\mathcal P}\right) \in \mathbb R^{\left(2 \mathrm{card}\left(\hat{\mathcal{P}}\right)\left(p - \mathrm{card}\left(\hat{\mathcal{P}}\right) \right) + r\left(\hat{\mathcal{P}}\right) 
+ 1\right) \times r\left(\hat{\mathcal{P}}\right)}$ and $\mathbf g = \mathbf g\left(\hat{\mathcal P}, c\right) \in$ \\$ \mathbb R ^{2 \mathrm{card}\left(\hat{\mathcal{P}}\right)\left(p - \mathrm{card}\left(\hat{\mathcal{P}}\right) \right) + r\left(\hat{\mathcal{P}}\right) 
+ 1}$ such that 

\begin{equation}
    \label{eqn:character}
    {\mathcal{G}}\left(\mathbf x; \hat{\mathcal{P}} \right) = \left\{ \bm{\lambda} = \left(\lambda_1, \lambda_2, \ldots, \lambda_{r\left(\hat{\mathcal{P}}\right)}\right)^\top \in \mathbb R^{r\left(\hat{\mathcal{P}}\right)}  : \mathbf{L} \bm{\lambda} \leqslant \mathbf{g}  \right\}.
\end{equation}
\end{theorem}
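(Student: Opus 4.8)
The plan is to assemble the polytope from the two facts already in hand: Proposition~\ref{prop:reformulate_conditioning_set}, which expresses ${\mathcal{G}}\left(\mathbf x; \hat{\mathcal{P}} \right)$ as the intersection of the ordering constraints, the box $[0,1]^{r\left(\hat{\mathcal{P}}\right)}$, and the single constraint $c \geqslant \max_{i' \in \hat{\mathcal P}, j' \in \hat{\mathcal P}^c}\left|\mathbf R_{i',j'}\left(\mathbf S'\right)\right|$; and Proposition~\ref{prop:linear1}, which says each entry $\mathbf R_{i',j'}\left(\mathbf S'\left(\mathbf x, \hat{\mathcal{P}}, \left\{\lambda_i\right\}\right)\right)$ is an affine function of $\bm\lambda = \left(\lambda_1, \ldots, \lambda_{r\left(\hat{\mathcal{P}}\right)}\right)$. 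So I would write each such entry as $\mathbf R_{i',j'}\left(\mathbf S'\right) = \mathbf b_{i',j'}^\top \bm\lambda + d_{i',j'}$ for suitable $\mathbf b_{i',j'} \in \mathbb R^{r\left(\hat{\mathcal{P}}\right)}$ and scalar $d_{i',j'}$ determined by $\mathbf S\left(\mathbf x\right)$ and $\hat{\mathcal{P}}$ (reading off the coefficients from the explicit formula \eqref{eqn:new_S} together with the definition $\mathbf R_{i',j'} = \mathbf S'_{i',j'}/\sqrt{\mathbf S'_{i',i'}\mathbf S'_{j',j'}}$; note the diagonal entries $\mathbf S'_{i',i'}$, $\mathbf S'_{j',j'}$ are unchanged from $\mathbf S\left(\mathbf x\right)$ since only the off-diagonal block is perturbed, so the denominator is a fixed positive constant and linearity of the numerator passes through to $\mathbf R_{i',j'}$).

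Next I would rewrite each piece of \eqref{eqn:polytope_proof} as a batch of inequalities $\mathbf L_{(\cdot)} \bm\lambda \leqslant \mathbf g_{(\cdot)}$. The constraint $c \geqslant \left|\mathbf R_{i',j'}\left(\mathbf S'\right)\right|$ for a fixed pair $(i',j')$ becomes the two affine inequalities $\mathbf b_{i',j'}^\top \bm\lambda \leqslant c - d_{i',j'}$ and $-\mathbf b_{i',j'}^\top \bm\lambda \leqslant c + d_{i',j'}$, and requiring the \emph{maximum} over all $i' \in \hat{\mathcal P}$, $j' \in \hat{\mathcal P}^c$ to be at most $c$ is just the conjunction of these over all $\mathrm{card}\left(\hat{\mathcal{P}}\right)\left(p - \mathrm{card}\left(\hat{\mathcal{P}}\right)\right)$ pairs --- hence $2\,\mathrm{card}\left(\hat{\mathcal{P}}\right)\left(p - \mathrm{card}\left(\hat{\mathcal{P}}\right)\right)$ rows. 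The ordering constraints $\lambda_1 \geqslant \lambda_2 \geqslant \cdots \geqslant \lambda_{r\left(\hat{\mathcal{P}}\right)} \geqslant 0$ contribute $r\left(\hat{\mathcal{P}}\right)$ rows of the form $\lambda_{i+1} - \lambda_i \leqslant 0$ for $i = 1, \ldots, r\left(\hat{\mathcal{P}}\right)-1$ together with $-\lambda_{r\left(\hat{\mathcal{P}}\right)} \leqslant 0$. That leaves exactly one more row in the advertised count $2\,\mathrm{card}\left(\hat{\mathcal{P}}\right)\left(p - \mathrm{card}\left(\hat{\mathcal{P}}\right)\right) + r\left(\hat{\mathcal{P}}\right) + 1$, which I would use for $\lambda_1 \leqslant 1$; one then checks that the remaining box constraints $\lambda_i \geqslant 0$ and $\lambda_i \leqslant 1$ for $i \geqslant 2$ are implied by the ordering together with $\lambda_1 \leqslant 1$ and $\lambda_{r\left(\hat{\mathcal{P}}\right)} \geqslant 0$, so they need not be listed separately. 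Stacking these blocks gives $\mathbf L$ and $\mathbf g$ of the stated dimensions, with $\mathbf L$ depending on $\mathbf S\left(\mathbf x\right)$ and $\hat{\mathcal P}$ (through the $\mathbf b_{i',j'}$ and the sign/difference patterns) and $\mathbf g$ depending on $\hat{\mathcal P}$ and $c$ (through $c \pm d_{i',j'}$, the zeros, and the $1$) --- consistent with the functional dependence asserted in the theorem.

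Finally I would verify the set equality \eqref{eqn:character}: a point $\bm\lambda \in \mathbb R^{r\left(\hat{\mathcal{P}}\right)}$ satisfies $\mathbf L \bm\lambda \leqslant \mathbf g$ if and only if it satisfies all the inequalities above, which by construction is equivalent to membership in the set on the right-hand side of \eqref{eqn:polytope_proof}, i.e. in ${\mathcal{G}}\left(\mathbf x; \hat{\mathcal{P}} \right)$ by Proposition~\ref{prop:reformulate_conditioning_set}. I expect the only real work to be bookkeeping: pinning down $\mathbf b_{i',j'}$ and $d_{i',j'}$ explicitly from \eqref{eqn:new_S}, and checking that the dropped box constraints are genuinely redundant so that the row count comes out to exactly $2\,\mathrm{card}\left(\hat{\mathcal{P}}\right)\left(p - \mathrm{card}\left(\hat{\mathcal{P}}\right)\right) + r\left(\hat{\mathcal{P}}\right) + 1$ rather than merely at most that. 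There is no analytic obstacle --- convexity and closedness are automatic for a finite system of non-strict affine inequalities --- so the ``hard part'' is really just presenting the affine parametrization of $\mathbf R_{\hat{\mathcal{P}}, \hat{\mathcal{P}}^c}\left(\mathbf S'\right)$ cleanly and accounting for every row.
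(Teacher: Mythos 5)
Your proposal is correct and matches the paper's proof essentially line for line: both start from Proposition \ref{prop:reformulate_conditioning_set} to reduce ${\mathcal{G}}\left(\mathbf x; \hat{\mathcal{P}}\right)$ to the ordering/box constraints plus the thresholded inter-group correlations, use Proposition \ref{prop:linear1} to linearize each $\mathbf R_{i',j'}$, split each absolute-value constraint into two rows (giving $2\,\mathrm{card}(\hat{\mathcal P})(p-\mathrm{card}(\hat{\mathcal P}))$ of them), and append $\lambda_1 \leqslant 1$, the $r(\hat{\mathcal P})-1$ ordering rows, and $-\lambda_{r(\hat{\mathcal P})} \leqslant 0$ to reach the advertised row count. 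One small tidy-up: the perturbed off-diagonal block in \eqref{eqn:new_S} is \emph{homogeneous} linear in $\bm\lambda$, so your intercepts $d_{i',j'}$ are all zero --- which is exactly what makes $\mathbf g$ a function of $(\hat{\mathcal P}, c)$ alone, as the theorem asserts; with a genuinely nonzero $d_{i',j'}$ your $\mathbf g$ would depend on $\mathbf S(\mathbf x)$.
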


\noindent Thus, ${\mathcal{G}}\left(\mathbf x; \hat{\mathcal{P}} \right)$ in \eqref{eqn:conditioning_set} is a convex polytope. The proof of this theorem is provided in  Supplementary Material Section \ref{Appendix:character}.

\section{Computation of $p\left(\mathbf x; \hat{\mathcal{P}}\right)$ in \eqref{eqn:thm_main}}
\label{sec:summary}

We observe that $p\left(\mathbf x; \hat{\mathcal{P}}\right)$ in \eqref{eqn:thm_main} can be written as
\begin{align*}
p\left(\mathbf x; \hat{\mathcal{P}}\right) 
&= \frac{\int_{\left\{\bm{\lambda} \in {\mathcal{G}}\left(\mathbf x; \hat{\mathcal{P}} \right)\right\} \cap \left\{ \prod_{i= 1}^{r\left({\hat{\mathcal P}}\right)}  \left(1 - \lambda_i^2\right) \leqslant  \prod_{i = 1}^{r\left({\hat{\mathcal P}}\right)} \left(1 - \left(\hat\lambda_i^{\hat{\mathcal P}}\left(\mathbf x\right)\right)^2\right)\right\} }f\left(\lambda_1, \lambda_2, \ldots, \lambda_{r\left(\hat{\mathcal{P}}\right)}\right)d\lambda_1d\lambda_2\ldots d\lambda_{r\left(\hat{\mathcal{P}}\right)} }{\int_{\bm{\lambda} \in {\mathcal{G}}\left(\mathbf x; \hat{\mathcal{P}} \right)} f\left(\lambda_1, \lambda_2, \ldots, \lambda_{r\left(\hat{\mathcal{P}}\right)}\right)d\lambda_1d\lambda_2\ldots d\lambda_{r\left(\hat{\mathcal{P}}\right)}}\\
&= \frac{\int_{\left\{\mathbf{L}\bm{\lambda} \leqslant \mathbf g\right\} \cap \left\{ \prod_{i= 1}^{r\left({\hat{\mathcal P}}\right)}  \left(1 - \lambda_i^2\right) \leqslant  \prod_{i = 1}^{r\left({\hat{\mathcal P}}\right)} \left(1 - \left(\hat\lambda_i^{\hat{\mathcal P}}\left(\mathbf x\right)\right)^2\right)\right\} }f\left(\lambda_1, \lambda_2, \ldots, \lambda_{r\left(\hat{\mathcal{P}}\right)}\right)d\lambda_1d\lambda_2\ldots d\lambda_{r\left(\hat{\mathcal{P}}\right)} }{\int_{\mathbf{L}\bm{\lambda} \leqslant \mathbf g} f\left(\lambda_1, \lambda_2, \ldots, \lambda_{r\left(\hat{\mathcal{P}}\right)}\right)d\lambda_1d\lambda_2\ldots d\lambda_{r\left(\hat{\mathcal{P}}\right)}}, \numberthis \label{eqn:probability}
\end{align*}
where $f\left(\lambda_1, \lambda_2, \ldots, \lambda_{r\left(\hat{\mathcal{P}}\right)}\right)$, introduced in \eqref{eqn:density_singular_values}, is the joint density of the canonical correlations if $\hat{\mathcal P}$ is prespecified, and the last equality follows from \eqref{eqn:character}. 

\subsection{Simple computation when $r\left( \hat{\mathcal P}\right) = 1$}\label{sec:mainrp=1}
In Supplementary Material Section \ref{Appendix:singleton_beta}, we show that if $r\left(\hat{\mathcal{P}}\right) = 1$, then $p\left(\mathbf x; \hat{\mathcal{P}}\right) $ in \eqref{eqn:probability} can be written in terms of the cumulative distribution function of the univariate beta distribution:

\begin{proposition}
\label{prop:univariate_beta}
If $r\left(\hat{\mathcal{P}}\right) = 1$, then there exists $g_u = g_u\left(\mathbf S\left(\mathbf x \right),\hat{\mathcal{P}}, c\right) \in [0, 1]$ such that $p\left(\mathbf x; \hat{\mathcal{P}}\right) $ in \eqref{eqn:probability} can be written as function of truncated beta distributions, as follows: 

\begin{equation}
\label{eqn:univariate_beta_proposition}
    p\left(\mathbf x; \hat{\mathcal{P}}\right) = \frac{F(g_u) - F\left(\min\left\{g_u, \left(\hat\lambda_1^{\hat{\mathcal P}}\left(\mathbf x\right)\right)^2\right\}\right)} {F(g_u) - F(0)},
\end{equation}
where $F$ is the cumulative distribution function of a $\mathrm{Beta}\left(\frac{p - 1}{2}, \frac{n - p}{2}\right)$ distribution.
\end{proposition}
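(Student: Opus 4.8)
\textbf{Proof proposal for Proposition~\ref{prop:univariate_beta}.}
The plan is to specialize the general formula \eqref{eqn:probability} to the case $r(\hat{\mathcal{P}}) = 1$, in which $\bm{\lambda}$ collapses to a single scalar $\lambda_1 \in [0,1]$ and all the integrals become one-dimensional. First I would note that, by Proposition~\ref{prop:density_singular_values}(iii), when $r(\hat{\mathcal{P}}) = 1$ the quantity $\lambda_1^2$ has a $\mathrm{Beta}\!\left(\frac{p-1}{2}, \frac{n-p}{2}\right)$ distribution, so the density $f(\lambda_1)$ appearing in \eqref{eqn:probability} is, after the change of variables $u = \lambda_1^2$, exactly the density of $F$; hence every integral over an interval of $\lambda_1$ can be rewritten as a difference of values of $F$ evaluated at the corresponding squared endpoints.

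Next I would analyze the two constraint sets in the numerator and denominator of \eqref{eqn:probability}. By Theorem~\ref{thm:character}, the set $\mathcal{G}(\mathbf x; \hat{\mathcal{P}})$ is a polytope $\{\mathbf L \lambda_1 \leqslant \mathbf g\}$ in $\mathbb R^1$; a polytope in one dimension is just an interval, and since $\mathcal{G} \subseteq [0,1]$ and always contains the observed value $\hat\lambda_1^{\hat{\mathcal P}}(\mathbf x)$ (which satisfies the selection event), this interval is of the form $[\ell, u]$. I would argue that the lower endpoint can be taken to be $0$: because $\mathbf S'$ with $\lambda_1 = 0$ makes the off-diagonal block vanish entirely, the max of the inter-group correlations in \eqref{eqn:polytope_proof} is then $0 \leqslant c$, so $\lambda_1 = 0$ always lies in $\mathcal{G}$, and by convexity (interval structure) the set is $[0, u]$ for some upper limit; writing $g_u := u^2 \in [0,1]$ gives the stated constant $g_u = g_u(\mathbf S(\mathbf x), \hat{\mathcal{P}}, c)$. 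For the numerator, the extra constraint $\prod_i (1 - \lambda_i^2) \leqslant \prod_i (1 - (\hat\lambda_i^{\hat{\mathcal P}}(\mathbf x))^2)$ reduces, for $r(\hat{\mathcal{P}}) = 1$, to $1 - \lambda_1^2 \leqslant 1 - (\hat\lambda_1^{\hat{\mathcal P}}(\mathbf x))^2$, i.e. $\lambda_1^2 \geqslant (\hat\lambda_1^{\hat{\mathcal P}}(\mathbf x))^2$. Intersecting $[0, u]$ (in the $\lambda_1$ scale) with $\{\lambda_1^2 \geqslant (\hat\lambda_1^{\hat{\mathcal P}}(\mathbf x))^2\}$ and passing to the $u = \lambda_1^2$ scale gives the interval $[\min\{g_u, (\hat\lambda_1^{\hat{\mathcal P}}(\mathbf x))^2\},\, g_u]$, whose $F$-measure is $F(g_u) - F(\min\{g_u, (\hat\lambda_1^{\hat{\mathcal P}}(\mathbf x))^2\})$, while the denominator has $F$-measure $F(g_u) - F(0)$. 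Dividing yields \eqref{eqn:univariate_beta_proposition}.

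The step I expect to require the most care is verifying that $\mathcal{G}(\mathbf x; \hat{\mathcal{P}})$ really is a single interval with left endpoint $0$ — that is, that the polytope from Theorem~\ref{thm:character} does not exclude small positive values of $\lambda_1$ in a way that breaks the clean form. This follows from Proposition~\ref{prop:reformulate_conditioning_set} together with Proposition~\ref{prop:linear1}: each entry of $\mathbf R_{\hat{\mathcal P}, \hat{\mathcal P}^c}(\mathbf S')$ is linear in $\lambda_1$ and vanishes at $\lambda_1 = 0$, so $\max_{i',j'}|\mathbf R_{i',j'}(\mathbf S')|$ is a piecewise-linear, continuous, even function of $\lambda_1$ that is $0$ at the origin; hence $\{\lambda_1 \in [0,1] : \max_{i',j'}|\mathbf R_{i',j'}(\mathbf S')| \leqslant c\}$ contains a neighborhood of $0$ and, being the sublevel set of a convex function (a max of absolute values of linear functions) restricted to $[0,1]$, is an interval containing $0$. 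The only remaining bookkeeping is to confirm $g_u \geqslant (\hat\lambda_1^{\hat{\mathcal P}}(\mathbf x))^2$ is \emph{not} assumed — when $g_u < (\hat\lambda_1^{\hat{\mathcal P}}(\mathbf x))^2$ the $\min$ correctly returns $g_u$ and the numerator is $0$, which is consistent — and to record that $F(0) = 0$ so the denominator is simply $F(g_u)$; I would keep the $F(0)$ term for transparency as the proposition does.
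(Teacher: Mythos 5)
Your proposal is correct and follows essentially the same route as the paper: both reduce $\mathcal{G}(\mathbf x;\hat{\mathcal P})$ to an interval of the form $\{\lambda_1 \geqslant 0 : \lambda_1^2 \leqslant g_u\}$ by exploiting that the off-diagonal correlation block is proportional to $\lambda_1$, reduce the test-statistic constraint to $\lambda_1^2 \geqslant (\hat\lambda_1^{\hat{\mathcal P}}(\mathbf x))^2$, and pass to the $\mathrm{Beta}\left(\frac{p-1}{2},\frac{n-p}{2}\right)$ CDF via the change of variables $t=\lambda_1^2$ and Proposition~\ref{prop:density_singular_values}(iii). The only difference is that you obtain the interval structure abstractly from Theorem~\ref{thm:character} and Proposition~\ref{prop:linear1} (which suffices for the existence claim), whereas the paper computes the rank-one SVD explicitly to get the closed form $g_u=\min\bigl\{1,\bigl(c\hat\lambda_1^{\hat{\mathcal P}}(\mathbf x)/\max_{i',j'}|\mathbf R_{i',j'}(\mathbf S(\mathbf x))|\bigr)^2\bigr\}$, which is what is actually implemented.
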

In Proposition \ref{prop:univariate_beta}, $c$ is the threshold used in Algorithm \ref{algo:Selection_procedure}. 
\subsection{Numerical integration when $r\left( \hat{\mathcal{P}}\right)$ is small}\label{sec:mainrp_small}
 Evaluating \eqref{eqn:probability} is challenging, because we do not have a closed form expression for the integrals involved.  When $r\left( \hat{\mathcal{P}}\right)$ is a small number that exceeds one, we adopt methods of numerical integration on a convex polytope to evaluate \eqref{eqn:probability}. To evaluate the integral of a function over the convex polytope $\left\{ \bm\lambda : \mathbf L\bm\lambda \leqslant \mathbf g \right\}$, we first partition the polytope into simplices using Delaunay triangulation \citep{lee1980two}, next integrate the corresponding function over those simplices, and finally sum those integrated values. Details of this computation to approximate \eqref{eqn:probability} are provided in Supplementary Material Section \ref{Appendix:geometry_integration}.
\subsection{Monte Carlo approximation when $r\left( \hat{\mathcal P}\right)$ is large}\label{sec:mainrp_MC}
For moderately large values of $r\left(\hat{\mathcal{P}}\right)$, the approach in Section \ref{sec:mainrp_small} is computationally taxing or infeasible. Thus, we resort to Monte Carlo approximation. From \eqref{eqn:probability}, we have that
\begin{equation}
    \label{eqn:MC_base}
    p\left(\mathbf x; \hat{\mathcal{P}}\right) = \frac{\mathbb E_{ H_0^{\hat{\mathcal{P}}}}\left(\mathds{1}\left\{\prod_{i=1}^{r\left({\hat{\mathcal P}}\right)} \left(1 - \lambda_i^{2}\right) \leqslant \prod_{i = 1}^{r\left({\hat{\mathcal P}}\right)} \left(1 - \left(\hat\lambda_i^{\hat{\mathcal P}}\left(\mathbf x\right)\right)^2\right), \:\: \mathbf{L}{\bm\lambda} \leqslant \mathbf{g} \right\}\right)}{\mathbb E_{ H_0^{\hat{\mathcal{P}}}} \left( \mathds{1}\left\{\mathbf{L}{\bm\lambda} \leqslant \mathbf{g}\right\}\right)}.
\end{equation}

\noindent For $b = 1,2,\ldots, B$, we simulate $\bm{\lambda}^{\left(b\right)} = \left(\lambda_1^{\left(b\right)}, \lambda_2^{\left(b\right)}, \ldots, \lambda_{r\left(\hat{\mathcal{P}}\right)}^{\left(b\right)}\right)$ following (ii) in Proposition \ref{prop:density_singular_values}. Using \eqref{eqn:MC_base}, we then approximate $ p\left(\mathbf x; \hat{\mathcal{P}}\right) $  as

\begin{equation}
    \label{eqn:Naive_MC}
\begin{aligned}
p\left(\mathbf x; \hat{\mathcal{P}}\right) &\approx \hat p_{\mathrm{MC}}\left(\mathbf x; \hat{\mathcal{P}}\right) \\ &=\frac{\sum_{b = 1}^{B}\mathds{1}\left\{\prod_{i=1}^{r\left({\hat{\mathcal P}}\right)} \left(1 - \left(\lambda_i^{{\left(b\right)}}\right)^2\right) \leqslant \prod_{i = 1}^{r\left({\hat{\mathcal P}}\right)} \left(1 - \left(\hat\lambda_i^{\hat{\mathcal P}}\left(\mathbf x\right)\right)^2\right), \:\: \mathbf{L}{\bm\lambda^{\left(b\right)}} \leqslant \mathbf{g} \right\}}{    \sum_{b = 1}^{B} \mathds{1}\left\{\mathbf{L}{\bm\lambda^{\left(b\right)}} \leqslant \mathbf{g}\right\}}.
\end{aligned}
\end{equation}

The overall procedure for computing $ p\left(\mathbf x; \hat{\mathcal{P}}\right) $ is summarized in Algorithm \ref{algo:SI} of Supplementary Material Section \ref{Appendix:algorithm}.

\section{Simulation results}
\label{sec:simulation}

\subsection{Type I error under global null}
\label{sec:type_I}
We simulate data with unordered variables from \eqref{eqn:distribution_data} with $\bSigma = \bI$, so that $H_0^{\hat{\mathcal P}}$ holds for all partitions of the variables. We fix $p = 100$ and vary $n \in \left\{1.1p, 1.5p, 2p\right\}$. For each simulated
data set, we compute both the classical p-value $\left( p_{\mathrm{LRT}}\right)$ in \eqref{eqn:base_hyp_simplified} and the selective p-value in \eqref{eqn:selective_pvalue_definition} for the hypothesis $H_0^{\hat{\mathcal{P}}} : \bSigma_{\hat{\mathcal{P}}, \hat{\mathcal{P}}^c} = \mathbf 0$ for a randomly chosen $\hat{\mathcal{P}} \in \mathcal{P}\left(\bS\left(\bx\right)\right)$ with the procedure $\mathcal{P}\left(\cdot\right)$ defined in Section  \ref{subsection:thresholding_procedure} with the threshold $c = 0.2$. We approximate the selective p-value in \eqref{eqn:selective_pvalue_definition} as discussed in Algorithm \ref{algo:SI} of Supplementary Material Section \ref{Appendix:algorithm}. 

We now consider computing the classical p-value in \eqref{eqn:base_hyp_simplified}. Recall from Section \ref{sec:prob_def_hyp} that if $\hat{\mathcal P}$ is not a function of the data, then the test statistic in \eqref{eqn:test_stat_simplified} follows a Wilks' lambda distribution. However, to the best of our knowledge, an exact evaluation of Wilks' lambda distribution is not available in \texttt{R}. Hence we make use of Proposition \ref{prop:density_singular_values} to evaluate the classical p-value $p_{\mathrm{LRT}}$. For $r\left( \hat{\mathcal P}\right) = 1$, (iii) of Proposition \ref{prop:density_singular_values} gives us a closed form evaluation of $ p_{\mathrm{LRT}}$. For $r\left( \hat{\mathcal P}\right) > 1$, following Section \ref{sec:mainrp_MC}, we use a Monte Carlo approach based on (ii) of Proposition \ref{prop:density_singular_values} to approximate $p_{\mathrm{LRT}}$. 

Figure \ref{Fig:global_null_p_total_100} displays the QQ plots of the empirical distribution of the p-values against the Uniform$\left(0, 1\right)$ distribution, over $100,000$ simulated data set. The classical p-value is overly conservative, whereas the proposed selective p-value is well-calibrated.

\begin{figure}[t!]
    \centering
    \includegraphics[height=4.2in]{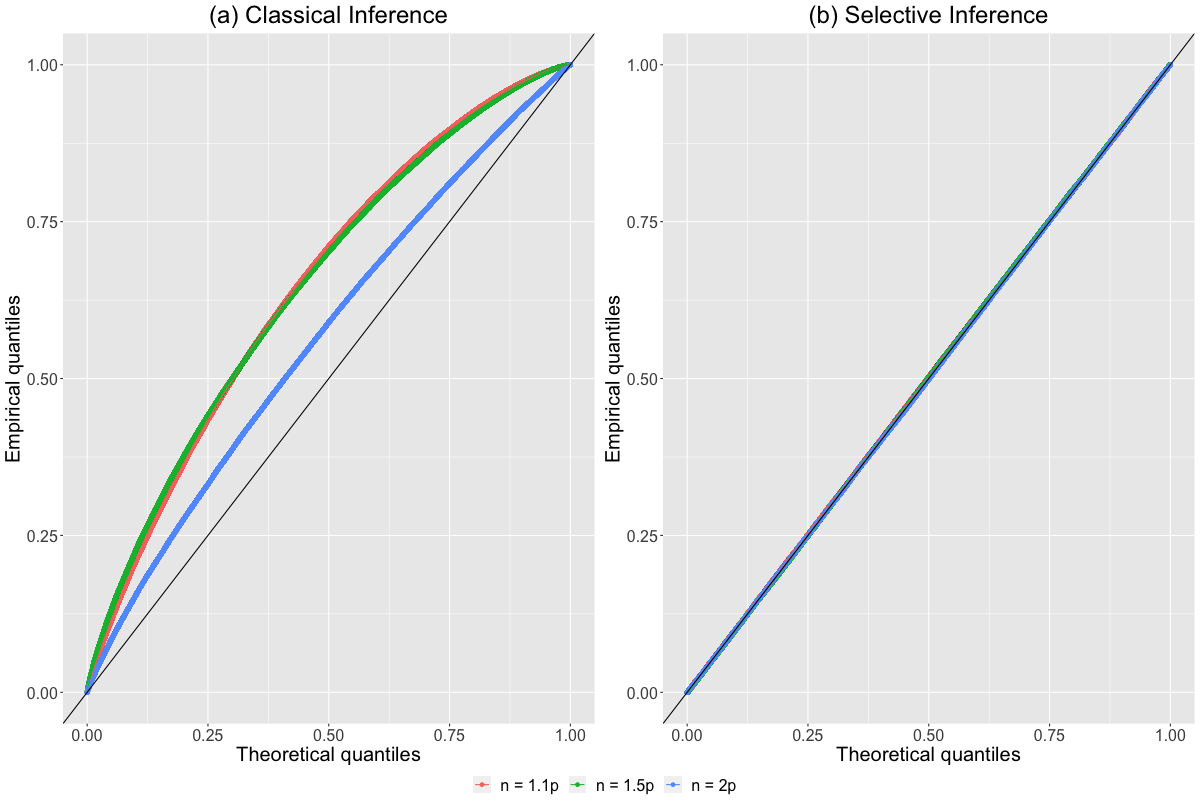}
    \caption{For 100,000 draws from \eqref{eqn:distribution_data}, with $\bm\Sigma = \mathbf I_{100}, n \in \left\{1.1p, 1.5p, 2p\right\}$, QQ-plots are displayed for: (a) the ``classical'' p-value in  \eqref{eqn:base_hyp_simplified}; (b) the ``selective'' p-value in \eqref{eqn:selective_pvalue_definition}, approximated as described in Section \ref{sec:summary}. }\label{Fig:global_null_p_total_100}
\end{figure}

\subsection{Power at $\alpha = 0.05$}
\label{sec:power}
We simulate data with unordered variables from \eqref{eqn:distribution_data}, with $p = 100$, $n \in \left\{1.1p, 1.5p, 2p\right\}$. For each simulated data set, we generate a random $p \times p$ positive definite matrix $\bSigma$ and a corresponding threshold $c = c\left(\bSigma\right)$ as described in Supplementary Material Section \ref{Appendix:Simulation}, and we randomly select $\hat{\mathcal{P}} \in \mathcal{P}\left(\bS\left(\bx\right)\right)$ with $\mathcal{P}\left(\cdot\right)$ defined in Section \ref{subsection:thresholding_procedure}. We test $H_0^{\hat{\mathcal{P}}} : \bSigma_{\hat{\mathcal{P}}, \hat{\mathcal{P}}^c} = \mathbf 0$ at significance
level $\alpha = 0.05$. Motivated by the test statistic in \eqref{eqn:lrt}, we define the effect size based on its population counterpart as $\Delta := 1 - \frac{\left|\bSigma\right|}{\left|\bSigma_{{\hat{\mathcal{P}}}, {\hat{\mathcal{P}}}}\right|\left|\bSigma_{{\hat{\mathcal{P}}^c}, {\hat{\mathcal{P}}^c}}\right|}$, and consider the
probability of rejecting $H_0^{\hat{\mathcal{P}}} : \bSigma_{\hat{\mathcal{P}}, \hat{\mathcal{P}}^c} = \mathbf 0$ as a function of $\Delta$ over $1,000,000$ simulated data sets. Figure \ref{Fig:power_total_100} shows that the proposed selective inference approach has higher power than classical inference for all values of $\Delta$ and $n$. 

\begin{figure}[t!]
    \centering
    \includegraphics[height=4.2in]{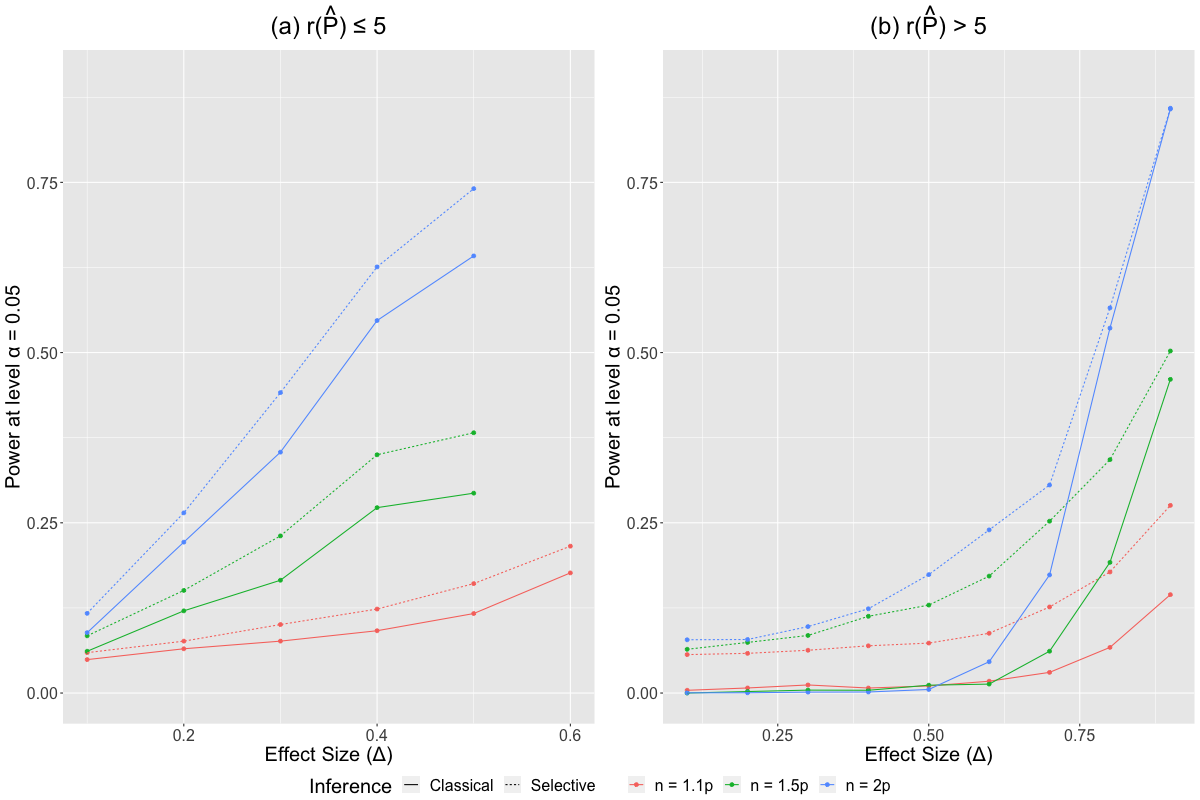}
    \caption{Power estimates as a function of the effect-size $\Delta$ for 1,000,000 draws from the simulation setup described in Section \ref{sec:power}, when (a) $ r\left(\hat{\mathcal{P}}\right) \leqslant 5$, (b) $r\left(\hat{\mathcal{P}}\right) > 5$, where $r\left(\hat{\mathcal{P}}\right) = \min\left\{\mathrm{card}\left({\hat{\mathcal{P}}}\right), \mathrm{card}\left(\hat{\mathcal{P}}^c\right) \right\}$. The x-axis displays bins of $\Delta$ of size $0.1$; only bins with more than $100$ simulated datasets are displayed. }\label{Fig:power_total_100}
\end{figure}

\section{Application to gene co-expression networks}
\label{sec:gene_coexpression}

A gene co-expression network is an undirected graph in which the nodes represent genes and the edges represent pairs of genes that are co-expressed, in the sense of e.g. Pearson correlation \citep{guttman2011lincrnas}. Although gene co-expression networks do not confer information about causality, they are frequently used as a starting point for investigating gene regulation, functional enrichment, and hub genes \citep{van2018gene}. Additionally, they have proven  useful in identifying disease genes through the  \textit{guilt-by-association} principle, enabling a better understanding of disease origin \citep{alsina2014narrow} and progression \citep{chaussabel2008modular}. For a detailed review of gene co-expression networks, we refer readers to \citet{saelens2018comprehensive} and references therein.

In this section, we focus on the gene expression data corresponding to \textit{E. coli} and  \textit{S. cerevisiae} from the DREAM5 network inference challenge \citep{marbach2012wisdom}. We will test for correlation between sets of genes that are known to be correlated based on outside biological knowledge. Our results indicate that our proposed selective inference approach has higher power than the classical inference approach. 

\subsection{\textit{Escherichia coli} and \textit{Saccharomyces cerevisiae} data sets}
The DREAM5 \textit{E. coli} data set consists of pre-processed gene expression measurements for $p = $ 4,297 genes on $n = $ 805 microarray chips \citep{marbach2012wisdom}. Furthermore, the \textit{E. coli} gene co-expression network is very well understood, and known edges can be found in publicly available databases, e.g. RegulonDB \citep{gama2010regulondb} and EcoCyc \citep{keseler2010ecocyc}. As part of the DREAM5 challenge, a highly curated co-expression network, based on this outside biological information, was made available. We treat this as a ``ground truth'' co-expression network in what follows. 

The DREAM5 \textit{S. cerevisiae} data set consists of $p = $ 5,667 pre-processed gene expression measurements on $n = $ 536 microarrays. While the \textit{S. cerevisiae} network is not as well understood as the \textit{E. coli} network, the DREAM5 organizers combined data from 16 sources to arrive at a single co-expression network \citep{macisaac2006improved}. We treat this network as the ``ground truth'' in what follows.

\subsection{Power to detect correlation between sets of genes}
For each data set, we identify the gene with the most edges in the ground truth co-expression network, and restrict our analysis to that hub gene and $99$ genes randomly sampled from the genes to which it is connected. This yields a set of $p = 100$ genes that form a single connected component in the ground truth co-expression network. By construction, for any $\hat{\mathcal P} \in \left\{1,2,\ldots, p \right\}$, the null hypothesis $H_0^{\hat{\mathcal{P}}} : \bSigma_{\hat{\mathcal{P}}, \hat{\mathcal{P}}^c} = \mathbf 0$ will not hold.

Next, we sample $n \in \left\{1.2p, 1.4p, 1.5p, 2p\right\}$ microarrays, and let $\bx$ denote the resulting $n \times p$ data set. We then compute $\mathcal{P}\left(\bS\left(\bx\right)\right)$ using the procedure in Section \ref{subsection:thresholding_procedure}.  To choose the threshold $c$, we apply the procedure described in Supplementary Material Section \ref{Appendix:Simulation} to the sample covariance matrix computed on the held-out microarrays $\left(\text{which we write as }\hat{\bS}_{holdout}\right)$, i.e. those not included in $\bx$. Since the microarrays are independent, this threshold $c$ is independent of $\bx$. For a randomly selected $\hat{\mathcal{P}} \in \mathcal{P}\left(\bS\left(\bx\right)\right)$, we test $H_0^{\hat{\mathcal{P}}} : \bSigma_{\hat{\mathcal{P}}, \hat{\mathcal{P}}^c} = \mathbf 0$ at significance level $\alpha = 0.05$, where $\bSigma$ denotes the population covariance matrix of the $p$ genes. We repeat this entire procedure 10,000 times for different random sets of $99$ neighbors of the hub gene.

Figures \ref{Fig:power_ecoli}(a) and \ref{Fig:power_scerv}(a) display the probability of rejecting $H_0^{\hat{\mathcal{P}}} : \bSigma_{\hat{\mathcal{P}}, \hat{\mathcal{P}}^c} = \mathbf 0$ as a function of $\hat\Delta := 1 - \frac{\left|\hat{\bS}_{holdout}\right|}{\left|\hat{\bS}_{{holdout}_{\hat{\mathcal{P}}, \hat{\mathcal{P}}}}\right|\left|\hat{\bS}_{{holdout}_{\hat{\mathcal{P}}^c, \hat{\mathcal{P}}^c}}\right|}$. In this setting, by construction, the $p = 100$ genes have extremely high correlation. For any $\hat{\mathcal P} \in \mathcal P \left(\bS \left( \bx\right) \right)$, the conditioning set in \eqref{eqn:thm_main} deals with a modified version $\bS' \left( \bx, \hat{\mathcal P}, \cdot\right)$ of the sample covariance matrix $\bS \left( \bx\right)$ with a perturbed off-diagonal block (recall Theorem \ref{thm:main}). Because the genes are highly correlated, for perturbations under $H_0^{\hat{\mathcal P}}$, with very high probability, we will have $\mathcal P \left(\bS' \left( \bx, \hat{\mathcal P}, \cdot\right)\right) = \mathcal P \left( \bS \left( \bx \right)\right)$. Hence the classical and selective approaches yield almost identical results.

To investigate the performance of the classical and selective approaches when the correlation between genes is lower, we add Gaussian white noise to each gene's expression, as follows: 

\begin{equation}
    \label{eqn:noise_realdata}
     \tilde{\mathbf x}_j = \mathbf x_j + \eta \bm\varepsilon_j, \:\:\: \bm\varepsilon_j \sim N\left(\mathbf 0, \widehat{\mathrm{var}}\left(\mathbf x_j\right)\mathbf I\right), \:\: j = 1, 2,\ldots, p.
\end{equation}
Here $\mathbf x_j \in \mathbb R^n$ is the expression vector for the $j^{th}$ gene, i.e. the $j^{th}$ column of $\bx$. We vary the noise level $\eta \in \left\{0.5, 3\right\}$. In Figures \ref{Fig:power_ecoli} and \ref{Fig:power_scerv}, we see that the power of both the classical and selective approaches decreases as $\eta$ increases. Moreover, as $\eta$ increases, the correlation between the genes decreases. Hence for perturbations under $H_0^{\hat{\mathcal P}}$, the probability of the event $\mathcal P \left(\bS' \left( \bx, \hat{\mathcal P}, \cdot\right)\right) = \mathcal P \left( \bS \left( \bx \right)\right)$ decreases. As a result, the difference between the classical and selective approaches increases, and the selective approach becomes more powerful than the classical approach. 

\begin{figure}[t!]
    \centering
    \includegraphics[height=2.2in]{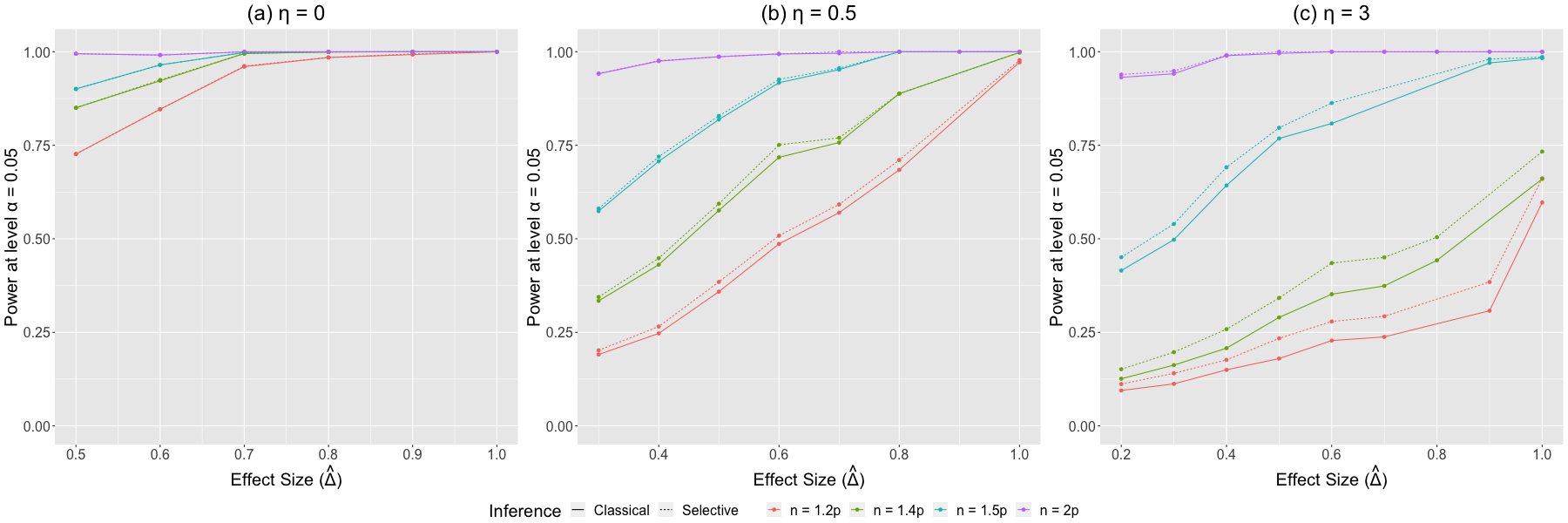}
    \caption{Power estimates as a function of $\hat\Delta$ for \textit{E. coli} for three values of $\eta$ in \eqref{eqn:noise_realdata}. (a) $ \eta = 0$, (b) $ \eta = 0.5$, (c) $ \eta = 3$  in \eqref{eqn:noise_realdata}. We only display bins of $\hat\Delta$ with more than 100 observations.}\label{Fig:power_ecoli}
\end{figure}

\begin{figure}[t!]
    \centering
    \includegraphics[height=2.2in]{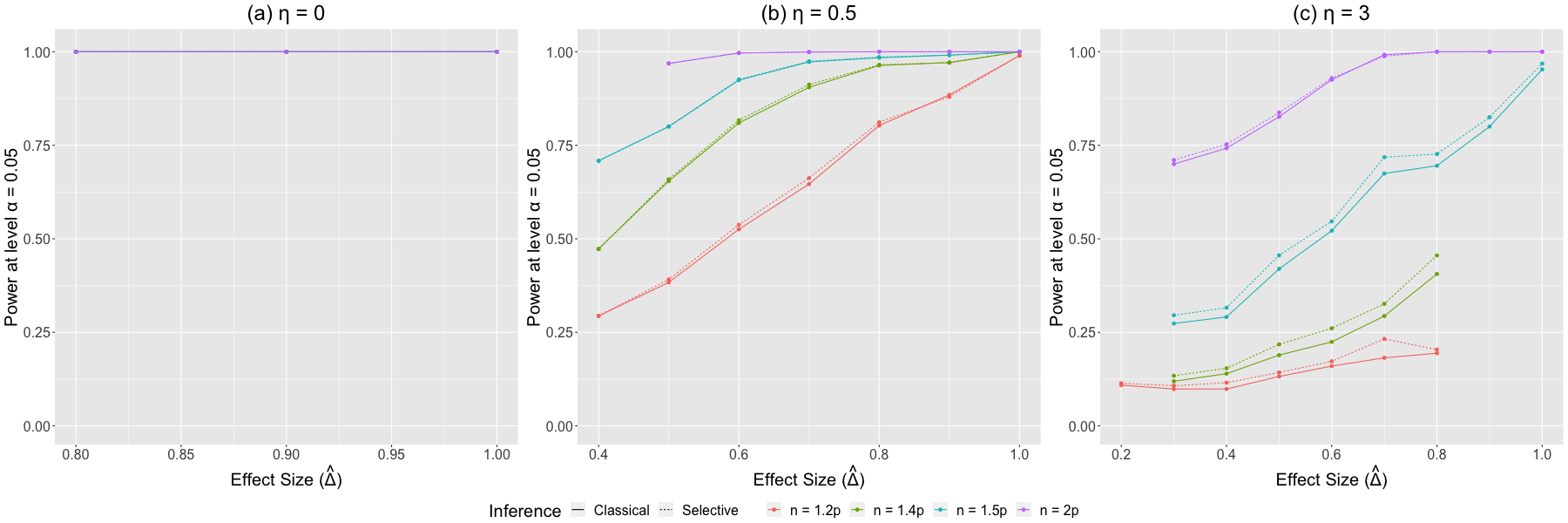}
    \caption{Power estimates as a function of $\hat\Delta$ for \textit{S. cerevisiae} for three values of $\eta$ in \eqref{eqn:noise_realdata}. (a) $ \eta = 0$, (b) $ \eta = 0.5$, (c) $ \eta = 3$  in \eqref{eqn:noise_realdata}. We only display bins of $\hat\Delta$ with more than 100 observations.}\label{Fig:power_scerv}
\end{figure}

\section{Discussion}
\label{sec:discussion}
Our proposed approach for testing dependence between data-driven groups of Gaussian variables requires the number of features to be less than the number of observations, i.e. $p < n$, in order for the test statistic to be defined. Future work could combine tools from high-dimensional inference with selective inference to account for selection in covariance matrices with $p \gg n$.

The normality assumption on the variables is needed to obtain Proposition \ref{prop:density_singular_values}, but is restrictive in practice. For example, RNA sequencing data involve nonnegative counts, which are better modeled with a Poisson \citep{witten2011classification} or negative binomial \citep{risso2018general} distribution. Extending the proposed selective inference approach to accommodate a more general class of distributions requires further investigation.  

An \texttt{R} package to implement the proposed approach called \texttt{independencepvalue} is available at \href{https://arkajyotisaha.github.io/independencepvalue-project}{https://arkajyotisaha.github.io/independencepvalue-project}.

\bibliographystyle{apalike}

\bibliography{arXiv_AS}

\begin{thebibliography}{}

\bibitem[Ali and Nagar, 2002]{ali2002null}
Ali, H. and Nagar, D.~K. (2002).
\newblock Null distribution of multiple correlation coefficient under mixture
  normal model.
\newblock {\em International Journal of Mathematics and Mathematical Sciences},
  30(4):249--255.

\bibitem[Alsina et~al., 2014]{alsina2014narrow}
Alsina, L., Israelsson, E., Altman, M.~C., Dang, K.~K., Ghandil, P., Israel,
  L., Von~Bernuth, H., Baldwin, N., Qin, H., Jin, Z., et~al. (2014).
\newblock A narrow repertoire of transcriptional modules responsive to pyogenic
  bacteria is impaired in patients carrying loss-of-function mutations in
  {MYD}88 or {IRAK}4.
\newblock {\em Nature Immunology}, 15(12):1134--1142.

\bibitem[Anderson, 2003]{anderson1962introduction}
Anderson, T.~W. (2003).
\newblock {\em An introduction to multivariate statistical analysis}.
\newblock Wiley, $3^{rd}$ edition.

\bibitem[Ben-Israel, 1999]{ben1999change}
Ben-Israel, A. (1999).
\newblock The change-of-variables formula using matrix volume.
\newblock {\em SIAM Journal on Matrix Analysis and Applications},
  21(1):300--312.

\bibitem[Bickel and Levina, 2008]{bickel2008covariance}
Bickel, P.~J. and Levina, E. (2008).
\newblock Covariance regularization by thresholding.
\newblock {\em Annals of Statistics}, 36(6):2577--2604.

\bibitem[Bien and Tibshirani, 2011]{bien2011sparse}
Bien, J. and Tibshirani, R.~J. (2011).
\newblock Sparse estimation of a covariance matrix.
\newblock {\em Biometrika}, 98(4):807--820.

\bibitem[Bordier et~al., 2017]{bordier2017graph}
Bordier, C., Nicolini, C., and Bifone, A. (2017).
\newblock Graph analysis and modularity of brain functional connectivity
  networks: searching for the optimal threshold.
\newblock {\em Frontiers in Neuroscience}, 11:441.

\bibitem[Cai and Yuan, 2012]{cai2012adaptive}
Cai, T.~T. and Yuan, M. (2012).
\newblock Adaptive covariance matrix estimation through block thresholding.
\newblock {\em Annals of Statistics}, 40(4):2014--2042.

\bibitem[Chaussabel et~al., 2008]{chaussabel2008modular}
Chaussabel, D., Quinn, C., Shen, J., Patel, P., Glaser, C., Baldwin, N.,
  Stichweh, D., Blankenship, D., Li, L., Munagala, I., et~al. (2008).
\newblock A modular analysis framework for blood genomics studies: application
  to systemic lupus erythematosus.
\newblock {\em Immunity}, 29(1):150--164.

\bibitem[Chen and Bien, 2020]{chen2020valid}
Chen, S. and Bien, J. (2020).
\newblock Valid inference corrected for outlier removal.
\newblock {\em Journal of Computational and Graphical Statistics},
  29(2):323--334.

\bibitem[Chen and Witten, 2022]{chen2022selective}
Chen, Y.~T. and Witten, D.~M. (2022).
\newblock Selective inference for k-means clustering.
\newblock {\em arXiv preprint arXiv:2203.15267}.

\bibitem[Drton and Perlman, 2008]{drton2008sinful}
Drton, M. and Perlman, M.~D. (2008).
\newblock A {SIN}ful approach to gaussian graphical model selection.
\newblock {\em Journal of Statistical Planning and Inference},
  138(4):1179--1200.

\bibitem[Fithian et~al., 2014]{fithian2014optimal}
Fithian, W., Sun, D., and Taylor, J. (2014).
\newblock Optimal inference after model selection.
\newblock {\em arXiv preprint arXiv:1410.2597}.

\bibitem[Freeman et~al., 2007]{freeman2007construction}
Freeman, T.~C., Goldovsky, L., Brosch, M., Van~Dongen, S., Mazi{\`e}re, P.,
  Grocock, R.~J., Freilich, S., Thornton, J., and Enright, A.~J. (2007).
\newblock Construction, visualisation, and clustering of transcription networks
  from microarray expression data.
\newblock {\em PLoS Computational Biology}, 3(10):e206.

\bibitem[Gama-Castro et~al., 2010]{gama2010regulondb}
Gama-Castro, S., Salgado, H., Peralta-Gil, M., Santos-Zavaleta, A.,
  Muniz-Rascado, L., Solano-Lira, H., Jimenez-Jacinto, V., Weiss, V.,
  Garcia-Sotelo, J.~S., Lopez-Fuentes, A., et~al. (2010).
\newblock Regulon{DB} version 7.0: transcriptional regulation of {E}scherichia
  coli {K}-12 integrated within genetic sensory response units ({G}ensor
  {U}nits).
\newblock {\em Nucleic Acids Research}, 39(suppl\_1):D98--D105.

\bibitem[Gao et~al., 2022]{gao2020selective}
Gao, L.~L., Bien, J., and Witten, D. (2022).
\newblock Selective inference for hierarchical clustering.
\newblock {\em Journal of the American Statistical Association (To appear in)}.

\bibitem[Geyer et~al., 2021]{rcdd}
Geyer, C.~J., Meeden, G.~D., and incorporates code from cddlib written~by
  Komei~Fukuda (2021).
\newblock {\em rcdd: Computational Geometry}.
\newblock R package version 1.5.

\bibitem[Guttman et~al., 2011]{guttman2011lincrnas}
Guttman, M., Donaghey, J., Carey, B.~W., Garber, M., Grenier, J.~K., Munson,
  G., Young, G., Lucas, A.~B., Ach, R., Bruhn, L., et~al. (2011).
\newblock linc{RNA}s act in the circuitry controlling pluripotency and
  differentiation.
\newblock {\em Nature}, 477(7364):295--300.

\bibitem[Habel et~al., 2019]{geometry}
Habel, K., Grasman, R., Gramacy, R.~B., Mozharovskyi, P., and Sterratt, D.~C.
  (2019).
\newblock {\em geometry: Mesh Generation and Surface Tessellation}.
\newblock R package version 0.4.5.

\bibitem[Hsu, 1939]{hsu1939distribution}
Hsu, P. (1939).
\newblock On the distribution of roots of certain determinantal equations.
\newblock {\em Annals of Eugenics}, 9(3):250--258.

\bibitem[Hyun et~al., 2018]{hyun2018exact}
Hyun, S., G'sell, M., and Tibshirani, R.~J. (2018).
\newblock Exact post-selection inference for the generalized lasso path.
\newblock {\em Electronic Journal of Statistics}, 12(1):1053--1097.

\bibitem[Jendoubi and Strimmer, 2019]{jendoubi2019whitening}
Jendoubi, T. and Strimmer, K. (2019).
\newblock A whitening approach to probabilistic canonical correlation analysis
  for omics data integration.
\newblock {\em BMC Bioinformatics}, 20(1):1--13.

\bibitem[Jewell et~al., 2022]{jewell2019testing}
Jewell, S., Fearnhead, P., and Witten, D. (2022).
\newblock Testing for a change in mean after changepoint detection.
\newblock {\em Journal of the Royal Statistical Society: Series B (Statistical
  Methodology) (To appear in)}.

\bibitem[Keseler et~al., 2010]{keseler2010ecocyc}
Keseler, I.~M., Collado-Vides, J., Santos-Zavaleta, A., Peralta-Gil, M.,
  Gama-Castro, S., Mu{\~n}iz-Rascado, L., Bonavides-Martinez, C., Paley, S.,
  Krummenacker, M., Altman, T., et~al. (2010).
\newblock Eco{C}yc: a comprehensive database of {E}scherichia coli biology.
\newblock {\em Nucleic Acids Research}, 39(suppl\_1):D583--D590.

\bibitem[Lam and Fan, 2009]{lam2009sparsistency}
Lam, C. and Fan, J. (2009).
\newblock Sparsistency and rates of convergence in large covariance matrix
  estimation.
\newblock {\em Annals of Statistics}, 37(6B):4254.

\bibitem[Lee and Schachter, 1980]{lee1980two}
Lee, D.-T. and Schachter, B.~J. (1980).
\newblock Two algorithms for constructing a {D}elaunay triangulation.
\newblock {\em International Journal of Computer \& Information Sciences},
  9(3):219--242.

\bibitem[Lee et~al., 2016]{lee2016exact}
Lee, J.~D., Sun, D.~L., Sun, Y., and Taylor, J.~E. (2016).
\newblock Exact post-selection inference, with application to the lasso.
\newblock {\em Annals of Statistics}, 44(3):907--927.

\bibitem[Lee and Taylor, 2014]{lee2014exact}
Lee, J.~D. and Taylor, J.~E. (2014).
\newblock Exact post model selection inference for marginal screening.
\newblock {\em arXiv preprint arXiv:1402.5596}.

\bibitem[Liu et~al., 2012]{liu2012high}
Liu, H., Han, F., Yuan, M., Lafferty, J., and Wasserman, L. (2012).
\newblock High-dimensional semiparametric {G}aussian copula graphical models.
\newblock {\em Annals of Statistics}, 40(4):2293--2326.

\bibitem[Liu et~al., 2014]{liu2014sparse}
Liu, H., Wang, L., and Zhao, T. (2014).
\newblock Sparse covariance matrix estimation with eigenvalue constraints.
\newblock {\em Journal of Computational and Graphical Statistics},
  23(2):439--459.

\bibitem[Loftus and Taylor, 2015]{loftus2015selective}
Loftus, J.~R. and Taylor, J.~E. (2015).
\newblock Selective inference in regression models with groups of variables.
\newblock {\em arXiv preprint arXiv:1511.01478}.

\bibitem[MacIsaac et~al., 2006]{macisaac2006improved}
MacIsaac, K.~D., Wang, T., Gordon, D.~B., Gifford, D.~K., Stormo, G.~D., and
  Fraenkel, E. (2006).
\newblock An improved map of conserved regulatory sites for {S}accharomyces
  cerevisiae.
\newblock {\em BMC Bioinformatics}, 7(1):1--14.

\bibitem[Marbach et~al., 2012]{marbach2012wisdom}
Marbach, D., Costello, J.~C., K{\"u}ffner, R., Vega, N.~M., Prill, R.~J.,
  Camacho, D.~M., Allison, K.~R., Kellis, M., Collins, J.~J., and Stolovitzky,
  G. (2012).
\newblock Wisdom of crowds for robust gene network inference.
\newblock {\em Nature Methods}, 9(8):796--804.

\bibitem[Mardia et~al., 1979]{kent1979multivariate}
Mardia, K., Kent, J., and Bibby, J. (1979).
\newblock {\em Multivariate Analysis}.
\newblock Probability and Mathematical Statistics. Academic Press, $1^{st}$
  edition.

\bibitem[Neufeld et~al., 2021]{neufeld2021tree}
Neufeld, A.~C., Gao, L.~L., and Witten, D.~M. (2021).
\newblock Tree-values: selective inference for regression trees.
\newblock {\em arXiv preprint arXiv:2106.07816}.

\bibitem[Nolan et~al., 2021]{SimplicialCubature}
Nolan, J.~P., with parts adapted~from Fortran, and code~by Alan~Genz, M.
  (2021).
\newblock {\em SimplicialCubature: Integration of Functions Over Simplices}.
\newblock R package version 1.3.

\bibitem[Patil et~al., 2001]{patil2001blocks}
Patil, N., Berno, A.~J., Hinds, D.~A., Barrett, W.~A., Doshi, J.~M., Hacker,
  C.~R., Kautzer, C.~R., Lee, D.~H., Marjoribanks, C., McDonough, D.~P., et~al.
  (2001).
\newblock Blocks of limited haplotype diversity revealed by high-resolution
  scanning of human chromosome 21.
\newblock {\em Science}, 294(5547):1719--1723.

\bibitem[Qiu and Joe, 2020]{clusterGenerationR}
Qiu, W. and Joe, H. (2020).
\newblock {\em clusterGeneration: Random Cluster Generation with Specified
  Degree of Separation}.
\newblock R package version 1.3.7.

\bibitem[Rennie, 2006]{rennie2006jacobian}
Rennie, J. (2006).
\newblock Jacobian of the singular value decomposition with application to the
  trace norm distribution.
\newblock {\em \url{http://qwone.com/~jason/writing/svdJacobian.pdf}}, 60.

\bibitem[Risso et~al., 2018]{risso2018general}
Risso, D., Perraudeau, F., Gribkova, S., Dudoit, S., and Vert, J.-P. (2018).
\newblock A general and flexible method for signal extraction from single-cell
  {RNA}-seq data.
\newblock {\em Nature Communications}, 9(1):1--17.

\bibitem[Rothman, 2012]{rothman2012positive}
Rothman, A.~J. (2012).
\newblock Positive definite estimators of large covariance matrices.
\newblock {\em Biometrika}, 99(3):733--740.

\bibitem[Rothman et~al., 2009]{rothman2009generalized}
Rothman, A.~J., Levina, E., and Zhu, J. (2009).
\newblock Generalized thresholding of large covariance matrices.
\newblock {\em Journal of the American Statistical Association},
  104(485):177--186.

\bibitem[Saelens et~al., 2018]{saelens2018comprehensive}
Saelens, W., Cannoodt, R., and Saeys, Y. (2018).
\newblock A comprehensive evaluation of module detection methods for gene
  expression data.
\newblock {\em Nature Communications}, 9(1):1--12.

\bibitem[Serra et~al., 2018]{serra2018robust}
Serra, A., Coretto, P., Fratello, M., and Tagliaferri, R. (2018).
\newblock Robust and sparse correlation matrix estimation for the analysis of
  high-dimensional genomics data.
\newblock {\em Bioinformatics}, 34(4):625--634.

\bibitem[Suzumura et~al., 2017]{suzumura2017selective}
Suzumura, S., Nakagawa, K., Umezu, Y., Tsuda, K., and Takeuchi, I. (2017).
\newblock Selective inference for sparse high-order interaction models.
\newblock In {\em International Conference on Machine Learning}, pages
  3338--3347. PMLR.

\bibitem[Tian et~al., 2018]{tian2018selective}
Tian, X., Loftus, J.~R., and Taylor, J.~E. (2018).
\newblock Selective inference with unknown variance via the square-root lasso.
\newblock {\em Biometrika}, 105(4):755--768.

\bibitem[Tibshirani et~al., 2016]{tibshirani2016exact}
Tibshirani, R.~J., Taylor, J., Lockhart, R., and Tibshirani, R. (2016).
\newblock Exact post-selection inference for sequential regression procedures.
\newblock {\em Journal of the American Statistical Association},
  111(514):600--620.

\bibitem[Van~Dam et~al., 2018]{van2018gene}
Van~Dam, S., Vosa, U., van~der Graaf, A., Franke, L., and de~Magalhaes, J.~P.
  (2018).
\newblock Gene co-expression analysis for functional classification and
  gene--disease predictions.
\newblock {\em Briefings in Bioinformatics}, 19(4):575--592.

\bibitem[Witten, 2011]{witten2011classification}
Witten, D.~M. (2011).
\newblock Classification and clustering of sequencing data using a {P}oisson
  model.
\newblock {\em The Annals of Applied Statistics}, 5(4):2493--2518.

\bibitem[Xue et~al., 2012]{xue2012positive}
Xue, L., Ma, S., and Zou, H. (2012).
\newblock Positive-definite $l_1$-penalized estimation of large covariance
  matrices.
\newblock {\em Journal of the American Statistical Association},
  107(500):1480--1491.

\bibitem[Yang et~al., 2016]{yang2016selective}
Yang, F., Foygel~Barber, R., Jain, P., and Lafferty, J. (2016).
\newblock Selective inference for group-sparse linear models.
\newblock {\em Advances in Neural Information Processing Systems},
  29:2469--2477.

\bibitem[Zhao et~al., 2012]{zhao2012huge}
Zhao, T., Liu, H., Roeder, K., Lafferty, J., and Wasserman, L. (2012).
\newblock The huge package for high-dimensional undirected graph estimation in
  {R}.
\newblock {\em The Journal of Machine Learning Research}, 13(1):1059--1062.

\end{thebibliography}

\pagebreak

\section*{Supplementary Materials for ``Inferring independent sets of Gaussian variables after thresholding correlations"}
\renewcommand\theequation{S\arabic{equation}}
\renewcommand\thesection{S\arabic{section}}
\renewcommand\thealgorithm{S\arabic{algorithm}}
\setcounter{equation}{0}
\setcounter{section}{0}
\setcounter{algorithm}{0}

\section{Brief review of canonical correlation analysis}
\label{Appendix:CCA}
Here we briefly review canonical correlation analysis, as it paves the way for the selective inference method proposed in this paper. We largely follow the notation used in \citet{anderson1962introduction}. 

We partition the random vector $X \sim N\left(0, \bm\Sigma\right)$ of length $p$ into two vectors $X_1$ and  $X_2$ of lengths $p_1$ and $p_2 = p - p_1$ respectively, so that $X^\top = \left(X_1^\top\:\:\: X_2^\top\right)$. The corresponding covariance matrix can be partitioned as
$$
\bSigma = \begin{pmatrix}
\bSigma_{11} & \bSigma_{12}\\
\bSigma_{21} & \bSigma_{22}
\end{pmatrix},
$$
where $\bm\Sigma$ is of dimension $p \times p$, $\bm\Sigma_{12}$ is of dimension $p_1 \times p_2$, etc. Without loss of generality, we assume $p_1 \leqslant p_2$. We define the population canonical correlations $\lambda_1, \lambda_2, \ldots, \lambda_{p_1}$ between $X_1$ and $X_2$, and their corresponding population canonical vectors $\bA^* = \left(\alpha_1^*, \alpha_2^*, \ldots, \alpha_{p_1}^* \right) \in \mathbb R^{p_1 \times p_1}$ and $\bm\Gamma^*  = \left(\gamma_1^*, \gamma_2^*, \ldots, \gamma_{p_1}^* \right) \in \mathbb R ^{p_2 \times p_1}$, sequentially for $r = 1, 2, \ldots, p_1$:
\begin{subequations}
\begin{equation}
\label{eqn:theoretical_CCA}
    \begin{aligned}
    \left(\bm\alpha_r^*, \bm\gamma_r^*\right) :=& \argmaxA_{\substack{\bm\alpha_r^\top \bm\Sigma_{11} \bm\alpha_r =1\\
    \bm\gamma_r^\top \bm\Sigma_{22} \bm\gamma_r =1\\
    \bm\alpha_r^\top \bm\Sigma_{11} \bm\alpha_j^* = 0; \forall j < r\\
    \bm\gamma_r^\top \bm\Sigma_{22} \bm\gamma_j^* = 0; \forall j < r}} \bm\alpha_r^\top \bm\Sigma_{12} \bm\gamma_r;\:\:\:\:\:\:\:\:\:\:\:\:\: \lambda_r := \left(\bm\alpha_r^*\right)^\top \bm\Sigma_{12} \bm\gamma_r^*.
\end{aligned}
\end{equation}
The above can be solved through an SVD on $\bSigma_{11}^{-\frac{1}{2}}\bSigma_{12}\bSigma_{22}^{-\frac{1}{2}} = \mathbf A \bm\Lambda \bm\Gamma^\top$, by taking $\mathbf A^* = \bSigma_{11}^{-\frac{1}{2}}\mathbf A, \bm\Gamma^* = \bSigma_{22}^{-\frac{1}{2}}\bm\Gamma$, and $\bm\lambda_r = \mathrm{diag}\left( \bm\Lambda\right)$ \citep{jendoubi2019whitening}. 
Next, we define the sample canonical correlations and sample canonical vectors for a realization $\bx \in \mathbb R^{n \times p}$ of $\bX$ in \eqref{eqn:distribution_data}. We assume $\mathrm{rank}\left(\bx \right)= p$, which holds with probability $1$, since we assume $\bm\Sigma$ is positive definite. We let $\bS = \begin{pmatrix}
\bS_{11} & \bS_{12}\\
\bS_{12}^\top & \bS_{22}
\end{pmatrix}$ denote the sample covariance of $\bx = \left( \bx_1^\top\:\:\: \bx_2^\top\right)^\top$, suitably partitioned. We now define the ``whitened'' versions of $\bx_1$ and $\bx_2$, i.e.
$
{\bx}^W_1 = \bS_{11}^{-\frac{1}{2}}\bx_1$ and ${\bx}^W_2 = \bS_{22}^{-\frac{1}{2}}\bx_2.
$
Let $\whitenS= \begin{pmatrix}
\whitenS_{11} & \whitenS_{12}\\
\left(\whitenS_{12}\right)^\top & \whitenS_{22}
\end{pmatrix}$ denote the sample covariance matrix of ${\bx}^W = \left(\left({\bx}^W_1\right)^\top\:\:\: \left({\bx}^W_2\right)^\top\right)^\top$, suitably partitioned. It follows that
$
\whitenS_{11} = \bI_{p_1}; \: \whitenS_{22} = \bI_{p_2}, \text{ and }\whitenS_{12} = \bS_{11}^{-\frac{1}{2}} \bS_{12} \bS_{22}^{-\frac{1}{2}}$. The compact SVD of $\whitenS_{12}$ is
\begin{equation}
\label{eqn:SVD}
    \begin{aligned}
        \whitenS_{12} \overset{SVD}{=} \hat{\bA}\:\hat{\bm{\Lambda}}\:\hat{\bm{\Gamma}}^\top,
    \end{aligned}
\end{equation}
\end{subequations}
where $\hat{\bm{\Lambda}}$ is a square diagonal matrix of $p_1$ positive singular values, arranged in non-increasing order, and $\hat{\bA}$ and $\hat{\bm{\Gamma}}$ are the $p_1 \times p_1$ and $p_2 \times p_1$ matrices of  left and right singular vectors, respectively. We now derive \eqref{eqn:test_stat_simplified}. We can rewrite the classical LRT statistic in \eqref{eqn:lrt} by noting that
$$
\begin{aligned}
    \left| \bS\right| &= \left| \bS_{22}\right|\left| \bS_{11} - \bS_{12}\bS_{22}^{-1}\bS_{21}\right|\\
    &= \left| \bS_{22}\right|\left| \bS_{11}\right|\left| \bI - \bS_{11}^{-\frac{1}{2}}\bS_{12}\bS_{22}^{-\frac{1}{2}}\bS_{22}^{-\frac{1}{2}}\bS_{21}\bS_{11}^{-\frac{1}{2}}\right|\\
    &= \left| \bS_{11}\right|\left| \bS_{22}\right|\left| \bI - \whitenS_{12}\left( \whitenS_{12}\right)^\top\right|\\
    &= \left| \bS_{11}\right|\left| \bS_{22}\right|\left| \bI - \hat{\bm\Lambda}^2\right|,
\end{aligned}
$$
where in the last equality, we have used that $\hat{\mathbf A}$ is a square orthogonal matrix.

\section{Proof of Proposition \ref{prop:density_singular_values}}
\label{Appendix:density}
For any subset $\mathcal P$ of $\{1,2,\ldots, p \}$, the sample canonical correlations are the diagonal elements of $\hat{\bm\Lambda}^{\mathcal P}$, denoted by $\left\{\hat{\lambda}_i^{\mathcal P} \right\}_{i = 1}^{r\left({\mathcal P}\right)}$. In this section, we will omit the ${\mathcal P}$ superscript and write $\left\{\hat{\lambda}_i\right\}_{i = 1}^{r\left({\mathcal P}\right)}$.  
\subsection{Proof of Proposition \ref{prop:density_singular_values}(i)}
\begin{proof}
The joint density of $\left(\hat\lambda_1^2, \hat\lambda_2^2, \ldots, \hat\lambda_{r\left(\mathcal P\right)}^2\right)$ under $H_0^\mathcal{P}$ takes the form 
\small
\begin{equation}
\label{eqn:density_singular_values_squared}
   \begin{aligned}
    f\left(\hat\lambda_1^2 = \gamma_1, \hat\lambda_2^2 = \gamma_2, \ldots, \hat\lambda_{r\left(\mathcal P\right)}^2 = \gamma_{r\left(\mathcal P\right)}\right) =& \pi^{\frac{1}{2}\left(r\left(\mathcal P\right)\right)^2}\frac{\Gamma_{r\left(\mathcal P\right)}\left(\frac{n}{2}\right)}{\Gamma_{r\left(\mathcal P\right)}\left(\frac{n-p + r\left(\mathcal P\right)}{2}\right)\Gamma_{p}\left(\frac{1}{2}r\left(\mathcal P\right)\right)\Gamma _{p}\left(\frac{1}{2}\left(p - r\left(\mathcal P\right)\right)\right)} \times\\ 
    &\prod_{i = 1}^{r\left(\mathcal P\right)}\left\{\gamma_i^{\frac{1}{2}\left(\left|p - 2\mathrm{card}\left({\mathcal{P}}\right)\right| - 1\right)}\left(1 - \gamma_i\right)^{\frac{1}{2}\left(n - p - 2\right)} \right\}\prod_{i < j}\left(\gamma_i - \gamma_j\right)
   \end{aligned}
\end{equation}
\normalsize
\noindent by (13) in Section 13.4 in \citet{anderson1962introduction}. To obtain $f\left(\hat\lambda_1, \hat\lambda_2, \ldots, \hat\lambda_{r\left(\mathcal P\right)}\right)$ in \eqref{eqn:density_singular_values}, we perform a change of variables. We have 
\begin{equation}
    \label{eqn:jacobin_1}
    f\left(\hat\lambda_1, \hat\lambda_2, \ldots, \hat\lambda_{r\left(\mathcal P\right)}\right) = f\left(\hat\lambda_1^2, \hat\lambda_2^2, \ldots, \hat\lambda_{r\left(\mathcal P\right)}^2\right) \left|\mathbf J\right|,
\end{equation}
where $\mathbf J$ is the $r\left(\mathcal P\right) \times r\left(\mathcal P\right)$ Jacobian matrix of the transformation, with $\mathbf J_{i,j} = \frac{\partial \hat\lambda_i^2}{\partial\hat\lambda_j}$ for $ i,j \in \left\{1,2,\ldots, r\left(\mathcal P\right) \right\}$. Simplifying, we have $\left|\mathbf J\right| = 2^{r\left(\mathcal P\right)} \prod_{i = 1}^{r\left(\mathcal P\right)} \hat\lambda_i$. Substituting this into \eqref{eqn:jacobin_1} yields
\begin{equation}
    \begin{aligned}
     f\left(\hat\lambda_1, \hat\lambda_2, \ldots, \hat\lambda_{r\left(\mathcal P\right)}\right) = & \pi^{\frac{1}{2}\left(r\left(\mathcal P\right)\right)^2}\frac{\Gamma_{r\left(\mathcal P\right)}\left(\frac{n}{2}\right)}{\Gamma_{r\left(\mathcal P\right)}\left(\frac{n-p + r\left(\mathcal P\right)}{2}\right)\Gamma_{p}\left(\frac{1}{2}r\left(\mathcal P\right)\right)\Gamma _{p}\left(\frac{1}{2}\left(p - r\left(\mathcal P\right)\right)\right)} \times\\ 
    &\prod_{i = 1}^{r\left(\mathcal P\right)}\left\{\hat\lambda_i^{\left(\left| p - 2\mathrm{card}\left({\mathcal{P}}\right)\right| - 1\right)}\left(1 - \hat\lambda_i^2\right)^{\frac{1}{2}\left(n - p - 2\right)} \right\}\prod_{i < j}\left(\hat\lambda_i^2 - \hat\lambda_j^2\right) \times 2^{r\left(\mathcal P\right)} \prod_{i = 1}^{r\left(\mathcal P\right)} \hat\lambda_i,
    \end{aligned}
\end{equation}
which completes the proof of (i) of Proposition \ref{prop:density_singular_values}. 
\end{proof}

\subsection{Proof of Proposition \ref{prop:density_singular_values}(ii)}
\begin{proof}
The result follows by applying various arguments from \citet{anderson1962introduction}. For clarity, we extract the relevant parts here. For notational ease, we will write $\mathbf S\left( \bx\right)$ as $\mathbf S$. 

\begin{enumerate}[1)]
    \item Applying the argument in (48)-(52) of Section 12.2 of \citet{anderson1962introduction} to the sample covariance matrix, we have that the squared canonical correlations 
    $\left\{\hat{\lambda}_i^2 \right\}_{i = 1}^{r\left({\mathcal P}\right)}$ are the roots, with respect to $\gamma$, of
\begin{equation}
    \label{eqn:MC_simulation_derivation_short_1}
    \begin{aligned}
     f\left(\gamma\right)=\left| \bS_{{\mathcal{P}}, {\mathcal{P}}^c} \left( \bS_{{\mathcal{P}}^c, {\mathcal{P}}^c}\right)^{-1} \bS_{{\mathcal{P}}^c, {\mathcal{P}}}  - \gamma \bS_{{\mathcal{P}}, {\mathcal{P}}} \right|.
    \end{aligned}
\end{equation}

\item By Corollary 4.3.2 of \citet{anderson1962introduction}, if $\bSigma_{{\mathcal{P}}, {\mathcal{P}}^c} = 0$, then $$
\begin{aligned}
          \bW &:= n\left( \bSigma_{{\mathcal{P}}, {\mathcal{P}}}\right)^{-\frac{1}{2}}\bS_{{\mathcal{P}}, {\mathcal{P}}^c} \left( \bS_{{\mathcal{P}}^c, {\mathcal{P}}^c}\right)^{-1} \bS_{{\mathcal{P}}^c, {\mathcal{P}}}\left( \bSigma_{{\mathcal{P}}, {\mathcal{P}}}\right)^{-\frac{1}{2}} \sim \mathrm{Wishart}_{r\left({\mathcal{P}}\right)}\left(\bI, p - r\left({\mathcal{P}}\right)\right),\\
\bT &:= n\left( \bSigma_{{\mathcal{P}}, {\mathcal{P}}}\right)^{-\frac{1}{2}}\left(\bS_{{\mathcal{P}}, {\mathcal{P}}} -  \bS_{{\mathcal{P}}, {\mathcal{P}}^c} \left( \bS_{{\mathcal{P}}^c, {\mathcal{P}}^c}\right)^{-1} \bS_{{\mathcal{P}}^c, {\mathcal{P}}}\right)\left( \bSigma_{{\mathcal{P}}, {\mathcal{P}}}\right)^{-\frac{1}{2}} \sim \mathrm{Wishart}_{r\left({\mathcal{P}}\right)}\left(\bI, n - 1 - p + r\left({\mathcal{P}}\right)\right),
\end{aligned}
$$ and $\bW$ and $\bT$  are independent.

\item Equations 7-10 of Section 13.2 of \citet{anderson1962introduction} establish that the values of $\gamma$ that satisfy $\left| \bW - \gamma \left(\bT + \bW\right)\right| = 0$ are given by $\frac{\Psi_i \left( \bW\bT^{-1}\right)}{1 + \Psi_i \left( \bW\bT^{-1}\right)}$, where $ \Psi_i \left( \bW\bT^{-1}\right)$ are the eigenvalues of $\bW\bT^{-1}$. 

\end{enumerate}

By 1), $\hat{\lambda}_1^2, \hat{\lambda}_2^2, \ldots, \hat{\lambda}_{r\left( \mathcal P\right)}^2$ are given by roots of \eqref{eqn:MC_simulation_derivation_short_1}. Pre- and post-multiplying \\ $ \bS_{{\mathcal{P}}, {\mathcal{P}}^c} \left( \bS_{{\mathcal{P}}^c, {\mathcal{P}}^c}\right)^{-1} \bS_{{\mathcal{P}}^c, {\mathcal{P}}}  - \gamma \bS_{{\mathcal{P}}, {\mathcal{P}}}$ in \eqref{eqn:MC_simulation_derivation_short_1} by $\sqrt{n}\bSigma_{{\mathcal{P}}, {\mathcal{P}}}^{-\frac{1}{2}}$ gives that $\hat{\lambda}_1^2, \hat{\lambda}_2^2, \ldots, \hat{\lambda}_{r\left( \mathcal P\right)}^2$ are the values of $\gamma$ that solve
$$
\left| \bW - \gamma \left(\bT + \bW\right)\right| = 0,
$$
where $\bW$ and $\bT$ are defined in 2).  Applying 3) completes the proof of (ii) of Proposition \ref{prop:density_singular_values}.
\end{proof}

\subsection{Proof of Proposition \ref{prop:density_singular_values}(iii)}
\begin{proof}
To prove Proposition \ref{prop:density_singular_values}(iii), we start with a result whose proof follows from (5) in Section 4.4.1 of \citet{anderson1962introduction}. 
\begin{proposition}
\label{prop:multiple_cor_coeff}
The squared sample multiple correlation coefficient for regression for a response $\mathbf h$ on a set of variables $\mathcal Q$ is given by $
\frac{\mathbf s_{\mathcal Q, \mathbf h}^\top \left(\mathbf S_{\mathcal Q, \mathcal Q}\right)^{-1} \mathbf s_{\mathcal Q, \mathbf h}}{\mathrm{var}\left(\mathbf h \right)}
$, where $s_{\mathcal Q, \mathbf h}$ denotes the vector of covariances between $\mathbf h$ and the variables in $\mathcal Q$, and $\mathbf S_{\mathcal Q, \mathcal Q}$ denotes the covariance matrix of the variables in $\mathcal Q$.
\end{proposition}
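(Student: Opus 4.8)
The plan is to start from the \emph{definition} of the sample multiple correlation coefficient of a response $\mathbf h$ on the variables in $\mathcal Q$: it is the maximum, over nonzero coefficient vectors $\mathbf a$, of the sample correlation between $\mathbf h$ and the linear combination $\mathbf a^\top \mathbf q$ of the mean-centered variables in $\mathcal Q$ --- equivalently, the sample correlation between $\mathbf h$ and its ordinary-least-squares fit onto $\mathcal Q$. Writing correlations in terms of the sample covariances $\mathbf s_{\mathcal Q,\mathbf h}$ and $\mathbf S_{\mathcal Q,\mathcal Q}$, the squared sample correlation between $\mathbf h$ and $\mathbf a^\top\mathbf q$ equals $\left(\mathbf a^\top \mathbf s_{\mathcal Q,\mathbf h}\right)^2 \big/ \left[\mathrm{var}(\mathbf h)\, \mathbf a^\top \mathbf S_{\mathcal Q,\mathcal Q}\mathbf a\right]$ for every $\mathbf a \neq \mathbf 0$, so the squared multiple correlation coefficient is the supremum of this generalized Rayleigh quotient over $\mathbf a$.

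Next I would maximize that quotient. Since $\mathbf S_{\mathcal Q,\mathcal Q}$ is positive definite with probability one --- it is a principal submatrix of the full-rank sample covariance matrix under the standing assumptions $n>p$ and $\bSigma$ positive definite --- I can write $\mathbf a^\top \mathbf s_{\mathcal Q,\mathbf h} = \left(\mathbf S_{\mathcal Q,\mathcal Q}^{1/2}\mathbf a\right)^\top \mathbf S_{\mathcal Q,\mathcal Q}^{-1/2}\mathbf s_{\mathcal Q,\mathbf h}$ and apply Cauchy--Schwarz:
\[
\left(\mathbf a^\top \mathbf s_{\mathcal Q,\mathbf h}\right)^2 \;\leqslant\; \left(\mathbf a^\top \mathbf S_{\mathcal Q,\mathcal Q}\mathbf a\right)\left(\mathbf s_{\mathcal Q,\mathbf h}^\top \mathbf S_{\mathcal Q,\mathcal Q}^{-1}\mathbf s_{\mathcal Q,\mathbf h}\right),
\]
with equality iff $\mathbf a \propto \mathbf S_{\mathcal Q,\mathcal Q}^{-1}\mathbf s_{\mathcal Q,\mathbf h}$, i.e.\ at the OLS coefficient vector. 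Substituting this maximizer into the Rayleigh quotient gives exactly $\mathbf s_{\mathcal Q,\mathbf h}^\top \left(\mathbf S_{\mathcal Q,\mathcal Q}\right)^{-1}\mathbf s_{\mathcal Q,\mathbf h}\big/\mathrm{var}(\mathbf h)$, as claimed. An equivalent and perhaps cleaner route is the analysis-of-variance identity: with $\hat{\mathbf b} = \mathbf S_{\mathcal Q,\mathcal Q}^{-1}\mathbf s_{\mathcal Q,\mathbf h}$ the fitted coefficients, the explained sample variance $\hat{\mathbf b}^\top \mathbf S_{\mathcal Q,\mathcal Q}\hat{\mathbf b}$ simplifies to $\mathbf s_{\mathcal Q,\mathbf h}^\top \mathbf S_{\mathcal Q,\mathcal Q}^{-1}\mathbf s_{\mathcal Q,\mathbf h}$, and the squared multiple correlation coefficient is the ratio of this to $\mathrm{var}(\mathbf h)$.

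Finally I would reconcile the constants and the centering convention with formula (5) in Section 4.4.1 of \citet{anderson1962introduction}, from which the statement is quoted, to confirm agreement. I expect no real obstacle here: the only points requiring care are that ``variance'' and ``covariance'' throughout refer to the mean-centered sample versions consistent with the definition of $\mathbf S(\bx)$, and that $\mathbf S_{\mathcal Q,\mathcal Q}$ is invertible, which holds almost surely. The single idea in the argument is recognizing the defining optimization as a Cauchy--Schwarz / generalized-eigenvalue problem; everything else is routine algebra.
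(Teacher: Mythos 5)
Your proof is correct: maximizing the squared sample correlation between $\mathbf h$ and linear combinations of the variables in $\mathcal Q$ via Cauchy--Schwarz (equivalently, evaluating at the OLS coefficients $\mathbf S_{\mathcal Q,\mathcal Q}^{-1}\mathbf s_{\mathcal Q,\mathbf h}$) yields exactly the stated formula. The paper gives no proof of this proposition beyond citing (5) in Section 4.4.1 of Anderson (1962), and your argument is precisely the standard derivation underlying that reference, so the approaches coincide.
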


Without loss of generality, we assume $\mathrm{card}\left({\mathcal{P}}\right) = p - 1$. 
First we establish that the canonical correlation is the multiple correlation coefficient between the variable in ${\mathcal{P}}^c$ and the remaining $p - 1$ variables in ${\mathcal{P}}$. We write $\mathbf S\left( \bx\right)$ as $\mathbf S$ and denote the vectors $\whitenS_{{\mathcal{P}}, {\mathcal{P}}^c} $ and $\bS_{{\mathcal{P}}, {\mathcal{P}}^c}$ as $\whitens_{{\mathcal{P}}, {\mathcal{P}}^c} $ and $ \bs_{{\mathcal{P}}, {\mathcal{P}}^c}$, respectively, and the scalar $\bS_{{\mathcal{P}}^c, {\mathcal{P}}^c}$ as $s_{{\mathcal{P}}^c, {\mathcal{P}}^c}$. Substituting these in \eqref{eqn:CCA} we have

$$   \whitens_{{\mathcal{P}}, {\mathcal{P}}^c}  =  \frac{ \left(\bS_{{{\mathcal{P}}}, {{\mathcal{P}}}}\right)^{-\frac{1}{2}}\bs_{{\mathcal{P}}, {\mathcal{P}}^c}}{\sqrt{s_{{\mathcal{P}}^c, {\mathcal{P}}^c}}}.
$$
Moreover, $\hat{\bm\Lambda}^{{\mathcal{P}}}$ is a scalar, which we denote as $\hat\lambda_1$. From \eqref{eqn:CCA}, $\hat\lambda_1$ is precisely the singular value of the vector $ \whitens_{{\mathcal{P}}, {\mathcal{P}}^c} $, which is $\left\|\whitens_{{\mathcal{P}}, {\mathcal{P}}^c}\right\|_2$. Hence we have
$$
 \left(\hat\lambda_1\right)^2 = \|  \whitens_{{\mathcal{P}}, {\mathcal{P}}^c}\|_2^2 =  \left( \whitens_{{\mathcal{P}}, {\mathcal{P}}^c}\right)^\top \whitens_{{\mathcal{P}}, {\mathcal{P}}^c} = \frac{\bs_{{\mathcal{P}}, {\mathcal{P}}^c}^\top\left(\bS_{{{\mathcal{P}}}, {{\mathcal{P}}}}\right)^{-1} \bs_{{\mathcal{P}}, {\mathcal{P}}^c}}{s_{{\mathcal{P}}^c, {\mathcal{P}}^c}}.
$$
Using Proposition \ref{prop:multiple_cor_coeff}, $\frac{\bs_{{\mathcal{P}}, {\mathcal{P}}^c}^\top\left(\bS_{{{\mathcal{P}}}, {{\mathcal{P}}}}\right)^{-1} \bs_{{\mathcal{P}}, {\mathcal{P}}^c}}{s_{{\mathcal{P}}^c, {\mathcal{P}}^c}}$ is the squared multiple correlation coefficient for the regression of the single variable in ${\mathcal{P}}^c$ onto the remaining $p - 1$ variables. Under the null, the squared sample multiple correlation coefficient follows a beta distribution with parameters $\frac{p - 1}{2}$ and $\frac{n - p}{2}$ \citep{ali2002null}. This completes the proof of (iii) of Proposition \ref{prop:density_singular_values}.
\end{proof}

\section{Proof of Theorem \ref{thm:main}}
\label{Appendix:Main_theorem}
In this section, we abbreviate  $H_0^{{\mathcal{R}}}, \hat{\mathbf{A}}^{{\mathcal{R}}}\left(\cdot\right),  \hat{\bm\Lambda}^{{\mathcal{R}}}\left(\cdot\right)$, and  $\hat{\mathbf{\Gamma}}^{{\mathcal{R}}}\left(\cdot\right)$ (defined in Section \ref{sec:prob_def})  by omitting the superscript. Without loss of generality, we assume that $\mathcal{R} = \left\{1,2,\ldots, p_1 \right\}$, $\mathcal{R}^c = \left\{p_1+1 , \ldots, p \right\}$, and $p_1 \leqslant p/2$. Define $p_2 =  p - p_1$. Let $\hat{\lambda}_1\left(\bX\right), \hat{\lambda}_2\left(\bX\right), \ldots, \hat{\lambda}_{p_1}\left(\bX\right)$ be the diagonal elements of $\hat{\bm\Lambda}\left(\bX\right)$. 

We let $\bS\left(\bX\right) = \begin{pmatrix}
\bS_{11}\left(\bX\right) & \bS_{12}\left(\bX\right)\\
\bS_{12}^\top\left(\bX\right) & \bS_{22}\left(\bX\right)
\end{pmatrix}$ denote the covariance matrix of $\bX = \left(\bX_1\:\:\: \bX_2 \right)$, suitably partitioned, where $\bX_1$ and $\bX_2$ are defined as the submatrices of $\bX$ with columns in $\mathcal R$ and ${\mathcal{R}}^c$, respectively. Equipped with this notation, we first state a technical lemma whose proof we defer to Section \ref{Appendix:independence_under_null_lemma}.

\begin{lemma}
\label{lemma:independence_under_null}
Under $H_0: \bm{\Lambda} = \mathbf 0$, $\hat{\bm{\Lambda}}\left(\mathbf X\right) $ is independent of $\left(\mathbf{S}_{11}\left(\mathbf{X}\right), \mathbf{S}_{22}\left(\mathbf{X}\right), \hat{\mathbf{A}}\left(\mathbf{X}\right), \hat{\bm{\Gamma}}\left(\mathbf{X}\right)\right)$.
\end{lemma}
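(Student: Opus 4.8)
The plan is to exploit the rotational symmetry of the multivariate Gaussian under $H_0: \bm\Lambda = \mathbf 0$, i.e. when $\bSigma_{11}$ and $\bSigma_{22}$ are arbitrary but $\bSigma_{12} = \mathbf 0$. First I would reduce to the canonical case $\bSigma_{11} = \bI_{p_1}$, $\bSigma_{22} = \bI_{p_2}$, $\bSigma_{12} = \mathbf 0$: the sample canonical correlations $\hat{\bm\Lambda}(\bX)$ are invariant under $\bX_1 \mapsto \bX_1 \bSigma_{11}^{-1/2}$, $\bX_2 \mapsto \bX_2 \bSigma_{22}^{-1/2}$ (this is precisely the whitening step in \eqref{eqn:CCA}), so it suffices to prove the independence claim when $\bX$ has i.i.d.\ $N(0,1)$ entries. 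In that case, the columns of $\bX_1$ and $\bX_2$ are mutually independent standard Gaussian vectors in $\mathbb R^n$, and everything in sight is a function of the Gram-type matrices after centering.

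The key step is to decompose $\bX$ into a ``scale/orientation'' part and a ``whitened'' part. Concretely, write (after centering by $\mathbf 1_n \bar x^\top$) the QR-type or polar-type factorization of the whitened data: there are matrices $\bU_1 \in \mathbb R^{(n-1)\times p_1}$, $\bU_2 \in \mathbb R^{(n-1)\times p_2}$ with orthonormal columns (Stiefel-manifold valued) such that $\bx_j$ (centered) $= \bQ\, \bS_{jj}^{1/2}\, \bU_j^{\text{-ish}}$ — more carefully, one shows that $\bS_{11}(\bX)$ and $\bS_{22}(\bX)$ together with the ``directions'' $\bU_1, \bU_2$ determine the centered data up to the left orthogonal factor, and that under the isotropic Gaussian law $(\bS_{11}(\bX), \bS_{22}(\bX))$ is independent of $(\bU_1, \bU_2)$, with $(\bU_1,\bU_2)$ uniformly (Haar) distributed on the appropriate Stiefel manifolds and $\bU_1 \independent \bU_2$. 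This is the standard Bartlett/Wishart decomposition argument. Then $\hat{\bm\Lambda}(\bX)$ — being the singular values of $\whitenS_{12} = \bS_{11}^{-1/2}\bS_{12}\bS_{22}^{-1/2}$ — is a function only of the ``relative orientation'' of the column spaces, i.e.\ of $\bU_1^\top \bU_2$ (principal angles), whereas $\hat{\bA}(\bX)$ and $\hat{\bm\Gamma}(\bX)$ capture the singular \emph{vectors} of that same object. So $\bigl(\hat{\bm\Lambda}(\bX), \hat{\bA}(\bX), \hat{\bm\Gamma}(\bX)\bigr)$ is a function of $(\bU_1, \bU_2)$ alone, which is already independent of $(\bS_{11}(\bX), \bS_{22}(\bX))$; it remains to show that within $(\bU_1, \bU_2)$, the singular values of $\bU_1^\top \bU_2$ are independent of $(\hat{\bA}, \hat{\bm\Gamma})$. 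For that I would condition on $\bU_1$ (WLOG $\bU_1 = [\bI_{p_1}; \mathbf 0]$ by the invariance), so $\bU_2$ is Haar on the Stiefel manifold; then use that for a Haar-distributed $\bU_2$, the SVD $\bU_1^\top \bU_2 = \hat{\bA}\hat{\bm\Lambda}\hat{\bm\Gamma}^\top$ has $\hat{\bA}$, $\hat{\bm\Gamma}$ uniformly distributed on $O(p_1)$ (resp.\ Stiefel) and independent of $\hat{\bm\Lambda}$ — a left/right orthogonal-invariance argument, since multiplying $\bU_2$ on the right by an orthogonal matrix (which preserves its Haar law) rotates $\hat{\bm\Gamma}$ while fixing $\hat{\bm\Lambda}$, and a suitable left action rotates $\hat{\bA}$.

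The main obstacle I anticipate is bookkeeping the centering and the exact form of the orthogonal-invariance step cleanly: one must be careful that $\bX - \mathbf 1_n \bar x^\top$ lives in an $(n-1)$-dimensional subspace, that the relevant group acting on the whitened data is $O(n-1)$, and that the singular vectors $\hat{\bA}, \hat{\bm\Gamma}$ are only defined up to sign/block-rotation when $\hat{\bm\Lambda}$ has repeated or zero singular values — an event of probability zero under $H_0$, so it can be ignored, but this should be noted. An alternative, possibly cleaner route that I would consider is to invoke the explicit joint density: the classical result (e.g.\ Anderson, Section 13.4, already cited for Proposition \ref{prop:density_singular_values}) gives the joint distribution of $(\hat{\bm\Lambda}, \hat{\bA}, \hat{\bm\Gamma}, \bS_{11}, \bS_{22})$ under $H_0$ in product form, from which the asserted independence can be read off directly; but the symmetry argument above is more self-contained and is what I would write up.
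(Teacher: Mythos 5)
Your proposal is correct in outline but takes a genuinely different route from the paper. The paper's proof (via its Lemma \ref{lemma:conditional_indep}) conditions on $\mathbf X_2$, uses Cochran's theorem to get the conditional joint law of $\tilde{\mathbf G}$ and $\tilde{\mathbf T}=n\mathbf S_{11}-\tilde{\mathbf G}\tilde{\mathbf G}^\top$, and then performs an explicit change of variables $(\tilde{\mathbf G},\tilde{\mathbf T})\mapsto(\mathbf S_{11},\hat{\mathbf A},\hat{\bm\Lambda},\hat{\bm\Gamma})$, showing the conditional density factors by combining a direct computation with a cited result on the Jacobian of the SVD map. Your argument replaces all of this with a symmetry argument: whiten to the isotropic case, split each centered block into its Gram part and a Haar-distributed Stiefel frame, observe that $\whitenS_{12}$ reduces to $\mathbf U_1^\top\mathbf U_2$, and extract independence of singular values from singular vectors via bi-orthogonal invariance. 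This is more self-contained (no Jacobian of the SVD needed) and more conceptual; the paper's route buys an explicit product-form conditional density, the stronger statement that $\mathbf S_{11},\hat{\mathbf A},\hat{\bm\Lambda},\hat{\bm\Gamma}$ are \emph{mutually} independent, and the additional fact that the whole tuple is independent of $\mathbf X_2$ (which is how the paper then folds $\mathbf S_{22}$ into the conclusion).

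Two details in your sketch deserve care when written up. First, in the reduction to $\bSigma_{11}=\mathbf I$, $\bSigma_{22}=\mathbf I$: only the singular values are invariant under the block-wise transformation $\bX_j\mapsto\bX_j\bSigma_{jj}^{-1/2}$; the singular vectors are merely \emph{equivariant}, transforming as $\hat{\mathbf A}\mapsto\mathbf O_1\hat{\mathbf A}$, $\hat{\bm\Gamma}\mapsto\mathbf O_2\hat{\bm\Gamma}$ with $\mathbf O_1$ a (data-dependent) orthogonal matrix that is a function of $\mathbf S_{11}$ alone and likewise for $\mathbf O_2$. You therefore need the observation that the original tuple $\left(\mathbf S_{11},\mathbf S_{22},\hat{\mathbf A},\hat{\bm\Gamma}\right)$ is a measurable function of the transformed one, so independence transfers; the same remark applies if you use a QR rather than a polar factorization of the whitened blocks, where an extra orthogonal factor depending only on $(\mathbf S_{11},\mathbf S_{22})$ appears in front of $\mathbf U_1^\top\mathbf U_2$. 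Second, the invariance-implies-independence step for the SVD of a bi-orthogonally invariant matrix requires fixing a sign convention for the singular vectors and restricting to the probability-one event of distinct positive singular values; you flag this, and it is standard, but it should appear explicitly in a full write-up.
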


By \eqref{eqn:CCA}, $\hat{\mathbf{A}}\left(\mathbf X\right) \hat{\bm{\Lambda}}\left(\bX\right) \left(\hat{\mathbf{\Gamma}}\left(\mathbf X\right)\right)^\top$ is the SVD of $\left(\mathbf S_{11}\left(\mathbf X\right)\right)^{-\frac{1}{2}}\bS_{12}\left(\bX\right)\left(\mathbf S_{22}\left(\mathbf X\right)\right)^{-\frac{1}{2}}$ and so \\$\bS_{12}\left(\bX\right) = \left(\mathbf S_{11}\left(\mathbf X\right)\right)^{\frac{1}{2}} \hat{\mathbf{A}}\left(\mathbf X\right) \hat{\bm{\Lambda}}\left(\bX\right)\left(\hat{\mathbf{\Gamma}}\left(\mathbf X\right)\right)^\top\left(\mathbf S_{22}\left(\mathbf X\right)\right)^{\frac{1}{2}}$. In \eqref{eqn:selective_pvalue_definition}, we condition on the additional information
\begin{equation}
\label{eqn:conditioning_set_new}
    \left\{ \bS_{11} \left(\bX\right) = \bS_{11} \left(\bx\right), \bS_{22} \left(\bX\right) = \bS_{22} \left(\bx\right), \hat{\bA}\left(\bX\right) = \hat{\bA}\left(\bx\right), \hat{\bm{\Gamma}}\left(\bX\right) = \hat{\bm{\Gamma}}\left(\bx\right)\right\}.
\end{equation}
On this conditioning set, we have that  $\mathbf S_{12}\left(\mathbf X\right) = \left(\mathbf S_{11}\left(\mathbf x\right)\right)^{\frac{1}{2}} \hat{\mathbf{A}}\left(\mathbf x\right) \hat{\bm{\Lambda}}\left(\bX\right) \left(\hat{\mathbf{\Gamma}}\left(\mathbf x\right)\right)^\top \left(\mathbf S_{22}\left(\mathbf x\right)\right)^{\frac{1}{2}}$. Recalling the definition in \eqref{eqn:new_S}, we see that $\mathbf S_{12}\left(\mathbf X\right)  = \bS_{12}'\left(\mathbf x, \mathcal{R}, \left\{\hat\lambda_i \left(\bX\right)\right\}_{i= 1}^{p_1}\right)$. Thus, on the conditioning set \eqref{eqn:conditioning_set_new}, we have that
$$
\begin{aligned}
\bS\left(\bX\right) &= \begin{pmatrix}
\mathbf S_{11}\left(\mathbf x\right) & \mathbf S_{12}'\left(\mathbf x, \mathcal{R}, \left\{\hat\lambda_i \left(\bX\right)\right\}_{i= 1}^{p_1}\right) \\
\left( \mathbf S_{12}'\left(\mathbf x, \mathcal{R}, \left\{\hat\lambda_i \left(\bX\right)\right\}_{i= 1}^{p_1}\right) \right)^\top & \mathbf S_{22}\left(\mathbf x\right)
\end{pmatrix} =  \mathbf S'\left(\mathbf x, \mathcal{R}, \left\{\hat\lambda_i \left(\bX\right)\right\}_{i= 1}^{p_1}\right),
\end{aligned}
$$
where $\mathbf S'\left(\mathbf x, \mathcal{R}, \left\{\hat\lambda_i \left(\bX\right)\right\}_{i= 1}^{p_1}\right)$ was defined in Theorem \ref{thm:main}. Therefore, we can rewrite $p\left(\mathbf x; \mathcal{R}\right)$ in \eqref{eqn:selective_pvalue_definition}  as
$$
\begin{aligned}
    p\left(\mathbf x; \mathcal{R}\right) = \mathbb{P}_{H_0} \bigg(\prod_{i = 1}^{p_1} \left(1 - \left(\hat\lambda_i\left(\bX\right)\right)^2\right) \leqslant \prod_{i = 1}^{p_1} \left(1 - \left(\hat\lambda_i\left(\bx\right)\right)^2\right) \bigg |&
    \mathcal{R} \in \mathcal{P}\left(\bS'\left(\bx, \mathcal{R},  \left\{\lambda_i\left(\bX\right)\right\}_{i= 1}^{p_1}\right)\right),\\ 
&\bS_{11} \left(\bX\right) = \bS_{11} \left(\bx\right), \bS_{22} \left(\bX\right) = \bS_{22} \left(\bx\right),\\ &\hat{\bA}\left(\bX\right) = \hat{\bA}\left(\bx\right), \hat{\bm{\Gamma}}\left(\bX\right) = \hat{\bm{\Gamma}}\left(\bx\right)\bigg).\\
\end{aligned}
$$

From Lemma \ref{lemma:independence_under_null} we have that under $H_0$, $\left\{\hat\lambda_i\left(\bX\right)\right\}_{i= 1}^{p_1}$ is independent of \\$\left(\bS_{11}\left(\bX\right), \bS_{22}\left(\bX\right), \hat{\bA}\left(\bX\right), \hat{\bm{\Gamma}}\left(\bX\right)\right)$. It follows that
$$
\begin{aligned}
   p\left(\mathbf x; \mathcal{R}\right) = \mathbb{P}_{H_0} \bigg(\prod_{i = 1}^{p_1} \left(1 - \left(\hat\lambda_i\left(\bX\right)\right)^2\right) \leqslant \prod_{i = 1}^{p_1} \left(1 - \left(\hat\lambda_i\left(\bx\right)\right)^2\right) \bigg |&
    \mathcal{R} \in \mathcal{P}\left(\bS'\left(\bx, \mathcal{R},  \left\{\hat\lambda_i\left(\bX\right)\right\}_{i= 1}^{p_1}\right)\right)\bigg).
\end{aligned}
$$
Recall from Proposition \ref{prop:density_singular_values} that under $H_0$, $\left(\hat\lambda_1\left(\bX\right), \hat\lambda_2\left(\bX\right), \ldots, \hat\lambda_{p_1}\left(\bX\right)\right)$ has the density given in \eqref{eqn:density_singular_values}. Therefore, for $\left( \lambda_1, \lambda_2, \ldots, \lambda_{r\left( \mathcal P\right)}\right)$ with density given in \eqref{eqn:density_singular_values}, the first statement in Theorem \ref{thm:main} is established. 

The proof of \eqref{eqn:thm_type_1} is similar to the proof of Equation 11 in Theorem 1 of \citet{gao2020selective}. Using the probability integral transform we have that $p\left(\mathbf x; \mathcal{R}\right)$ in \eqref{eqn:selective_pvalue_definition} controls the selective type I error, i.e.
$$
\begin{aligned}
 \mathbb{P}_{H_0} \bigg(p\left(\mathbf X; \mathcal{R}\right) \leqslant \alpha \bigg | \mathcal{R} \in \mathcal{P}\left(\bS\left(\bX\right)\right)&, \bS_{11} \left(\bX\right) = \bS_{11} \left(\bx\right), \bS_{22} \left(\bX\right) = \bS_{22} \left(\bx\right), \\ &\:\hat{\bA}\left(\bX\right) = \hat{\bA}\left(\bx\right), \hat{\bm{\Gamma}}\left(\bX\right) = \hat{\bm{\Gamma}}\left(\bx\right)\bigg) \leqslant \alpha.
\end{aligned}
$$
By the law of iterated expectation, 
$$
\begin{aligned}
 &\mathbb{P}_{H_0} \left(p\left(\mathbf X; {\mathcal{R}}\right) \leqslant \alpha \bigg |{\mathcal{R}} \in \mathcal{P}\left(\bS\left(\bX\right)\right)\right)\\
 &=\mathbb{E}_{H_0} \left(\mathds{1}\left\{p\left(\mathbf X; {\mathcal{R}}\right) \leqslant \alpha \right\}\bigg |{\mathcal{R}} \in \mathcal{P}\left(\bS\left(\bX\right)\right)\right)\\
  &=\mathbb{E}_{H_0} \bigg( \mathbb{E}_{H_0} \bigg[\mathds{1}\left\{p\left(\mathbf X; {\mathcal{R}}\right) \leqslant \alpha\right\} \bigg |{\mathcal{R}} \in \mathcal{P}\left(\bS\left(\bX\right)\right), \bS_{11} \left(\bX\right) = \bS_{11} \left(\bx\right), \bS_{22} \left(\bX\right) = \bS_{22} \left(\bx\right),\\ &\:\:\:\:\:\:\:\:\:\:\:\:\:\:\:\:\:\:\:\:\:\:\:\:\:\:\:\:\:\:\:\:\:\:\:\:\:\:\:\:\:\:\:\:\:\:\:\:\:\:\:\:\:\:\:\:\:\:\:\:\:\:\: \hat{\bA}\left(\bX\right) = \hat{\bA}\left(\bx\right), \hat{\bm{\Gamma}}\left(\bX\right) = \hat{\bm{\Gamma}}\left(\bx\right)\bigg]\bigg |{\mathcal{R}} \in \mathcal{P}\left(\bS\left(\bX\right)\right)\bigg)\\
  &\leqslant \mathbb{E}_{H_0} \left( \alpha \bigg|{\mathcal{R}} \in \mathcal{P}\left(\bS\left(\bX\right)\right)\right)\\
  &= \alpha.
\end{aligned}
$$

This proves \eqref{eqn:thm_type_1} and completes the proof of Theorem \ref{thm:main}.

\subsection{Proof of Lemma \ref{lemma:independence_under_null}}
\label{Appendix:independence_under_null_lemma}
\begin{proof} 
In what follows, we omit the dependence of $\hat{\bm{\Lambda}}\left(\mathbf X\right), \mathbf{S}_{11}\left(\mathbf{X}\right), \mathbf{S}_{22}\left(\mathbf{X}\right), \hat{\mathbf{A}}\left(\mathbf{X}\right), \hat{\bm{\Gamma}}\left(\mathbf{X}\right)$ on $\bX$. Recall that \eqref{eqn:CCA} can be written as $\whitenS_{12} = \bS_{11}^{-\frac{1}{2}} \bS_{12} \bS_{22}^{-\frac{1}{2}} \overset{SVD}{=}\hat{\bA}\hat{\bm\Lambda} \hat{\bm\Gamma}^\top $. We first prove a technical lemma that will help us in establishing Lemma \ref{lemma:independence_under_null} going forward. 

\begin{lemma}
\label{lemma:conditional_indep}
Under $H_0: {\bm\Lambda} = \mathbf 0$, $\hat{\bm\Lambda}$ and $\left(\mathbf S_{11}, \hat{\mathbf A} , \hat{\bm\Gamma} \right)$ are conditionally independent  given $\mathbf{X}_2$, i.e. $\hat{\bm\Lambda} \independent \left(\mathbf S_{11}, \hat{\mathbf A},  \hat{\bm\Gamma} \right) \big | \mathbf{X}_2$. Furthermore, $ \left(\hat{\bm\Lambda}, \mathbf S_{11}, \hat{\mathbf A},  \hat{\bm\Gamma} \right) \independent \mathbf{X}_2$.
\end{lemma}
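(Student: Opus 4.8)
The plan is to condition on $\mathbf X_2$, exploit that under $H_0$ the block $\mathbf X_1$ is independent of $\mathbf X_2$, and rewrite the tuple $\bigl(\hat{\bm\Lambda},\mathbf S_{11},\hat{\mathbf A},\hat{\bm\Gamma}\bigr)$ as a \emph{fixed} measurable function of a standardized Gaussian matrix whose conditional law given $\mathbf X_2$ does not involve $\mathbf X_2$. The ``furthermore'' part then follows at once, and the conditional independence reduces to a corresponding internal independence property of that function, which I would establish by a QR / orthogonal-invariance argument. Throughout I use that, under the no-ties assumption, the relevant singular values are a.s.\ distinct, so the SVD is a measurable function (up to a fixed sign convention, which is irrelevant here).

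Concretely, fix a measurable map $\mathbf x_2\mapsto\mathbf V(\mathbf x_2)\in\mathbb R^{n\times(n-1)}$ whose columns form an orthonormal basis of $\mathbf 1_n^\perp$ and whose first $p_2$ columns span the centered columns of $\mathbf x_2$ (i.e.\ agree with $\tilde{\mathbf X}_2(\tilde{\mathbf X}_2^\top\tilde{\mathbf X}_2)^{-1/2}$ at $\mathbf x_2$, where $\tilde{\mathbf X}_2$ is $\mathbf X_2$ with columns centered). Set $\mathbf Z_1:=\mathbf V(\mathbf X_2)^\top\mathbf X_1$. A direct computation gives $n\,\mathbf S_{11}=\mathbf Z_1^\top\mathbf Z_1$ and $\sqrt n\,\mathbf S_{12}\mathbf S_{22}^{-1/2}=\bigl(\text{first }p_2\text{ rows of }\mathbf Z_1\bigr)^\top$, hence $\whitenS_{12}=\mathbf S_{11}^{-1/2}\mathbf S_{12}\mathbf S_{22}^{-1/2}$, and therefore the whole tuple, is a fixed function $\Phi(\mathbf Z_1)$. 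Under $H_0$, $\mathbf X_1\independent\mathbf X_2$ with $\mathbf X_1\sim\mathcal{MN}(\mathbf 0,\mathbf I_n,\bSigma_{11})$, so given $\mathbf X_2$ (hence given $\mathbf V(\mathbf X_2)$) we have $\mathbf Z_1\sim\mathcal{MN}(\mathbf 0,\mathbf I_{n-1},\bSigma_{11})$, a law free of $\mathbf X_2$. Consequently $\Phi(\mathbf Z_1)\independent\mathbf X_2$, which is the second claim, and the first claim reduces to showing that for $\mathbf Z\sim\mathcal{MN}(\mathbf 0,\mathbf I_{n-1},\bSigma_{11})$ the components of $\Phi(\mathbf Z)$ satisfy $\hat{\bm\Lambda}\independent(\mathbf S_{11},\hat{\mathbf A},\hat{\bm\Gamma})$.

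For this I would take the QR (Gram--Schmidt) decomposition $\mathbf Z=\mathbf Q\mathbf T$ with $\mathbf Q\in\mathbb R^{(n-1)\times p_1}$ having orthonormal columns, $\mathbf T\in\mathbb R^{p_1\times p_1}$, and $\mathbf Q\independent\mathbf T$ with $\mathbf Q$ Haar-uniform on the Stiefel manifold $V_{p_1}(\mathbb R^{n-1})$ (the standard fact; here $\mathbf T=\mathbf T_0\bSigma_{11}^{1/2}$ with $\mathbf T_0$ upper triangular). Then $n\mathbf S_{11}=\mathbf T^\top\mathbf T$ is a function of $\mathbf T$; writing $\mathbf Q_A$ for the first $p_2$ rows of $\mathbf Q$ and $\mathbf O_T:=(\mathbf T^\top\mathbf T)^{-1/2}\mathbf T^\top$ (an orthogonal $p_1\times p_1$ matrix, a function of $\mathbf T$), one gets $\whitenS_{12}=\mathbf O_T\mathbf Q_A^\top$, so from the economy SVD $\mathbf Q_A=\mathbf U_a\mathbf D_a\mathbf V_a^\top$ one reads off $\hat{\bm\Lambda}=\mathbf D_a$, $\hat{\bm\Gamma}=\mathbf U_a$, and $\hat{\mathbf A}=\mathbf O_T\mathbf V_a$. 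Thus $\hat{\bm\Lambda}$ is a function of $\mathbf Q$ alone, while $(\mathbf S_{11},\hat{\mathbf A},\hat{\bm\Gamma})$ is a function of $\bigl((\mathbf S_{11},\mathbf O_T),(\mathbf U_a,\mathbf V_a)\bigr)$.

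Two facts then close the argument: (i) $\mathbf T\independent\mathbf Q$ gives $(\mathbf S_{11},\mathbf O_T)\independent(\mathbf D_a,\mathbf U_a,\mathbf V_a)$; and (ii) the law of $\mathbf Q_A$ is invariant under $\mathbf Q_A\mapsto\mathbf R\mathbf Q_A\mathbf S$ for every $\mathbf R\in O(p_2)$ and $\mathbf S\in O(p_1)$ — inherited from invariance of the Haar law on $V_{p_1}(\mathbb R^{n-1})$ under left multiplication by the block subgroup $O(p_2)\times O(n-1-p_2)$ and under right multiplication by $O(p_1)$ — and this bi-invariance forces the singular values $\mathbf D_a$ to be independent of the singular-vector pair $(\mathbf U_a,\mathbf V_a)$ (the usual SVD-of-a-bi-invariant-matrix statement, obtained by integrating out independent Haar $\mathbf R,\mathbf S$). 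Combining (i) and (ii), the triple $(\mathbf S_{11},\mathbf O_T)$, $\mathbf D_a$, $(\mathbf U_a,\mathbf V_a)$ are mutually independent, so $\hat{\bm\Lambda}=\mathbf D_a$ is independent of $(\mathbf S_{11},\hat{\mathbf A},\hat{\bm\Gamma})$; since for a.e.\ $\mathbf x_2$ the law of $\Phi(\mathbf Z_1)$ given $\mathbf X_2=\mathbf x_2$ is exactly the law just analyzed, this yields $\hat{\bm\Lambda}\independent(\mathbf S_{11},\hat{\mathbf A},\hat{\bm\Gamma})\mid\mathbf X_2$. The delicate points to handle carefully are the reduction making the tuple a \emph{single fixed} function of $\mathbf Z_1$ (choosing $\mathbf V(\mathbf X_2)$ measurably and checking that the orthogonal/SVD ambiguities are innocuous under the no-ties assumption) and the clean invocation of ``bi-invariance $\Rightarrow$ independence of SVD factors'' in terms of Haar measures on Stiefel manifolds; the matrix identities and the QR-independence fact are routine.
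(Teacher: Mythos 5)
Your proof is correct, and it takes a genuinely different route from the paper's. The paper argues through densities: conditional on $\mathbf X_2$ it uses Cochran's and Craig's theorems to show that $\tilde{\mathbf G}$ (with $\tilde{\mathbf G}\tilde{\mathbf G}^\top = n\bS_{12}\bS_{22}^{-1}\bS_{21}$) and $\tilde{\mathbf T} = n\bS_{11}-\tilde{\mathbf G}\tilde{\mathbf G}^\top$ are independent Gaussian and Wishart, writes their joint conditional density, and then performs an explicit change of variables $(\tilde{\mathbf G},\tilde{\mathbf T})\mapsto(\mathbf S_{11},\hat{\mathbf A},\hat{\bm\Lambda},\hat{\bm\Gamma})$, with the factorization of the density resting on the cited separability of the Jacobian of the SVD map; this yields full mutual conditional independence of all four blocks and, since $\mathbf X_2$ drops out of the density, the unconditional statement. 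Your argument shares the same geometric starting point --- the first $p_2$ rows of your $\mathbf Z_1=\mathbf V(\mathbf X_2)^\top\mathbf X_1$ are exactly $\tilde{\mathbf G}^\top/\sqrt{n}$ up to the choice of basis, and the remaining rows encode $\tilde{\mathbf T}$ --- but then replaces the density computation with a distributional one: the reduction to a fixed function $\Phi(\mathbf Z_1)$ of a matrix whose conditional law is $\mathcal{MN}(\mathbf 0,\mathbf I_{n-1},\bSigma_{11})$ gives the ``furthermore'' immediately, and the conditional independence follows from QR-independence ($\mathbf Q\independent\mathbf T$ with $\mathbf Q$ Haar on the Stiefel manifold) together with bi-invariance of the law of $\mathbf Q_A$, which forces its singular values to be independent of its singular vectors. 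The identities $n\mathbf S_{11}=\mathbf T^\top\mathbf T$, $\whitenS_{12}=\mathbf O_T\mathbf Q_A^\top$, and the resulting identification $\hat{\bm\Lambda}=\mathbf D_a$, $\hat{\mathbf A}=\mathbf O_T\mathbf V_a$, $\hat{\bm\Gamma}=\mathbf U_a$ all check out. What your route buys is the complete avoidance of Jacobian bookkeeping (the step the paper outsources to an external computation); what it costs is the need to verify that the SVD sign/ordering ambiguities are handled by a measurable selection --- which you flag, and which is indeed innocuous here since the singular values are a.s.\ distinct and the joint sign of each pair $(\hat a_i,\hat\gamma_i)$ can be pinned by a convention depending only on $(\mathbf O_T,\mathbf U_a,\mathbf V_a)$, never on $\mathbf D_a$. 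The paper's density computation does deliver slightly more (mutual independence of all four components and an explicit form of the conditional density), but none of that extra strength is used downstream, so your argument suffices for Theorem \ref{thm:main}.
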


\begin{proof}
Let us define $\tilde{\mathbf W} := n\bS_{12}\bS_{22}^{-1}\bS_{21} = \tilde{\mathbf G}\tilde{\mathbf G}^\top,\: \tilde{\mathbf G} := \left( \tilde{\mathbf g}_1, \tilde{\mathbf g}_2, \ldots, \tilde{\mathbf g}_{p_2}\right), $ and $\tilde{\mathbf T} := n\mathbf S_{11} - n\bS_{12}\bS_{22}^{-1}\bS_{21} $. In order to obtain the  conditional joint density of $\tilde{\mathbf G} $ and $\tilde{\mathbf T} $ given $\mathbf X_2$, we first prove the following holds under $H_0: {\bm\Lambda} = \mathbf 0$:
\begin{enumerate}[1)]
    \item $\tilde{\mathbf g}_i\big |\mathbf X_2 \overset{i.i.d.}{\sim} N\left( \mathbf 0, \bSigma_{11}\right), i = 1,2,\ldots, p_2 $.
    \item $\tilde{\mathbf T} \big | \mathbf X_2 \sim \mathrm{Wishart}_{p_1} \left( \bSigma_{11}, n - 1 - p_2\right)$.
    \item $\tilde{\mathbf G} \independent \tilde{\mathbf T} \big | \mathbf X_2$.
\end{enumerate}

Let $\bH_n$ denote the centering matrix of size $n$, i.e. $\bH_n = \mathbf I_n - \frac{1}{n} \mathbf 1_n \mathbf 1_n^\top$. By construction, $\bH_n $ is symmetric and idempotent. We have
$$ 
\begin{aligned}
          \tilde{\mathbf W} &= \mathbf{X}_1^\top\bH_n \bX_2\left(\mathbf{X}_2^\top\bH_n \bX_2\right)^{-1}\mathbf{X}_2^\top\bH_n \bX_1 := \mathbf{X}_1^\top \mathbf Q_1 \mathbf{X}_1,\: \bQ_1 := \bH_n \bX_2\left(\mathbf{X}_2^\top\bH_n \bX_2\right)^{-1}\mathbf{X}_2^\top\bH_n,\\
          \tilde{\mathbf T} &= \mathbf{X}_1^\top\bH_n \bX_1 - \mathbf{X}_1^\top\bH_n \bX_2\left(\mathbf{X}_2^\top\bH_n \bX_2\right)^{-1}\mathbf{X}_2^\top\bH_n \bX_1 = \mathbf{X}_1^\top \left(\bH_n - \mathbf Q_1 \right)\mathbf{X}_1.
\end{aligned}
$$
We proceed as follows:
\begin{itemize}
    \item Given $\mathbf X_2$, under $H_0: {\bm\Lambda} = \mathbf 0$, we have that $\mathbf{X}_1 \sim \mathcal{MN}_{n \times p_1}(\mathbf 0, \mathbf I_n,  \bSigma_{11})$. We notice that $\bQ_1$ is a projection matrix onto the column space of $\bH_n\bX_2$, and therefore is idempotent with rank $p_2$. This implies that the eigenvalues of $\bQ_1$ are $1$ with multiplicity $p_2$ and $0$ with multiplicity $n - p_2$. Following the proof of Cochran's Theorem in Theorem 3.4.4 (a) of \citet{kent1979multivariate}, we have that $\mathbf{X}_1^\top \mathbf Q_1 \mathbf{X}_1 = \sum_{i = 1}^{p_2} \tilde{\mathbf g}_i \tilde{\mathbf g}_i ^\top$, with $\tilde{\mathbf g}_i = \mathbf X_1^\top \mathbf v_{(i)}$ where $\mathbf v_{(i)}$ is the $i^{th}$ eigenvector of $\mathbf Q_1$. Using Theorem 3.3.2 and 3.3.3 of \citet{kent1979multivariate}, we have that $\tilde{\mathbf g}_i\big |\mathbf X_2 \overset{i.i.d.}{\sim} N\left( \mathbf 0, \bSigma_{11}\right), i = 1,2,\ldots, p_2 $. This proves 1). 
    
    \item Next, we observe that $\bH_n - \bQ_1$ is an idempotent matrix, with $\mathrm{rank} \left( \bH_n - \bQ_1\right) = \mathrm{trace}(\bH_n) - \mathrm{trace}(\bQ_1 ) = n  - 1 - p_2$. Theorem 3.4.4 (b) of \citet{kent1979multivariate} implies $\tilde{\mathbf T} \big | \mathbf X_2 \sim \mathrm{Wishart}_{p_1} \left( \bSigma_{11}, n - 1 - p_2\right)$.  This proves 2). 
    
    \item We notice that $\left( \bH_n - \bQ_1\right)\bQ_1 = \bH_n \bQ_1 - \bQ_1 = \mathbf 0$, where the first and the last equality follows from the idempotence of $\bQ_1 $ and  $\bH_n$, respectively. Using Craig’s Theorem in Theorem 3.4.5 of \citet{kent1979multivariate}, this implies $\tilde{\mathbf W} \independent \tilde{\mathbf T} \big | \mathbf X_2$. As shown in the proof of 1), given $\mathbf X_2$, $\tilde{\mathbf G}$ is a function of $\tilde{\mathbf{W}}$. Hence, $\tilde{\mathbf G} \independent \tilde{\mathbf T} \big | \mathbf X_2$. This proves 3). 
\end{itemize}

Combining 1)-3) above, we obtain the conditional joint density of $\left(\tilde{\mathbf G}, \tilde{\mathbf T}\right)$, given $\mathbf X_2$:
$$
\begin{aligned}
          f\left( \tilde{\mathbf G} = \left( \tilde{\mathbf g}_1, \tilde{\mathbf g}_2, \ldots, \tilde{\mathbf g}_{p_2} \right), \tilde{\mathbf T} \big| \mathbf X_2\right) &\propto \left|\tilde{\mathbf T}\right|^{\frac{n - p - 1}{2}}\exp\left[ -\frac{1}{2}\left(\mathrm{tr}\left( \bSigma^{-1}_{11}\tilde{\mathbf T}\right) + \sum_{i = 1}^{p_2}\tilde{\mathbf g}_i ^\top\bSigma^{-1}_{11}\tilde{\mathbf g}_i  \right) \right]\\
          &= \left| \tilde{\mathbf T}\right|^{\frac{n - p - 1}{2}}\exp\left[ -\frac{1}{2}\left(\mathrm{tr}\left( \bSigma^{-1}_{11}\tilde{\mathbf T}\right) + \sum_{i = 1}^{p_2}\mathrm{tr}\left(\mathbf \bSigma^{-1}_{11}\tilde{\mathbf g}_i\tilde{\mathbf g}_i ^\top \right) \right) \right]\\
          &= \left| \tilde{\mathbf T}\right|^{\frac{n - p - 1}{2}}\exp\left[ -\frac{1}{2}\left(\mathrm{tr}\left( \bSigma^{-1}_{11}\left( \tilde{\mathbf T}+\tilde{\mathbf G}\tilde{\mathbf G}^\top\right)\right) \right) \right].
\end{aligned}
$$
Next, motivated by \citet{hsu1939distribution}, we consider the change of variable $\left(\tilde{\mathbf G}, \tilde{\mathbf T} \right) \mapsto \left( \mathbf S_{11}, \hat{\mathbf A},  \hat{\bm\Lambda} , \hat{\bm\Gamma} \right)$. Here, $n\bS_{11} = \tilde{\mathbf T} + \tilde{\mathbf G}\tilde{\mathbf G}^\top$ and $\hat{\mathbf A}\hat{\bm\Lambda}\hat{\bm\Gamma} ^\top = \whitenS_{12} =  n^{-\frac{1}{2}}\bS_{11}^{- \frac{1}{2}}\tilde{\mathbf G}$. From (2)-(4) in \citet{ben1999change}, we have 
\begin{equation}
\label{eqn:variable_transform}
    \begin{aligned}
          f\left( \mathbf S_{11}, \hat{\mathbf A},  \hat{\bm\Lambda} , \hat{\bm\Gamma} \big| \mathbf X_2 \right) \propto& f\left(\tilde{\mathbf G} , \tilde{\mathbf T} \big| \mathbf X_2\right) \times\\
          &\left| \mathbf J_{\left(\tilde{\mathbf G}, \tilde{\mathbf T} \right) \mapsto \left( \mathbf S_{11}, \hat{\mathbf A},  \hat{\bm\Lambda} , \hat{\bm\Gamma} \right)} \left(\mathbf J_{\left(\tilde{\mathbf G}, \tilde{\mathbf T} \right) \mapsto \left( \mathbf S_{11}, \hat{\mathbf A},  \hat{\bm\Lambda} , \hat{\bm\Gamma} \right)}\right)^\top\right|^{-\frac{1}{2}},
\end{aligned}
\end{equation}
where $\mathbf J_{\left(\tilde{\mathbf G}, \tilde{\mathbf T} \right) \mapsto \left( \mathbf S_{11}, \hat{\mathbf A},  \hat{\bm\Lambda} , \hat{\bm\Gamma} \right)}$ is the Jacobian matrix of the transformation $\left(\tilde{\mathbf G}, \tilde{\mathbf T} \right) \mapsto \left( \mathbf S_{11}, \hat{\mathbf A},  \hat{\bm\Lambda} , \hat{\bm\Gamma} \right)$. Plugging in $ \tilde{\mathbf G} = n^{\frac{1}{2}}\bS_{11}^{\frac{1}{2}}\hat{\mathbf A}\hat{\bm\Lambda}\hat{\bm\Gamma} ^\top$ and $\tilde{\mathbf T} =  n\bS_{11} - \tilde{\mathbf G}\tilde{\mathbf G}^\top$ into  $f\left(\tilde{\mathbf G} , \tilde{\mathbf T} \big| \mathbf X_2\right)$ shows that it factors across the variables $\bS_{11},\hat{\mathbf A},\hat{\bm\Lambda}$ and $ \hat{\bm\Gamma}$:
\begin{equation}
    \label{eqn:simplification}
    \begin{aligned}
f\left(\tilde{\mathbf G} , \tilde{\mathbf T} \big| \mathbf X_2\right)
          &\propto \left| n\mathbf S_{11} - n\bS_{11}^{ \frac{1}{2}}\hat{\mathbf A}\hat{\bm\Lambda}\hat{\bm\Gamma}^\top \hat{\bm\Gamma}\hat{\bm\Lambda}\hat{\mathbf A}^\top  \bS_{11}^{ \frac{1}{2}}  \right|^{\frac{n - p - 1}{2}}\exp\left[ -\frac{1}{2}\left(\mathrm{tr}\left( n\bSigma_{11}^{-1} \bS_{11}\right) \right) \right]\\
          &\propto \left| \mathbf I_{p_1} - \hat{\mathbf A}\hat{\bm\Lambda}^2\hat{\mathbf A}^\top \right|^{\frac{n - p - 1}{2}}\left| \mathbf S_{11}\right|^{-\frac{n - p - 1 }{2}}\exp\left[ -\frac{1}{2}\left(\mathrm{tr}\left( n\bSigma_{11}^{-1} \bS_{11}\right) \right) \right]\\
           &=\left| \mathbf I_{p_1} - \hat{\bm\Lambda}^2 \right|^{\frac{n - p - 1}{2}}\left| \mathbf S_{11}\right|^{-\frac{n - p - 1 }{2}}\exp\left[ -\frac{1}{2}\left(\mathrm{tr}\left( n\bSigma_{11}^{-1} \bS_{11}\right) \right) \right],
\end{aligned}
\end{equation}
where the last equality follows by observing $\left| \mathbf I - \mathbf Z \mathbf Z^\top\right|  = \left| \mathbf I - \mathbf Z^\top \mathbf Z\right|$, with the square matrix $\mathbf Z = \hat{\mathbf A}\hat{\bm\Lambda}$. Thus, it remains to show that the Jacobian term in \eqref{eqn:variable_transform} also factors. Now,

$$
\begin{aligned}
           \mathbf \mathbf J_{\left(\tilde{\mathbf G}, \tilde{\mathbf T} \right) \mapsto \left( \mathbf S_{11}, \hat{\mathbf A},  \hat{\bm\Lambda} , \hat{\bm\Gamma} \right)} &= \begin{pmatrix}
\frac{\partial \bS_{11}}{\partial \tilde{\mathbf G}} & \frac{\partial \hat{\mathbf V}}{\partial \tilde{\mathbf G}}\\
\frac{\partial \bS_{11}}{\partial \tilde{\mathbf T}} & \frac{\partial \hat{\mathbf V}}{\partial \tilde{\mathbf T}}
\end{pmatrix},
\end{aligned}
$$
where we write $\hat{\mathbf V} = \left(\hat{\mathbf A},  \hat{\bm\Lambda} , \hat{\bm\Gamma}\right)$. Noting that $\frac{\partial \hat{\mathbf V}}{\partial \tilde{\mathbf T}} = \frac{\partial {\whitenS_{12}}}{\partial \tilde{\mathbf T}}\frac{\partial \hat{\mathbf V}}{\partial{\whitenS_{12}}} = \mathbf 0$ and $\frac{\partial \bS_{11}}{\partial \tilde{\mathbf T}} = n^{-1}\mathbf I$,
$$
\begin{aligned}
          \mathbf J_{\left(\tilde{\mathbf G}, \tilde{\mathbf T} \right) \mapsto \left( \mathbf S_{11}, \hat{\mathbf A},  \hat{\bm\Lambda} , \hat{\bm\Gamma} \right)} \left(\mathbf J_{\left(\tilde{\mathbf G}, \tilde{\mathbf T} \right) \mapsto \left( \mathbf S_{11}, \hat{\mathbf A},  \hat{\bm\Lambda} , \hat{\bm\Gamma} \right)}\right)^\top  = \begin{pmatrix}\frac{\partial \bS_{11}}{\partial \tilde{\mathbf G}}\left( \frac{\partial \bS_{11}}{\partial \tilde{\mathbf G}}\right)^\top + \frac{\partial \hat{\mathbf V}}{\partial \tilde{\mathbf G}}\left( \frac{\partial \hat{\mathbf V}}{\partial \tilde{\mathbf G}}\right)^\top   & n^{-1}\frac{\partial \bS_{11}}{\partial \tilde{\mathbf G}}\\ n^{-1}\left( \frac{\partial \bS_{11}}{\partial \tilde{\mathbf G}}\right)^\top & n^{-2}\mathbf I
\end{pmatrix}.
\end{aligned}
$$
The determinant of a block matrix $\begin{pmatrix}\mathbf A & \mathbf B \\
\mathbf C & \mathbf D\end{pmatrix}$, with an invertible $\mathbf D$, is given by $\left| \mathbf A - \mathbf B \mathbf D^{-1} \mathbf C\right|\left| \mathbf D\right|$. Applying this to $\mathbf J_{\left(\tilde{\mathbf G}, \tilde{\mathbf T} \right) \mapsto \left( \mathbf S_{11}, \hat{\mathbf A},  \hat{\bm\Lambda} , \hat{\bm\Gamma} \right)} \left(\mathbf J_{\left(\tilde{\mathbf G}, \tilde{\mathbf T} \right) \mapsto \left( \mathbf S_{11}, \hat{\mathbf A},  \hat{\bm\Lambda} , \hat{\bm\Gamma} \right)}\right)^\top$, we have 
\begin{equation}
\label{eqn:variable_transform_jacobian}
    \begin{aligned}
         & \left| \mathbf J_{\left(\tilde{\mathbf G}, \tilde{\mathbf T} \right) \mapsto \left( \mathbf S_{11}, \hat{\mathbf A},  \hat{\bm\Lambda} , \hat{\bm\Gamma} \right)} \left(\mathbf J_{\left(\tilde{\mathbf G}, \tilde{\mathbf T} \right) \mapsto \left( \mathbf S_{11}, \hat{\mathbf A},  \hat{\bm\Lambda} , \hat{\bm\Gamma} \right)}\right)^\top\right| \\
          &= \left| \frac{\partial \hat{\mathbf V}}{\partial \tilde{\mathbf G}}\left( \frac{\partial \hat{\mathbf V}}{\partial \tilde{\mathbf G}}\right)^\top\right| = \left|  \frac{\partial \hat{\whitenS_{12}}}{\partial \tilde{\mathbf G}}\frac{\partial \hat{\mathbf V}}{\partial\hat{\whitenS_{12}}}\left( \frac{\partial \hat{\whitenS_{12}}}{\partial \tilde{\mathbf G}}\frac{\partial \hat{\mathbf V}}{\partial\hat{\whitenS_{12}}}\right)^\top\right|\\
          &= \left| \mathbf S_{11}^{-\frac{1}{2}}\right|^{p_2}\left|\left(\mathbf J_{\left(\whitenS_{12} \right) \mapsto \left( \hat{\mathbf A},  \hat{\bm\Lambda} , \hat{\bm\Gamma} \right)}\right)\left(\mathbf J_{\left(\whitenS_{12} \right) \mapsto \left( \hat{\mathbf A},  \hat{\bm\Lambda} , \hat{\bm\Gamma} \right)}\right)^\top \right|\left| \mathbf S_{11}^{-\frac{1}{2}}\right|^{p_2}\\
          &=\left| \mathbf S_{11}\right|^{-p_2}\left| \mathbf J_{\left(\whitenS_{12} \right) \mapsto \left( \hat{\mathbf A},  \hat{\bm\Lambda} , \hat{\bm\Gamma} \right)} \left(\mathbf J_{\left(\whitenS_{12} \right) \mapsto \left( \hat{\mathbf A},  \hat{\bm\Lambda} , \hat{\bm\Gamma} \right)}\right)^\top\right|,
\end{aligned}
\end{equation}
where we have used the chain rule for matrices, and properties of the determinant, and the fact that $\frac{\partial \hat{\mathbf V}}{\partial\hat{\whitenS_{12}}} = \mathbf J_{\left(\whitenS_{12} \right) \mapsto \left( \hat{\mathbf A},  \hat{\bm\Lambda} , \hat{\bm\Gamma} \right)}$. In (10) of \citet{rennie2006jacobian}, the Jacobian of the transformation $\left( \whitenS_{12}\right) \mapsto \left( \hat{\mathbf A},  \hat{\bm\Lambda} , \hat{\bm\Gamma} \right)$ was shown to be a separable function of $\left(\hat{\bA},  \hat{\bm\Lambda} , \hat{\bm\Gamma}\right)$ not dependent on $\mathbf X_2$. This implies that
$f\left( \mathbf S_{11}, \hat{\mathbf A},  \hat{\bm\Lambda} , \hat{\bm\Gamma} \big| \mathbf X_2 \right) \propto f\left( \mathbf S_{11} \big| \mathbf X_2 \right)  f\left( \hat{\mathbf A} \big| \mathbf X_2 \right)  f\left(  \hat{\bm\Lambda} \big| \mathbf X_2 \right)  f\left(  \hat{\bm\Gamma} \big| \mathbf X_2 \right) $, proving the conditional independence result. Lastly, observe that $\mathbf X_2$ does not appear anywhere in the conditional density, and thus $\left( \mathbf S_{11}, \hat{\mathbf A},  \hat{\bm\Lambda} , \hat{\bm\Gamma} \right) \independent \mathbf X_2$.
\end{proof}

\noindent From Lemma \ref{lemma:conditional_indep}, it follows that $\hat{\bm\Lambda} \independent \left(\mathbf S_{11}, \hat{\mathbf A},  \hat{\bm\Gamma} \right) $ and $\hat{\bm\Lambda} \independent \mathbf X_2$, which completes the proof of Lemma \ref{lemma:independence_under_null}. 
\end{proof}

\section{Proof of Proposition \ref{prop:reformulate_conditioning_set}}
\label{Appendix:conditioning_set}
First, we assume that the variables are unordered. For $\hat{\mathcal{P}}\in \mathcal{P}\left( \mathbf S \left( \mathbf x\right)\right)$, we want to show that
\begin{equation}\label{eqn:prop_2}
    \hat{\mathcal{P}} \in \mathcal{P}\left( \mathbf S'\left(\mathbf x, \hat{\mathcal{P}}, \left\{\lambda_i\right\}_{i= 1}^{r\left(\hat{\mathcal{P}}\right)}\right)\right) \iff \max_{i' \in \hat{\mathcal P}, j' \in \hat{\mathcal P}^c}\left|\mathbf R_{i',j'}\left(\mathbf S'\left(\mathbf x, \hat{\mathcal{P}}, \left\{\lambda_i\right\}_{i= 1}^{r\left(\hat{\mathcal{P}}\right)}\right)\right)\right|\leqslant c.
\end{equation}

First we prove $\hat{\mathcal{P}} \in \mathcal{P}\left( \mathbf S'\left(\mathbf x, \hat{\mathcal{P}}, \left\{\lambda_i\right\}_{i= 1}^{r\left(\hat{\mathcal{P}}\right)}\right)\right) \implies \max_{i' \in \hat{\mathcal P}, j' \in \hat{\mathcal P}^c}\left|\mathbf R_{i',j'}\left(\mathbf S'\left(\mathbf x, \hat{\mathcal{P}}, \left\{\lambda_i\right\}_{i= 1}^{r\left(\hat{\mathcal{P}}\right)}\right)\right)\right|\leqslant c$. Following Algorithm \ref{algo:Selection_procedure}, in order to compute $\mathcal{P}\left( \mathbf S'\left(\mathbf x, \hat{\mathcal{P}}, \left\{\lambda_i\right\}_{i= 1}^{r\left(\hat{\mathcal{P}}\right)}\right)\right)$, we consider the adjacency matrix $\bD\left( \mathbf S'\left(\mathbf x, \hat{\mathcal{P}}, \left\{\lambda_i\right\}_{i= 1}^{r\left(\hat{\mathcal{P}}\right)}\right)\right)$, where for $i', j' \in \{1,2,\ldots, p \}$,
$$
\bD_{i',j'}\left( \mathbf S'\left(\mathbf x, \hat{\mathcal{P}}, \left\{\lambda_i\right\}_{i= 1}^{r\left(\hat{\mathcal{P}}\right)}\right)\right) := \mathds{1}\left\{ \left|\bR_{i',j'}\left( \mathbf S'\left(\mathbf x, \hat{\mathcal{P}}, \left\{\lambda_i\right\}_{i= 1}^{r\left(\hat{\mathcal{P}}\right)}\right)\right)\right| > c\right\}.
$$
If $\hat{\mathcal{P}}\in \mathcal P\left(\mathbf S'\left(\mathbf x,\hat{\mathcal{P}}, \left\{\lambda_i\right\}_{i= 1}^{r\left(\hat{\mathcal{P}}\right)}\right)\right)$, then in the undirected graph corresponding to \\ $\bD\left( \mathbf S'\left(\mathbf x, \hat{\mathcal{P}}, \left\{\lambda_i\right\}_{i= 1}^{r\left(\hat{\mathcal{P}}\right)}\right)\right)$, the nodes in $\hat{\mathcal{P}}$ are not connected with any of the nodes in $\hat{\mathcal P}^c$. Hence,
\begin{equation}
    \label{eqn:equivalence}
    \begin{aligned}
    &\hat{\mathcal{P}}\in \mathcal P\left(\mathbf S'\left(\mathbf x,\hat{\mathcal{P}}, \left\{\lambda_i\right\}_{i= 1}^{r\left(\hat{\mathcal{P}}\right)}\right)\right)\\
          &\implies\bD_{i', j'}\left( \mathbf S'\left(\mathbf x, \hat{\mathcal{P}}, \left\{\lambda_i\right\}_{i= 1}^{r\left(\hat{\mathcal{P}}\right)}\right)\right) = 0, \text{ for all } i' \in \hat{\mathcal P}; \: j' \in \hat{\mathcal P}^c\\
          &\implies \left|\bR_{i', j'}\left( \mathbf S'\left(\mathbf x, \hat{\mathcal{P}}, \left\{\lambda_i\right\}_{i= 1}^{r\left(\hat{\mathcal{P}}\right)}\right)\right)\right| \leqslant c,  \text{ for all }  i' \in \hat{\mathcal P}; \: j' \in \hat{\mathcal P}^c\\
          &\implies  \max_{i' \in \hat{\mathcal P}, j' \in \hat{\mathcal P}^c}\left|\mathbf R_{i',j'}\left(\mathbf S'\left(\mathbf x, \hat{\mathcal{P}}, \left\{\lambda_i\right\}_{i= 1}^{r\left(\hat{\mathcal{P}}\right)}\right)\right)\right|\leqslant c.
\end{aligned}
\end{equation} 
This completes the proof that $\mathrm{LHS} \implies \mathrm{RHS}$ in \eqref{eqn:prop_2}. 

Next, we prove $\max_{i' \in \hat{\mathcal P}, j' \in \hat{\mathcal P}^c}\left|\mathbf R_{i',j'}\left(\mathbf S'\left(\mathbf x, \hat{\mathcal{P}}, \left\{\lambda_i\right\}_{i= 1}^{r\left(\hat{\mathcal{P}}\right)}\right)\right)\right|\leqslant c \implies $ \\$\hat{\mathcal{P}} \in \mathcal{P}\left( \mathbf S'\left(\mathbf x, \hat{\mathcal{P}}, \left\{\lambda_i\right\}_{i= 1}^{r\left(\hat{\mathcal{P}}\right)}\right)\right)$. Recall that by its definition in \eqref{eqn:new_S}, $\mathbf S'\left(\mathbf x, \hat{\mathcal{P}}, \left\{\lambda_i\right\}_{i= 1}^{r\left(\hat{\mathcal{P}}\right)}\right)$ is a modified version of $\bS\left(\bx\right)$ with perturbed off-diagonal blocks. This gives us
\begin{equation}
\label{eqn:equal_diag_blocks}
\mathbf R_{i',j'}\left(\mathbf S'\left(\mathbf x, \hat{\mathcal{P}}, \left\{\lambda_i\right\}_{i= 1}^{r\left(\hat{\mathcal{P}}\right)}\right)\right) = \mathbf R_{i',j'}\left(\mathbf S\left(\mathbf x\right)\right), \text{ for } i', j' \in \hat{\mathcal{P}} \text{ or } i', j' \in \hat{\mathcal{P}}^c.
\end{equation}
Moreover, by Algorithm \ref{algo:Selection_procedure}, $\hat{\mathcal{P}}\in \mathcal{P}\left( \mathbf S \left( \mathbf x\right)\right)$ implies $\max_{i' \in \hat{\mathcal P}, j' \in \hat{\mathcal P}^c}\left|\mathbf R_{i',j'}\left(\mathbf S\left(\mathbf x\right)\right)\right|\leqslant c$. Combining this with \eqref{eqn:equal_diag_blocks} implies that $\bD\left( \mathbf S'\left(\mathbf x, \hat{\mathcal{P}}, \left\{\lambda_i\right\}_{i= 1}^{r\left(\hat{\mathcal{P}}\right)}\right)\right) = \bD\left( \mathbf S\left(\mathbf x \right)\right)$. Recalling that $\mathcal{P}\left( \mathbf S'\left(\mathbf x,\hat{\mathcal{P}}, \left\{\lambda_i\right\}_{i= 1}^{r\left(\hat{\mathcal{P}}\right)}\right)\right)$ depends only on the adjacency matrix $\mathbf D \left( \mathbf S'\left(\mathbf x,\hat{\mathcal{P}}, \left\{\lambda_i\right\}_{i= 1}^{r\left(\hat{\mathcal{P}}\right)}\right) \right)$ (see Algorithm \ref{algo:Selection_procedure}), it follows that $\hat{\mathcal{P}}\in \mathcal P\left(\mathbf S'\left(\mathbf x,\hat{\mathcal{P}}, \left\{\lambda_i\right\}_{i= 1}^{r\left(\hat{\mathcal{P}}\right)}\right)\right)$. This completes the proof in the case of the unordered variables. 

In the case of ordered variables, to show that $\max_{i' \in \hat{\mathcal P}, j' \in \hat{\mathcal P}^c}\left|\mathbf R_{i',j'}\left(\mathbf S'\left(\mathbf x, \hat{\mathcal{P}}, \left\{\lambda_i\right\}_{i= 1}^{r\left(\hat{\mathcal{P}}\right)}\right)\right)\right|\leqslant c$ implies $  \hat{\mathcal{P}} \in \mathcal{P}\left( \mathbf S'\left(\mathbf x, \hat{\mathcal{P}}, \left\{\lambda_i\right\}_{i= 1}^{r\left(\hat{\mathcal{P}}\right)}\right)\right)$, we also need to show that $\hat{\mathcal P} $ consists only of consecutive variables. This follows immediately from the fact that $\hat{\mathcal{P}}\in \mathcal{P}\left( \mathbf S \left( \mathbf x\right)\right)$. This completes the proof in the case of ordered variables. 

\section{Proof of Proposition \ref{prop:linear1}}
\label{Appendix:prop_linear1}
From \eqref{eqn:new_S}, $\mathbf S'\left(\mathbf x, \hat{\mathcal{P}}, \left\{\lambda_i\right\}_{i= 1}^{r\left(\hat{\mathcal{P}}\right)}\right)$ is linear in $\left(\lambda_1, \lambda_2, \ldots, \lambda_{r\left(\hat{\mathcal{P}}\right)}\right)$. Furthermore, \\$\bH := \mathrm{diag}\left(\mathbf S'\left(\mathbf x, \hat{\mathcal{P}}, \left\{\lambda_i\right\}_{i= 1}^{r\left(\hat{\mathcal{P}}\right)}\right)\right)$ is not a function of $\left(\lambda_1, \lambda_2, \ldots, \lambda_{r\left(\hat{\mathcal{P}}\right)}\right)$, since only the off-diagonal blocks of $\mathbf S'\left(\mathbf x, \hat{\mathcal{P}}, \left\{\lambda_i\right\}_{i= 1}^{r\left(\hat{\mathcal{P}}\right)}\right)$ are functions of $\left(\lambda_1, \lambda_2, \ldots, \lambda_{r\left(\hat{\mathcal{P}}\right)}\right)$. Hence, $\bR\left(\mathbf S'\left(\mathbf x, \hat{\mathcal{P}}, \left\{\lambda_i\right\}_{i= 1}^{r\left(\hat{\mathcal{P}}\right)}\right)\right) = \bH^{-\frac{1}{2}}\mathbf S'\left(\mathbf x, \hat{\mathcal{P}}, \left\{\lambda_i\right\}_{i= 1}^{r\left(\hat{\mathcal{P}}\right)}\right)\bH^{-\frac{1}{2}}$ is linear in $\left(\lambda_1, \lambda_2, \ldots, \lambda_{r\left(\hat{\mathcal{P}}\right)}\right)$. Furthermore, $\mathbf{R}_{\hat{\mathcal{P}}, \hat{\mathcal{P}}^c}\left(\mathbf S'\left(\mathbf x, \hat{\mathcal{P}}, \left\{\lambda_i\right\}_{i= 1}^{r\left(\hat{\mathcal{P}}\right)}\right)\right)$ is a submatrix of $\mathbf{R}\left(\mathbf S'\left(\mathbf x, \hat{\mathcal{P}}, \left\{\lambda_i\right\}_{i= 1}^{r\left(\hat{\mathcal{P}}\right)}\right)\right)$, and hence is linear in $\left(\lambda_1, \lambda_2, \ldots, \lambda_{r\left(\hat{\mathcal{P}}\right)}\right)$.

\section{Proof of Theorem \ref{thm:character}}
\label{Appendix:character}
The conditioning set ${\mathcal{G}}\left(\mathbf x; \hat{\mathcal{P}} \right)$ in \eqref{eqn:conditioning_set} can be rewritten as in \eqref{eqn:polytope_proof}. Thus, 
$$
    \begin{aligned}
     &{\mathcal{G}}\left(\mathbf x; \hat{\mathcal{P}} \right) \\
     &= \Bigg\{ \left(\lambda_1, \lambda_2, \ldots, \lambda_{r\left(\hat{\mathcal{P}}\right)}\right) \in [0, 1]^{r\left(\hat{\mathcal{P}}\right)} :  \lambda_1 \geqslant \lambda_2 \geqslant \ldots \geqslant \lambda_{r\left(\hat{\mathcal{P}}\right)},\\ 
     &\:\:\:\:\:\:\:\:\:\:\:\:\:\:\:\:\:\:\:\:\:\:\:\:\:\:\:\:\:\:\:\:\:\:\:\:\:\:\:\:\:\:\:\:\:\:\:\:\:\:\:\:\begin{pmatrix}
\mathbf{R}_{\hat{\mathcal{P}}, \hat{\mathcal{P}}^c}\left(\mathbf S'\left(\mathbf x, \hat{\mathcal{P}}, \left\{\lambda_i\right\}_{i= 1}^{r\left(\hat{\mathcal{P}}\right)}\right)\right) \\
-\mathbf{R}_{\hat{\mathcal{P}}, \hat{\mathcal{P}}^c}\left(\mathbf S'\left(\mathbf x, \hat{\mathcal{P}}, \left\{\lambda_i\right\}_{i= 1}^{r\left(\hat{\mathcal{P}}\right)}\right)\right)
\end{pmatrix} \leqslant c \mathbf{1}_{2\mathrm{card}\left(\hat{\mathcal P}\right)} \mathbf 1^\top_{\left( p - \mathrm{card}\left(\hat{\mathcal P}\right)\right)}\Bigg\}.
    \end{aligned}
$$
The proof follows by combining the $2r\left(\hat{\mathcal{P}}\right)\left(p - r\left(\hat{\mathcal{P}}\right)\right)$ inequalities resulting from the $\max$ term in \eqref{eqn:polytope_proof} with the $r\left(\hat{\mathcal{P}}\right) + 1$ inequalities associated with upper and lower limits and ordering restrictions on $\left(\lambda_1, \lambda_2, \ldots, \lambda_{r\left(\hat{\mathcal{P}}\right)}\right)$, i.e. $\lambda_1 \leqslant 1, \lambda_{l} - \lambda_{l-1} \leqslant 0, $ for $ l = 2, \ldots, r\left(\hat{\mathcal{P}}\right), $ and $  - \lambda_{r\left(\hat{\mathcal{P}}\right)} \leqslant 0$.

\section{Computation of $p\left(\mathbf x; \hat{\mathcal{P}}\right)$ in \eqref{eqn:probability}}
\label{Appendix:importance_sampling}
Here, we elaborate on the discussion in Section \ref{sec:summary}. The subsections here follow the organization of Section \ref{sec:summary}. The overall procedure can be found in Algorithm \ref{algo:SI}. 

\subsection{Proof of Proposition \ref{prop:univariate_beta}}
\label{Appendix:singleton_beta}
Without loss of generality, we assume $\mathrm{card}\left(\hat{\mathcal{P}}\right) = p - 1$. 
Next, we characterize the conditioning set in \eqref{eqn:conditioning_set} for $\mathrm{card}\left(\hat{\mathcal{P}}\right) = p - 1$. From Theorem \ref{thm:main}, we observe that the conditioning set is a function of a modified version of the sample covariance matrix with a perturbed off-diagonal block. Hence we primarily focus on the off-diagonal block of the covariance matrix. We denote the vectors  $\whitenS_{\hat{\mathcal{P}}, \hat{\mathcal{P}}^c} \left(\bx\right)$ and  $\bS_{\hat{\mathcal{P}}, \hat{\mathcal{P}}^c} \left(\bx\right)$ as $\whitens_{\hat{\mathcal{P}}, \hat{\mathcal{P}}^c} \left(\bx\right)$ and $\bs_{\hat{\mathcal{P}}, \hat{\mathcal{P}}^c} \left(\bx\right)$, respectively. We denote the scalar $\bS_{\hat{\mathcal{P}}^c, \hat{\mathcal{P}}^c} \left(\bx\right)$ as  $s_{\hat{\mathcal{P}}^c, \hat{\mathcal{P}}^c} \left(\bx\right)$.  Using  \eqref{eqn:CCA} we have 
\begin{equation}
\label{eqn:SVD_r1}
    \whitens_{\hat{\mathcal{P}}, \hat{\mathcal{P}}^c} \left(\bx\right) = \frac{ \left(\bS_{{\hat{\mathcal{P}}}, {\hat{\mathcal{P}}}}\left(\bx\right)\right)^{-\frac{1}{2}}\bs_{\hat{\mathcal{P}}, \hat{\mathcal{P}}^c}\left(\bx\right)}{\sqrt{s_{\hat{\mathcal{P}}^c, \hat{\mathcal{P}}^c}\left(\bx\right)}} \overset{SVD}{=} \hat{\mathbf{A}}^{{\mathcal{P}}}\left(\bx\right) \: \hat{\bm\Lambda}^{{\mathcal{P}}}\left(\bx\right)\:  \left({\hat{\mathbf{\Gamma}}^{{{\mathcal{P}}}}}\left(\bx\right)\right)^\top.
\end{equation}

As $\hat{\mathbf{A}}^{{\mathcal{P}}}\left(\bx\right)  \hat{\bm\Lambda}^{{\mathcal{P}}}\left(\bx\right) \left({\hat{\mathbf{\Gamma}}^{{{\mathcal{P}}}}}\left(\bx\right)\right)^\top$ is the SVD decomposition of the vector $\whitens_{\hat{\mathcal{P}}, \hat{\mathcal{P}}^c} \left(\bx\right)$, the following holds true in  \eqref{eqn:SVD_r1}:
\begin{enumerate}
    \item $\hat{\bm\Lambda}^{{\mathcal{P}}}\left(\bx\right)$ is a scalar equal to $ \left\| \whitens_{\hat{\mathcal{P}}, \hat{\mathcal{P}}^c} \left(\bx\right)\right\|_2$. We denote this as $\hat\lambda_1 ^{\hat{\mathcal P}}\left(\mathbf x\right) $. 
    \item ${\hat{\mathbf{\Gamma}}^{{{\mathcal{P}}}}}\left(\bx\right) = 1$.
    \item $\hat{\mathbf{A}}^{{\mathcal{P}}}\left(\bx\right) = \frac{1}{\left\| \whitens_{\hat{\mathcal{P}}, \hat{\mathcal{P}}^c} \left(\bx\right)\right\|_2}\whitens_{\hat{\mathcal{P}}, \hat{\mathcal{P}}^c} \left(\bx\right)$. 
\end{enumerate}
Substituting these in \eqref{eqn:SVD_r1}, we have
\begin{equation}
    \label{eqn:SVD_r2}
    \bs_{\hat{\mathcal{P}}, \hat{\mathcal{P}}^c}\left( \bx \right) = \hat\lambda_1 ^{\hat{\mathcal P}}\left(\mathbf x\right) \sqrt{s_{\hat{\mathcal{P}}^c, \hat{\mathcal{P}}^c}\left(\bx\right)} \left(\bS_{{\hat{\mathcal{P}}}, {\hat{\mathcal{P}}}}\left(\bx\right)\right)^{\frac{1}{2}}\frac{1}{\left\| \whitens_{\hat{\mathcal{P}}, \hat{\mathcal{P}}^c} \left(\bx\right)\right\|_2}\whitens_{\hat{\mathcal{P}}, \hat{\mathcal{P}}^c} \left(\bx\right).
\end{equation}

Next for $1 \leqslant \lambda_1 \leqslant  0$, we focus on the perturbed off-diagonal block of the modified covariance matrix $\mathbf S'\left(\mathbf x, \hat{\mathcal{P}}, \left\{\lambda_1\right\} \right)$ in \eqref{eqn:conditioning_set}. For $\mathrm{card}\left(\hat{\mathcal{P}}^c\right) = 1$, $\mathbf S_{\hat{\mathcal{P}}, \hat{\mathcal{P}}^c}'$ in \eqref{eqn:new_S} can be rewritten as:
\begin{equation}
\label{eqn:rp=1}
    \begin{aligned}
              \mathbf S_{\hat{\mathcal{P}}, \hat{\mathcal{P}}^c}'\left(\mathbf x, \hat{\mathcal{P}}, \left\{\lambda_1\right\}\right) &= \left(\mathbf S_{\hat{\mathcal{P}}, \hat{\mathcal{P}}}\left(\mathbf x\right)\right)^{\frac{1}{2}} \left[\hat{\mathbf{A}}^{\hat{\mathcal{P}}} \left(\mathbf x\right)\lambda_1 \left(\hat{\mathbf{\Gamma}}^{\hat{\mathcal{P}}}\left(\mathbf x\right)\right) ^\top\right] \left(\mathbf S_{\hat{\mathcal{P}}^c, \hat{\mathcal{P}}^c}\left(\mathbf x\right)\right)^{\frac{1}{2}}\\ &=\lambda_1 \sqrt{s_{\hat{\mathcal{P}}^c, \hat{\mathcal{P}}^c}\left(\bx\right)} \left(\bS_{{\hat{\mathcal{P}}}, {\hat{\mathcal{P}}}}\left(\bx\right)\right)^{\frac{1}{2}}\frac{1}{\left\| \whitens_{\hat{\mathcal{P}}, \hat{\mathcal{P}}^c} \left(\bx\right)\right\|_2}\whitens_{\hat{\mathcal{P}}, \hat{\mathcal{P}}^c} \left(\bx\right)\\
              &= \frac{\lambda_1}{\hat\lambda_1 ^{\hat{\mathcal P}}\left(\mathbf x\right)} \bs_{\hat{\mathcal{P}}, \hat{\mathcal{P}}^c}\left( \bx\right),
    \end{aligned}
\end{equation}
where the last equality follows from \eqref{eqn:SVD_r2}. Next, we look at the off-diagonal blocks of the correlation matrices of $\mathbf S'\left(\mathbf x, \hat{\mathcal{P}}, \left\{\lambda_1\right\}\right)$ and $\bS \left( \bx\right)$. We recall that $\mathbf R \left( \mathbf S\right)$ denotes the correlation matrix corresponding to the covariance matrix $\mathbf S$, as defined in Section \ref{subsection:thresholding_procedure}. For $i' \in \hat{\mathcal P}$ and $j' \in \hat{\mathcal P}^c$,
\begin{equation}
    \label{eqn:rp=1_correlation}
    \begin{aligned}
          \bR_{i',j'}\left(\mathbf S'\left(\mathbf x, \hat{\mathcal{P}}, \left\{\lambda_1\right\}\right)\right) &=  \frac{\mathbf S_{i',j'}'\left(\mathbf x, \hat{\mathcal{P}}, \left\{\lambda_1\right\}\right)}{\sqrt{\mathbf S_{i',i'}'\left(\mathbf x, \hat{\mathcal{P}}, \left\{\lambda_1\right\}\right)\mathbf S_{j',j'}'\left(\mathbf x, \hat{\mathcal{P}}, \left\{\lambda_1\right\}\right)}}\\
&= \frac{\lambda_1}{\hat\lambda_1 ^{\hat{\mathcal P}}\left(\mathbf x\right)} \frac{\mathbf S_{i',j'}\left(\mathbf x\right)}{\sqrt{\mathbf S_{i',i'}\left(\mathbf x\right)\mathbf S_{j',j'}\left(\mathbf x\right)}}\\
&= \frac{\lambda_1}{\hat\lambda_1 ^{\hat{\mathcal P}}\left(\mathbf x\right)} \bR_{i',j'}\left(\mathbf S \left( \mathbf x \right)\right),
\end{aligned}
\end{equation}
where the second inequality follows from combining \eqref{eqn:rp=1} and the fact that the diagonal elements of $\mathbf S'\left(\mathbf x, \hat{\mathcal{P}}, \left\{\lambda_1\right\}\right)$ are equal to those of $\mathbf S\left(\mathbf x\right)$. Next we focus on characterizing the conditioning set defined in \eqref{eqn:polytope_proof}.
\begin{equation}
\label{eqn:condition_rp1_simple}
\begin{aligned}
 {\mathcal{G}}\left(\mathbf x; \hat{\mathcal{P}} \right) 
 &= \left\{ \lambda_1 \in [0, 1] :  \:c \geqslant \max_{i' \in \hat{\mathcal P}, j' \in \hat{\mathcal P}^c}\left|\bR_{i',j'}\left(\mathbf S'\left(\mathbf x, \hat{\mathcal{P}}, \left\{\lambda_1\right\}\right)\right)\right|\right\}\\
 &=\left\{ \lambda_1 \in [0, 1] :  \:c \geqslant \frac{\lambda_1}{\hat\lambda_1 ^{\hat{\mathcal P}}\left(\mathbf x\right)} \max_{i' \in \hat{\mathcal P}, j' \in \hat{\mathcal P}^c}\left| \bR_{i',j'}\left(\mathbf S\left( \bx\right)\right)\right|\right\}\\
 &=\left\{ \lambda_1 \in [0, 1] :  \: \lambda_1 \leqslant \frac{c\hat\lambda_1 ^{\hat{\mathcal P}}\left(\mathbf x\right)}{\max_{i' \in \hat{\mathcal P}, j' \in \hat{\mathcal P}^c}\left| \bR_{i',j'}\left(\mathbf S\left( \bx\right)\right)\right|}\right\}\\
 &= \left\{ \lambda_1 \geqslant 0:  \lambda_1^2 \leqslant \min \left\{1, \left( \frac{c\hat\lambda_1 ^{\hat{\mathcal P}}\left(\mathbf x\right)}{\max_{i' \in \hat{\mathcal P}, j' \in \hat{\mathcal P}^c}\left| \bR_{i',j'}\left(\mathbf S\left( \bx\right)\right)\right|}\right)^2\right\}\right\}\\
 &= \left\{ \lambda_1 \geqslant 0:  \lambda_1^2 \leqslant g_u\left(\mathbf S \left( \mathbf x\right), \hat{\mathcal{P}}, c\right)\right\},
\end{aligned}
\end{equation}
where $g_u\left(\mathbf S \left( \mathbf x\right), \hat{\mathcal{P}}, c\right) := \min \left\{1, \left( \frac{c\hat\lambda_1 ^{\hat{\mathcal P}}\left(\mathbf x\right)}{\max_{i' \in \hat{\mathcal P}, j' \in \hat{\mathcal P}^c}\left| \bR_{i',j'}\left(\mathbf S\left( \bx\right)\right)\right|}\right)^2\right\}$.

We can write the denominator of the first equality of \eqref{eqn:probability} as
$$
\begin{aligned}
          \int_{\lambda_1 \in {\mathcal{G}}\left(\mathbf x; \hat{\mathcal{P}} \right)} f\left(\lambda_1\right)d\lambda_1 &= \int_{\left\{ \lambda_1 \geqslant 0\right\} \cap \left\{  \lambda_1^2 \leqslant g_u\right\}} f\left(\lambda_1\right)d\lambda_1\\ &= \int_{ \left\{  \lambda_1^2 \leqslant g_u\right\}} f\left(\lambda_1\right)d\lambda_1\\
          &= \int_{0}^{g_u} f_{\frac{(p - 1)}{2}, \frac{(n - p)}{2}}(t)dt,
\end{aligned}
$$

\noindent where $f_{\alpha, \beta}(\cdot)$ is the probability density function of the $\mathrm{Beta}\left(\alpha, \beta\right)$ distribution. Here, the first equality follows from \eqref{eqn:condition_rp1_simple}, the second equality follows from the fact that $f(\lambda_1) = 0$ if $\lambda_1 < 0$, and the third equality follows from a change of variable and (iii) of Proposition \ref{prop:density_singular_values}. 

Next, we concentrate on the numerator in \eqref{eqn:probability}. For $r\left(\hat{\mathcal P}\right) = 1$, \\$\left\{\prod_{i= 1}^{r\left({\hat{\mathcal P}}\right)}  \left(1 - \lambda_i^2\right) \leqslant  \prod_{i = 1}^{r\left({\hat{\mathcal P}}\right)} \left(1 - \hat\lambda_i^{\hat{\mathcal P}}\left(\mathbf x\right)^2\right)\right\}$ is equivalent to $\lambda_1^2 \geqslant \hat\lambda_1^{\hat{\mathcal P}}\left(\mathbf x\right)^2$. Proceeding as in the case of the denominator, $p\left(\mathbf x; \hat{\mathcal{P}}\right) $ in \eqref{eqn:probability} can be rewritten as

$$
p\left(\mathbf x; \hat{\mathcal{P}}\right) = \frac{\int_{\min\left\{g_u, \hat\lambda_1^{\hat{\mathcal P}}\left(\mathbf x\right)^2\right\}}^{g_u} f_{\frac{(p - 1)}{2}, \frac{(n - p)}{2}}(t)dt} {\int_{0}^{g_u} f_{\frac{(p - 1)}{2}, \frac{(n - p)}{2}}(t)dt},
$$
where $f_{\alpha, \beta}(\cdot)$ is the probability density function of the $\mathrm{Beta}\left(\alpha, \beta\right)$ distribution.

\subsection{Numerical integration to approximate \eqref{eqn:probability} when $r\left(\hat{\mathcal P}\right)$ is small}
\label{Appendix:geometry_integration}
For $r\left(\hat{\mathcal P}\right) > 1$, we opt for numerical integration. Given $\mathbf L$ and $\mathbf g$ (defined in Theorem \ref{thm:character}), we evaluate $p\left(\mathbf x; \hat{\mathcal{P}}\right)$ in \eqref{eqn:probability} through the following steps:

\begin{enumerate}
    \item We compute the convex hull representation of the convex polytope given by $\left\{\bm\lambda : \mathbf L \bm\lambda \leqslant \mathbf g \right\}$. This is achieved with the function \texttt{scdd}$\left(\cdot\right)$ in the R package \texttt{rcdd} \citep{rcdd}.
    
    \item Next, we find the Delaunay triangulation of the convex hull. We use the function \texttt{delaunayn}$\left(\cdot\right)$ in the R package \texttt{geometry} \citep{geometry} to do this. 
    
    \item Let $\mathcal{T}_1, \mathcal{T}_2, \ldots, \mathcal{T}_N$ be the simplices of the triangulation. We integrate the functions $f\left(\lambda_1, \lambda_2, \ldots, \lambda_{r\left(\hat{\mathcal{P}}\right)}\right)\mathds{1}\left\{\prod_{i= 1}^{r\left({\hat{\mathcal P}}\right)}  \left(1 - \lambda_i^2\right) \leqslant  \prod_{i = 1}^{r\left({\hat{\mathcal P}}\right)} \left(1 - \hat\lambda_i^{\hat{\mathcal P}}\left(\mathbf x\right)^2\right)\right\}$ (corresponding to the numerator in \eqref{eqn:probability}) and $f\left(\lambda_1, \lambda_2, \ldots, \lambda_{r\left(\hat{\mathcal{P}}\right)}\right)$ (corresponding to the denominator in \eqref{eqn:probability}) on the obtained simplices and take their corresponding sums. We use the function \texttt{adaptIntegrateSimplex}$\left(\cdot\right)$ in the R package \texttt{SimplicialCubature} \citep{SimplicialCubature} for this purpose.
    \item Finally we compute $p\left(\mathbf x; \hat{\mathcal{P}}\right)$ as the ratio of these two approximations,

\begin{equation}
\label{eqn:Delaunay_integration}
\begin{aligned}
    &p\left(\mathbf x; \hat{\mathcal{P}}\right) \\
    &= \frac{\sum_{t = 1}^N \int_{\mathcal T_t} f\left(\lambda_1, \ldots, \lambda_{r\left(\hat{\mathcal{P}}\right)}\right)\mathds{1}\left\{\prod_{i= 1}^{r\left({\hat{\mathcal P}}\right)}  \left(1 - \lambda_i^2\right) \leqslant  \prod_{i = 1}^{r\left({\hat{\mathcal P}}\right)} \left(1 - \hat\lambda_i^{\hat{\mathcal P}}\left(\mathbf x\right)^2\right)\right\}d\lambda_1\ldots d\lambda_{r\left(\hat{\mathcal{P}}\right)}}{\sum_{t = 1}^N \int_{\mathcal T_t} f\left(\lambda_1, \ldots, \lambda_{r\left(\hat{\mathcal{P}}\right)}\right)d\lambda_1\ldots d\lambda_{r\left(\hat{\mathcal{P}}\right)}}.
\end{aligned}
\end{equation}
\end{enumerate}

\subsection{Monte Carlo approximation of \eqref{eqn:probability} when $r\left(\hat{\mathcal P}\right)$ is large}
\label{Appendix:algorithm}
We use  $B = 1,000$ for the Monte Carlo approximation in Section \ref{sec:mainrp_MC}, unless $\sum_{b = 1}^{1000} \mathds{1}\left\{\mathbf{L}{\bm\lambda^{\left(b\right)}} \leqslant \mathbf{g}\right\} < 100$, in which case we use a larger value of $B$. 

\begin{algorithm}[H]
	\caption{Computation of $p\left(\mathbf x; \hat{\mathcal{P}}\right) $ in \eqref{eqn:thm_main}}\label{algo:SI}
	 \textbf{Input:} Sample covariance matrix $\bS (\bx)$; a group of variables $\hat{\mathcal{P}} \in \mathcal{P}(\bS(\bx))$, where $\mathcal{P}(\bS(\bx))$ is computed as described in Algorithm \ref{algo:Selection_procedure}, using a threshold $c$; the sample size $n$; the number of Monte Carlo replicates, $B$.
	\begin{algorithmic}[1]
		\Procedure{}{}
		\State Compute $\hat{\bm{\Lambda}}^{\hat{\mathcal P}}(\bx)$ from (\ref{eqn:CCA}) with diagonal elements $\{ \hat\lambda_i^{\hat{\mathcal P}} (\bx)\}_{i = 1}^{r(\hat{\mathcal P})}$;
		\If{$r \left( \hat{\mathcal{P}}\right) = 1$}
		\State Compute $g_u$ as defined in Proposition \ref{prop:univariate_beta};
		\State Compute $p(\bx; \hat{\mathcal{P}})$ from \eqref{eqn:univariate_beta_proposition};
		\EndIf
		\If{$ 1 < r \left( \hat{\mathcal{P}}\right) \leqslant 5$}
		\State Compute $\mathbf{L}(\mathbf S (\mathbf x), \hat{\mathcal{P}})$ and $\mathbf{g}(\hat{\mathcal{P}}, c)$ as defined in Theorem \ref{thm:character};
		\State Compute the convex hull representation of the polytope $\left\{\bm\lambda : \mathbf L \bm\lambda \leqslant \mathbf g \right\}$ ;
		\State Compute Delaunay triangulation of the convex hull;
		\State Identify the simplices of the triangulation,  $\mathcal T_1, \mathcal T_2, \ldots, \mathcal T_N$;
		\For{$t = 1, 2, \ldots, N$}
		\State Compute $\int_{T_t} f\left(\lambda_1, \ldots, \lambda_{r\left(\hat{\mathcal{P}}\right)}\right)d\lambda_1\ldots d\lambda_{r\left(\hat{\mathcal{P}}\right)}$ and \newline  \hspace*{4em} $\int_{T_t} f\left(\lambda_1, \ldots, \lambda_{r\left(\hat{\mathcal{P}}\right)}\right)\mathds{1}\left\{\prod_{i= 1}^{r\left({\hat{\mathcal P}}\right)}  \left(1 - \lambda_i^2\right) \leqslant  \prod_{i = 1}^{r\left({\hat{\mathcal P}}\right)} \left(1 - \left(\hat\lambda_i^{\hat{\mathcal P}}\left(\mathbf x\right)\right)^2\right)\right\}d\lambda_1\ldots d\lambda_{r\left(\hat{\mathcal{P}}\right)}$ \newline  \hspace*{4em} with $f\left( \cdot\right)$ is given by \eqref{eqn:density_singular_values} ;
		\EndFor
		\State Compute $p(\bx; \hat{\mathcal{P}})$ from \eqref{eqn:Delaunay_integration};
		\EndIf
		\If{$ r \left( \hat{\mathcal{P}}\right) > 5$}
		\State Compute $\mathbf{L}(\mathbf S (\mathbf x), \hat{\mathcal{P}})$ and $\mathbf{g}(\hat{\mathcal{P}}, c)$ as defined in Theorem \ref{thm:character};
		\For{$b = 1, 2, \ldots, B$}
		\State Simulate $\bm\lambda^{(b)}$ from (ii) of Proposition \ref{prop:density_singular_values};
		\EndFor
		\State Approximate $p(\bx; \hat{\mathcal{P}})$ with $\hat p_\mathrm{MC}(\bx; \hat{\mathcal{P}})$ from (\ref{eqn:Naive_MC}).
		\EndIf
		\EndProcedure
	\end{algorithmic}
\end{algorithm}

Furthermore, due to numerical instability, the numerical integration approach described in Section \ref{sec:mainrp_small} on rare occasions may produce estimates of $p\left(\mathbf x; \hat{\mathcal{P}}\right)$ that are outside of $[0, 1]$. In those cases, we use the Monte Carlo approach from Section \ref{sec:mainrp_MC}.

\section{Simulation setup}
\label{Appendix:Simulation}
\subsection{Generation of $\bSigma$}
We generate $\bSigma$ as $\bSigma = \bSigma_0 + w\mathbf 1 \mathbf 1^\top$, where $\bSigma_0$ is a $p \times p $ positive definite matrix, randomly generated  using the function \texttt{genPositiveDefMat}$\left(\cdot\right)$ with default options in the \texttt{R} package \texttt{clusterGeneration} (\cite{clusterGenerationR}), and $w$ is a $\mathrm{Unif}[0, 1]$ random variable. This code first generates uniformly distributed eigenvalues on $[1, 10]$, and then uses columns of a randomly generated orthogonal matrix as eigenvectors to construct the covariance matrix $\bSigma_0$. Since this produces $\bSigma_0$ with small off-diagonal elements (i.e. low signal strength), we increase the signal strength by adding $w\mathbf 1 \mathbf 1^\top$ to $\bSigma_0$. 
\subsection{Choice of threshold $c$}
For each $\bSigma$, the threshold $c$ is adaptively chosen based on $\bSigma$. First, let $\tilde c$ be the threshold that produces exactly two connected components when we apply the thresholding procedure described in Section  \ref{subsection:thresholding_procedure} on the population correlation matrix $\mathrm{diag}\left(\bSigma\right)^{- \frac{1}{2}} \bSigma \mathrm{diag}\left(\bSigma\right)^{- \frac{1}{2}}$. To account for the deviation of the sample correlation matrix from the population correlation matrix, we set the threshold $c\left(\bSigma\right) := \max\left\{\tilde c + \sqrt{\log\left(p\right)/n}, 1-10^{-8}\right\}$. 

\end{document}